\documentclass[letterpaper,11pt]{article}

\usepackage{pdfpages} 
\usepackage[margin=1in]{geometry}
\usepackage{amsfonts}
\usepackage{amsmath}
\usepackage{amssymb}
\usepackage{amsthm}
\usepackage{thmtools}
\usepackage{xfrac}
\usepackage{nicefrac}
\usepackage{bm}
\usepackage{bbm}
\usepackage{mathtools}
\usepackage{mdframed}
\usepackage[inline]{enumitem}
\global\mdfdefinestyle{myframe}{leftmargin=.75in,rightmargin=.75in,linecolor=black,linewidth=1.5pt,innertopmargin=10pt,innerbottommargin=10pt}

\usepackage{zref-savepos}

\newcounter{fillparbox}

\usepackage{tikz}
\usetikzlibrary{calc}

\usepackage{graphicx}
\graphicspath{{graphics/}}

\usepackage{xspace}
\usepackage{mleftright}
\usepackage{forloop}
\definecolor{darkgreen}{rgb}{0,0.5,0}
\usepackage{hyperref}
\hypersetup{
    unicode=false,            
    colorlinks=true,          
    linkcolor=blue!70!black,  
    citecolor=darkgreen,      
    filecolor=magenta,        
    urlcolor=blue!70!black    
}

\usepackage{algorithm}
\usepackage[noend]{algpseudocode}

\usepackage{fix-cm}

\newfloat{protocol}{htbp}{loa}
\floatname{protocol}{Protocol}
\makeatletter\newcommand\l@protocol{\@dottedtocline{1}{1.5em}{2.3em}}\makeatother


\usepackage[immediate]{silence}
\WarningFilter[temp]{latex}{Command}
\usepackage{sectsty}
\DeactivateWarningFilters[temp]

\allsectionsfont{\boldmath}

\usepackage{anyfontsize}
 
\newtheorem{theorem}{Theorem}[section]	
 \newtheorem{lemma}[theorem]{Lemma}
 \newtheorem{definition}[theorem]{Definition}
 
 \newtheorem{corollary}[theorem]{Corollary}

 \newtheorem{observation}[theorem]{Observation}

 \newtheorem{proposition}[theorem]{Proposition}

\usepackage[capitalize, nameinlink, nosort]{cleveref}
\crefname{theorem}{Theorem}{Theorems}
\Crefname{lemma}{Lemma}{Lemmas}
\Crefname{claim}{Claim}{Claims}
\Crefname{fact}{Fact}{Facts}
\Crefname{remark}{Remark}{Remarks}
\Crefname{observation}{Observation}{Observations}
\Crefname{line}{Line}{Lines}
\Crefname{protocol}{Protocol}{Protocols}
\crefalias{AlgoLine}{line}
\crefname{property}{Property}{Properties}

\makeatletter
\newcounter{algorithmicH}
\let\oldalgorithmic\algorithmic
\renewcommand{\algorithmic}{%
  \stepcounter{algorithmicH}
  \oldalgorithmic}
\renewcommand{\theHALG@line}{ALG@line.\thealgorithmicH.\arabic{ALG@line}}
\makeatother

\usepackage{todonotes}

\newcommand{\e}{\varepsilon}
\newcommand{\eps}{\e}

\newcommand{\defcal}[1]{\expandafter\newcommand\csname c#1\endcsname{{\mathcal{#1}}}}
\newcommand{\defbb}[1]{\expandafter\newcommand\csname b#1\endcsname{{\mathbb{#1}}}}
\newcounter{calBbCounter}
\forLoop{1}{26}{calBbCounter}{
    \edef\letter{\Alph{calBbCounter}}
    \expandafter\defcal\letter
		\expandafter\defbb\letter
}

\newcommand{\SMkM}{{\texorpdfstring{\texttt{SMkM}}{SMkM}}\xspace}
\newcommand{\FESMkM}{\textnormal{{\textsc{First Element} {\SMkM}}}}
\newcommand{\nnR}{{\bR_{\geq 0}}}
\DeclareMathOperator{\rank}{rank}


\newcommand{\FEA}{{\texttt{FEAlg}}}
\algnewcommand\myand{\textbf{and} }
\algnewcommand\myor{\textbf{or} }
\newcommand{\chainPn}{CHAIN\textsubscript{$p$}$(n)$\xspace}
\newcommand{\chainPm}{CHAIN\textsubscript{$p$}$(m)$\xspace}
\newcommand{\chainPpn}{CHAIN\textsubscript{$p+1$}$(n)$\xspace}

\newcommand{\chain}[2]{{\ensuremath{\text{CHAIN}_{#1}(#2)}}}
\newcommand{\ALG}{{\text{ALG}}\xspace}
\newcommand{\PRT}{{\text{PRT}}\xspace}

\newcommand{\MSMBM}{{\texttt{MSM\-Bipartite\-Matching}}}
\DeclareMathOperator{\poly}{poly}

{\makeatletter
 \gdef\xxxmark{%
   \expandafter\ifx\csname @mpargs\endcsname\relax 
     \expandafter\ifx\csname @captype\endcsname\relax 
       \marginpar{xxx}
     \else
       xxx 
     \fi
   \else
     xxx 
   \fi}
 \gdef\xxx{\@ifnextchar[\xxx@lab\xxx@nolab}
 \long\gdef\xxx@lab[#1]#2{{\bf [\xxxmark #2 ---{\sc #1}]}}
 \long\gdef\xxx@nolab#1{{\bf [\xxxmark #1]}}
 \long\gdef\xxx@lab[#1]#2{}\long\gdef\xxx@nolab#1{}%
}

\title{Submodular Maximization Subject\\to Matroid Intersection on the Fly}
\author{Moran Feldman\thanks{Email: \href{mailto:moranfe@cs.haifa.ac.il}{moranfe@cs.haifa.ac.il}. Research supported in part by the Israel Science Foundation (ISF) grants no. 1357/16 and 459/20.}\\University of Haifa \and Ashkan Norouzi-Fard\thanks{Email: \href{mailto:ashkannorouzi@google.com}{ashkannorouzi@google.com}.}\\Google Research \and Ola Svensson\thanks{Email: \href{mailto:ola.svensson@epfl.ch}{ola.svensson@epfl.ch}. Research supported by the Swiss National Science Foundation project 200021-184656 ``Randomness in Problem Instances and Randomized Algorithms.''} \\EPFL  \and Rico Zenklusen\thanks{Email: \href{mailto:ricoz@ethz.ch}{ricoz@ethz.ch}. Research supported in part by Swiss National Science Foundation grant 200021\_184622. This project has received funding from the European Research Council (ERC) under the European Union's Horizon 2020 research and innovation programme (grant agreement No 817750).} \\ ETH Zurich}
\date{}

\begin{document}

\maketitle              

\begin{abstract}
Despite a surge of interest in submodular maximization in the data stream model, there remain significant gaps in our knowledge about what can be achieved in this setting, especially when dealing with multiple constraints.
In this work, we nearly close several basic gaps in submodular maximization subject to $k$ matroid constraints in the data stream model.
We present a new hardness result showing that super polynomial memory in $k$ is needed to obtain an $o(\sfrac{k}{\log k})$-approximation.
This implies near optimality of prior algorithms.
For the same setting, we show that one can nevertheless obtain a constant-factor approximation by maintaining a set of elements whose size is independent of the stream size.
Finally, for bipartite matching constraints, a well-known special case of matroid intersection, we present a new technique to obtain hardness bounds that are significantly stronger than those obtained with prior approaches. 
Prior results left it open whether a $2$-approximation may exist in this setting, and only a complexity-theoretic hardness of $1.91$ was known.
We prove an unconditional hardness of $2.69$.

\medskip\textbf{Keywords:} Submodular Maximization, Matroid Intersection, Streaming Algorithms
\end{abstract}

\newtoggle{InIntroduction} \toggletrue{InIntroduction}
\section{Introduction} \label{sec:introduction}

A set function $f\colon 2^\cN \rightarrow \mathbb{R}$ over a ground set $\cN$ is \emph{submodular}  if 
 \begin{align*}
   \label{eq:submodular}
   f(u\mid A) \geq f(u\mid B)  \qquad \mbox{for all $A \subseteq B \subseteq \cN$ and $u\in \cN \setminus B$}
	\enspace,
 \end{align*}
 where, for a subset $S\subseteq \cN$ and an element $u\in \cN$, we denote by $f(u\mid S) := f(S \cup \{u\}) - f(S)$ the marginal contribution of $u$ with respect to $S$. 
We say that $f$ is \emph{monotone} if $f(A) \leq f(B)$ for any $A\subseteq B \subseteq \cN$.

 The definition of submodular functions captures the natural property of diminishing returns, and the study of these functions has a rich history in optimization with numerous applications (see, e.g., the book~\cite{schrijver2003combinatorial}). 
Already in 1978, Nemhauser, Wolsey, and Fisher showed that a natural greedy algorithm achieves a tight approximation guarantee of  $\tfrac{e}{e - 1}$ for selecting the most valuable subset $S\subseteq \cN$ of cardinality at most $\rho$ (see~\cite{nemhauser1978analysis} for the algorithm's analysis and~\cite{feige1998threshold,nemhauser1978best} for matching hardness results).  Since then, significant work has been devoted to  extending  their result to more general constraints.

A natural generalization of a cardinality constraint is the class of matroid constraints.
While matroid constraints are much more expressive than cardinality constraints, both constraints often enjoy the same (or similar) algorithmic guarantees.
Indeed, for the problem of maximizing a monotone submodular  function subject to a single matroid constraint, C{\u{a}}linescu, Chekuri, P{\'{a}}l, and Vondr{\'{a}}k~\cite{calinescu2011maximizing} developed the continuous greedy method, and showed that it extends the $\frac{e}{e-1}$-approximation guarantee to this more general setting. Moreover, for the maximization of a monotone submodular  function subject to $k \geq 2$ matroid constraints, there is a $(k + 1)$-approximation guarantee by Fisher, Nemhauser and, Wolsey~\cite{fisher1978analysis}, which was improved to $k + \eps$ by Lee, Sviridenko, and Vondr{\'{a}}k~\cite{lee2010submodular} when the number $k$ of matroid constraints is considered to be a constant.

While these algorithms are efficient in the traditional sense, i.e., they run in polynomial time in the offline ``RAM'' model, recent applications in data science and machine learning~\cite{SubmodularWWW} with very large-scale problem instances have motivated the need for very space-efficient algorithms. In particular, it is interesting to study algorithms whose memory footprint is \emph{independent of the ground set size}.\footnote{Technically, a logarithmic dependence on the ground set size is unavoidable because, at the very least, the algorithm has to store the indices of the elements in its solution. However, we wish to have a space complexity whose dependence on the ground set size is limited to this unavoidable logarithmic dependence.} The task of designing such algorithms for (monotone) submodular function maximization has become a very active research area, especially in the context of the popular data stream computational model. Recent progress has resulted in a tight understanding of data stream algorithms\footnote{Data stream algorithms are sometimes called ``streaming algorithms''; however, in this paper we reserve the term ``streaming algorithms'' to data streaming algorithms whose space complexity is poly-logarithmic in the natural parameters of the problem.} for maximizing monotone submodular functions with a single cardinality constraint: one can obtain a $2$-approximation for this problem using a simple ``threshold''-based algorithm that requires only $\tilde{O}(\rho)$ memory~\cite{badanidiyuru2014streaming,conficml0001MZLK19}, where $\rho$ is the maximum number of elements allowed in the solution, and this is essentially optimal unless one is willing to have a space complexity that is linear in the size of the ground set~\cite{feldman_2020_one-way}.
However, our understanding of data stream algorithms for more general constraint families is currently much more limited.
Closing this gap for natural settings of  multiple matroid constraints is the motivation for our work. 
%
%
%
%
%
%
%

Formally, we term the problem that we study \texttt{Submodular Maximization subject to $k$ Mat\-roid Const\-raints} ({\SMkM}). In this problem, we are given $k \geq 2$ matroids $M_1 = (\cN, \cI_1), M_2 = (\cN, \cI_2), \dotsc,\allowbreak M_k = (\cN, \cI_k)$ sharing a common ground set $\cN$, and a non-negative submodular function $f\colon 2^\cN \to \nnR$. Our goal is to find a \emph{common independent set} $S \subseteq \cN$ (i.e., $S$ is independent in all the matroids) that maximizes $f(S)$. In the data stream version of this problem, the elements of the ground set $\cN$ appear one by one on a stream, and the algorithm should make a single pass over this stream and construct its output set $S$. (Some papers allow also algorithms that can do a few sequential passes over the stream; however, we consider the more practical and fundamental single-pass setting.)

The above high-level description of {\SMkM} hides some important technical details regarding the way in which the objective function $f$ and the matroid constraints are accessed. In the literature about matroids, it is customary to assume that the access of an algorithm to a matroid is done via an \emph{independence oracle} that, given a set $S \subseteq \cN$, indicates whether $S$ is independent in the matroid. This notion can be extended to the intersection of $k$ matroids in two natural ways:
\begin{enumerate*}[label=(\roman*)]
\item having a single \emph{common independence oracle}, which indicates whether $S$ is independent in this intersection (i.e., in all the matroids), or
\item having $k$ independence oracles, one per matroid.
\end{enumerate*}
Let us first consider the model with weaker access to the matroids, i.e., the one with a common independence oracle. This model already allows the implementation of a simple algorithm that greedily adds to its solution every element that does not violate feasibility. This natural greedy algorithm is a $k$-approximation for the special case of {\SMkM} in which $f$ is simply the cardinality function (i.e., $f(S) = |S|$)~\cite{jenkyns1978efficacy,korte1978analysis} using a space complexity of $\tilde{O}(\rho)$, where $\rho$ is the \emph{common rank} of the $k$ matroids, i.e., the cardinality of a maximum cardinality common independent set in the $k$ matroids of the \SMkM instance.\footnote{For general {\SMkM}, the state-of-the-art algorithm with a space complexity of $\tilde{O}(\rho)$ obtains a slightly worse approximation ratio of $O(k \log k)$ with a common independence oracle~\cite{haba2020streaming}. Moreover, this algorithm is applicable even to the more general class of $k$-extendible constraints.}
We can show that this simple algorithm is almost best possible when access to the matroid is restricted to calls to a common independence oracle, unless one is willing to have a space complexity that is linear in $n \coloneqq |\cN|$.
\begin{theorem} \label{thm:multiple_matroids_single_oracle_hardness}
A data stream algorithm for {\SMkM}, whose only access to the matroids is via the common independence oracle, and with expected approximation ratio $k - \eps$ (for some $\eps \in [0, k - 1)$), must use $\Omega(\sfrac{\eps n}{k^5 \log k})$ memory. This holds even when the task is to find a maximum size common independent set in $k$ partition matroids, and the common rank of these matroids is $k$.
%
\end{theorem}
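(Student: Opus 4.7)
The plan is a reduction from one-way communication complexity---most naturally from INDEX or a sparse-witness variant of it. First, I would convert the approximation ratio into a distributional guarantee: since $f(S) = |S|$, $|S| \in \{0, 1, \ldots, k\}$, and the expected value of the algorithm's output is at least $k/(k-\eps)$, we obtain $\Pr[|S| \geq 2] \geq (k/(k-\eps) - 1)/(k-1) = \Omega(\eps/k^2)$. Thus, on the hard distribution, the algorithm must return a common independent set of size at least $2$ with probability $\Omega(\eps/k^2)$; the rest of the proof aims to show that any such algorithm needs $\Omega(\eps n/(k^5 \log k))$ memory.

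The hard distribution encodes an INDEX-style instance $(x, i^\star) \in \{0,1\}^N \times [N]$ with $N = \Theta(n)$. The stream begins with $N$ ``candidate'' elements $e_1, \ldots, e_N$ (the prefix Alice can simulate from $x$) and ends with $O(k)$ ``anchor'' elements (the suffix Bob simulates from $i^\star$). The $k$ partition matroids are designed so that (i) every single element is common-independent, (ii) the common rank is exactly $k$, attained by $\{e_{i^\star}, a_1, \ldots, a_{k-1}\}$ when $x_{i^\star}=1$ (and by a fallback combination of anchors and dummy elements otherwise), and (iii) every common independent set of size $\geq 2$ must contain $e_{i^\star}$ together with some anchor, feasible only when $x_{i^\star}=1$. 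The candidates are labeled with random ``colors'' drawn from a domain of size $\Theta(k)$ so that querying the common independence oracle on a stored candidate without also storing the matching anchor reveals nothing about $x_j$.

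Given this construction, Alice simulates the streaming algorithm on the candidate prefix and transmits the $M$-bit memory state to Bob, who completes the simulation on the anchor suffix. If the algorithm returns a common independent set of size $\geq 2$, Bob outputs $x_{i^\star} = 1$; otherwise he guesses uniformly. By the opening bound, this protocol solves INDEX with advantage $\Omega(\eps/k^2)$ over trivial guessing. A one-way communication lower bound for (sparse) INDEX, proved by standard information-theoretic techniques, then yields $M = \Omega(\eps \cdot N / k^{O(1)})$. Substituting $N = \Theta(n)$ and absorbing the $O(\log(nk))$ bits per stored element together with the polynomial-in-$k$ losses in the reduction yields the claimed bound.

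The main obstacle is the matroid construction itself: we must simultaneously enforce that the common rank equals $k$, that every singleton is common-independent (so the trivial $k$-approximation is the barrier being broken), that every size-$2$ common independent set pinpoints one index of $x$, and that the common independence oracle---which is adaptive and queryable on any subset of the stored elements---does not leak $x_j$ unless $e_j$ is actually in memory. Achieving the last property is subtle since the algorithm could probe the oracle on carefully chosen subsets; randomizing the candidate colors so that only a specific color-anchor combination ``unlocks'' the oracle is the natural remedy, but doing so while staying within $k$ \emph{partition} matroids (rather than more general ones) and retaining exact common rank $k$ is where the polynomial-in-$k$ and $\log k$ factors of the bound ultimately originate.
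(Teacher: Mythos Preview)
Your opening probability calculation is correct and matches the paper: a $(k-\eps)$-approximation on an instance with optimum $k$ must output a set of size $\geq 2$ with probability at least $\eps/((k-1)(k-\eps)) = \Omega(\eps/k^2)$. But the two-player INDEX reduction you sketch cannot produce the $k$-versus-$1$ gap you need, and this is a structural obstruction rather than a detail to be patched.

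Suppose, as you propose, that the optimal size-$k$ common independent set is $\{e_{i^\star}, a_1,\ldots,a_{k-1}\}$, with the anchors $a_1,\ldots,a_{k-1}$ supplied by Bob. By heredity, $\{a_1,\ldots,a_{k-1}\}$ is itself a common independent set of size $k-1$, and it is always present in the stream regardless of $x_{i^\star}$. Hence in the $0$-case the in-stream optimum is at least $k-1$, and the gap collapses to $k$ versus $k-1$; an algorithm that simply outputs the anchors is already a $k/(k-1)$-approximation, and your distinguisher ``output $1$ iff $|S|\geq 2$'' fails. No amount of random coloring fixes this: whatever $k-1$ non-hidden elements complete the optimum must themselves form a size-$(k-1)$ common independent set, and if they are all revealed in Bob's phase then the $0$-case optimum is $\geq k-1$. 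Conversely, if $\ell\geq 2$ of the optimal elements lie in Alice's prefix and are indistinguishable there (as they must be, since Alice does not know $i^\star$), then by symmetry \emph{every} $\ell$-subset of prefix elements is common-independent, and again the $0$-case optimum is at least $\ell$.

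The paper resolves this by hiding $k-1$ elements rather than one, which forces a $k$-party reduction. Concretely, the ground set is partitioned as $\cN_1\cup\cdots\cup\cN_k$ with $|\cN_i|=m$, the matroids are $\cI_i=\{S:|S\cap(\cN_{\leq i}\setminus\{u_1,\ldots,u_{i-1}\})|\leq 1\}$, and the key observation is that, restricted to $\cN_{\leq i}$, all elements of $\cN_i$ are equivalent under the common oracle. The reduction is from $\mathrm{CHAIN}_k(m)$: player $i$ streams $\{e^i_j:x^i_j=1\}\subseteq\cN_i$ and then player $i+1$, knowing $t^{i+1}$, fixes $u_i=e^i_{t^{i+1}}$. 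In the $0$-case none of $u_1,\ldots,u_{k-1}$ appear, so every in-stream common independent set has size $\leq 1$; in the $1$-case all of them appear and the optimum is $k$. Boosting with $\lceil 2k^2/\eps\rceil$ parallel copies and invoking the $\Omega(m/k^2)$ lower bound for $\mathrm{CHAIN}_k(m)$ then gives the stated memory bound. The multi-phase structure is essential precisely because heredity forces $k-1$ of the optimal elements to be hidden across $k-1$ separate phases.
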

The proof of \cref{thm:multiple_matroids_single_oracle_hardness} is based on carefully defining $k$ matroids such that stream prefixes lead to restricted matroids with many indistinguishable elements. This allows for hiding a large optimal solution. See \cref{sec:inapproximability_oracle} for details.

Given this inapproximability result, we turn our focus to the model in which we have access to a separate independence oracle for every matroid. A $4k$-approximation algorithm with space complexity $\tilde{O}(\rho)$ was given for {\SMkM} in this model by Chakrabarti and Kale~\cite{chakrabarti2015submodular} when the objective function $f$ is guaranteed to be monotone, and $O(k)$-approximation algorithms with similar space complexities were later obtained for the general case by Chekuri et al.~\cite{chekuri2015streaming} and Feldman et al.~\cite{feldman2018do}. Our first main result shows that improving over the approximation guarantees of these algorithms by more than a logarithmic factor requires super polynomial space in $k$. Specifically, we prove the following theorem.
\begin{restatable}{theorem}{thmHardnessMultipleStreaming}
    Any data stream algorithm for {\SMkM} that finds an $\alpha$-approximate solution with probability at least $2/3$ uses memory at least $\Omega\left(e^{k/(8\alpha)}/{k^2}\right)$ assuming $\alpha \leq k / (32 \ln k) $. This holds even when the task is to find a maximum size common independent set in $k$ partition matroids, and the common rank of these matroids is $O(\alpha) = O(k / \log k)$.
    \label{thm:hardness_multiple_streaming}
\end{restatable}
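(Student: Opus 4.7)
The plan is to establish this streaming lower bound via a reduction from a hard one-way communication complexity problem, likely a multi-party \chainPn{}-style problem (matching the notation set up in the preamble). I would design a distribution over {\SMkM} instances on $k$ partition matroids with a planted common independent set of size $\rho = \Theta(k/\log k)$, engineered so that the YES instance is indistinguishable from a ``no-large-solution'' NO instance unless the streaming algorithm retains $\Omega(e^{k/(8\alpha)}/k^2)$ bits of information about the prefix of the stream.

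The construction I have in mind places into the ground set a large collection of $N = \Theta(e^{k/(8\alpha)}/k^2)$ candidate blocks, each containing roughly $\rho$ elements. In a YES instance, one randomly chosen candidate is planted as a common independent set in all $k$ partition matroids; in a NO instance, no such candidate exists. The $k$ partition matroids are defined by independent random partitions with carefully tuned part sizes and capacities, so that the planted candidate is feasible by design, while any set of size comparable to $\rho/\alpha$ (the minimum size an $\alpha$-approximate solution must attain) fails to be common independent with very high probability. The main technical obstacle will be the probabilistic analysis: I would need to show that for random partitions of the chosen dimensions, the probability that an arbitrary set of size $\Omega(\rho/\alpha)$ is simultaneously independent in all $k$ matroids decays like $e^{-\Omega(k/\alpha)}$. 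A union bound over the $\binom{n}{\rho/\alpha}$ potential solutions then must still leave exponentially many valid planting locations; landing exactly on the target $e^{k/(8\alpha)}/k^2$ requires tuning the part sizes and capacities so that the common rank is exactly $\rho = O(k/\log k)$ and the union-bound slack is preserved.

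To conclude, I would embed the planted construction into a one-way communication protocol. Each of the $N$ potential planting locations encodes one bit of a hard input such as \textsc{Index} (or the first coordinate of a \chainPn{} input): the players feed the portions of the stream corresponding to their inputs sequentially, transmitting the streaming algorithm's memory state between them, and the final player queries the algorithm's output set to recover the planted location. Any $\alpha$-approximate solution is forced (by the properties above) to have large overlap with the unique planted candidate, so the algorithm's output reveals the hidden bit with probability at least $2/3$. Since the underlying communication problem requires $\Omega(N) = \Omega(e^{k/(8\alpha)}/k^2)$ bits of total communication, and any streaming algorithm with memory $s$ yields a protocol of communication $O(s)$ (up to the number of players, which is absorbed into the $k^2$ denominator), the claimed lower bound follows. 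The constant success probability $2/3$ translates to a constant-advantage distributional lower bound in the standard way.
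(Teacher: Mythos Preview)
Your high-level strategy---reduce from a multi-party one-way communication problem, plant a hidden optimal solution, and argue that a low-memory streaming algorithm cannot retain enough information to recover it---is the right shape and matches the paper. But the concrete construction you sketch diverges from the paper's in ways that leave real gaps.

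The paper does \emph{not} use random partition matroids. The $k$ matroids are fixed and extremely simple: the ground set is $\cN=[p]^k$ with $p=\lceil 3\alpha\rceil$, and $M_i$ is the partition matroid in which a set is independent iff no two of its elements agree on the $i$-th coordinate. All the hardness lives in the distribution over which elements of $[p]^k$ appear in the stream. This distribution is generated iteratively: in round $r$ one samples a batch $S_r$ of $m\approx e^{k/(8\alpha)}$ uniformly random elements that are coordinate-disjoint from the previously chosen hidden elements $o^{(1)},\dotsc,o^{(r-1)}$, and then picks $o^{(r)}$ uniformly from $S_r$. By construction $\{o^{(1)},\dotsc,o^{(p-1)}\}$ is a common independent set of size $p-1=\Theta(\alpha)$.

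The key probabilistic lemma is far simpler than the union bound you propose over all $\binom{n}{\rho/\alpha}$ candidate solutions. The paper only shows that any \emph{pair} of non-hidden elements shares some coordinate with probability at least $1-e^{-k/p}$; a union bound over the $\binom{pm}{2}$ pairs then gives that, with probability at least $2/3$, the stream contains \emph{no} common independent set of size $>1$ other than (subsets of) the hidden one. Your union bound over all size-$\rho/\alpha$ subsets would require per-set failure probability exponentially smaller than what any such construction delivers, and you give no indication of how to obtain it; the pairwise-collision trick is precisely what makes the numbers close.

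Finally, the reduction is from $\chain{p}{m}$ with $p=\Theta(\alpha)$ players rather than a single-round \textsc{Index}. Using public randomness, the players precompute a full tree of batches $T_{t^2,\dotsc,t^r}$, one for every possible sequence of indices; player $r$ then streams the subset of $T_{t^2,\dotsc,t^r}$ indexed by the $1$-bits of $x^r$ and forwards the algorithm's state plus the indices $t^2,\dotsc,t^r$. In the $1$-case all hidden elements appear (optimum $p-1$); in the $0$-case none do, and with high probability no two streamed elements are independent, so any output has size $\leq 1$. The $m/(36p^2)$ communication lower bound for $\chain{p}{m}$ is what produces the $1/k^2$ in the memory bound---it is not ``absorbed into the number of players'' as you suggest, but is intrinsic to the chain lower bound itself.
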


The technique to prove \cref{thm:hardness_multiple_streaming} is discussed in \cref{sec:hardness_intersection}. Interestingly, this technique also implies that any (even preemptive) online algorithm for {\SMkM} must also have an approximation ratio of $\Omega(\sfrac{k}{\log k})$. 
We remark that this lower bound is asymptotically the same as the well-known approximation hardness for $k$-dimensional matching~\cite{hazan_2006_complexity}, which is a special case of the intersection of $k$ matroids.
However, note that this hardness does not carry over to our setting as our model has no restriction on computational power but only on memory.

In light of \cref{thm:hardness_multiple_streaming}, it is arguably surprising that one can get essentially a $2$-approximation for {\SMkM} with space complexity independent of $n$.\footnote{For simplicity, the space complexity stated in \cref{th:matroid_intersection_alg_short} assumes that every element of the ground set can be stored in $O(1)$ space. Without this assumption, we get the unavoidable logarithmic dependence of the space complexity on $n$. Similarly, we also make the standard assumption that the value of $f(S)$ can be stored in constant space for every set $S \subseteq \cN$.}

\begin{restatable}{theorem}{thMatroidIntersectionAlgShort} \label{th:matroid_intersection_alg_short}
\iftoggle{InIntroduction}{For every $\eps \in (0, \sfrac{1}{7})$, there exists a $(2 + O(\eps))$-approximation data stream algorithm for {\SMkM} with space complexity $\poly(k^{\rho^2k}, \eps^{-\rho})$.}{For every $\eps \in (0, \sfrac{1}{7})$, there exists a $(2 + O(\eps))$-approximation data stream algorithm for {\SMkM} with space complexity $\tilde{O}((\frac{k^{2\rho k}}{\eps^3})^{\rho + 1} \cdot (\log \rho + \log k + \log \eps^{-1})^{\rho + 2} \rho)$. If the objective function $f$ of {\SMkM} is guaranteed to be monotone, the space complexity of the algorithm can be improved to $\tilde{O}((\frac{k^{\rho k}}{\eps})^{\rho + 1} \cdot (\log \rho + \log k + \log \eps^{-1})^{\rho + 2} \rho)$.}
\end{restatable}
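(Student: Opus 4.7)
The plan is to reduce the streaming problem to a bounded-size offline problem by maintaining, during the pass, a small \emph{core} set $C\subseteq\cN$ that provably contains a common independent set of value at least $\OPT/(2+O(\e))$, and then running offline brute force on $C$. Since every common independent set has size at most $\rho$, the final offline step only needs to enumerate at most $|C|^\rho$ subsets, so the whole memory bound reduces to controlling $|C|$ as a function of $k,\rho,\e^{-1}$. As is standard, I would first guess the value of $\OPT$ up to a $(1+\e)$ factor by running $O(\e^{-1}\log(k\rho/\e))$ copies of the core-maintenance subroutine in parallel, one per exponentially spaced guess $\tau$ of $\OPT/\rho$; at least one guess $\tau^\star$ will be within a $(1+\e)$ factor of the true average value per element.

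For a fixed guess $\tau$ I would maintain $C_\tau$ via an admit/evict rule: admit an incoming element $u$ when it has marginal value at least $\tau$ with respect to some common independent set already inside $C_\tau$; if admitting $u$ would push $|C_\tau|$ past the prescribed budget $k^{O(\rho k)}\e^{-O(\rho)}$, perform a multi-matroid exchange to evict an element while preserving an invariant guaranteeing that every optimal element seen so far still has a feasible ``replacement'' in $C_\tau$. Once the stream ends, I would enumerate every common independent set of size at most $\rho$ inside $\bigcup_\tau C_\tau$, output the one with largest $f$-value, and return this set. Brute force is affordable because $|C_\tau|$ is independent of $n$.

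For the approximation analysis, fix the good guess $\tau^\star$. I would pair each $o\in\OPT\setminus C_{\tau^\star}$ with a replacement $r_o\in C_{\tau^\star}$ using the invariant from the previous step, and then invoke submodularity together with the threshold guarantee $f(r_o)\geq\tau^\star$ to argue that the induced common independent set inside $C_{\tau^\star}$ has value at least $\OPT/(2+O(\e))$: one ``half'' is charged to elements of $\OPT$ already present in $C_{\tau^\star}$, and the other ``half'' to the retained replacements, whose total value is bounded below by $|\OPT\setminus C_{\tau^\star}|\cdot\tau^\star\gtrsim f(\OPT\setminus C_{\tau^\star})/(1+\e)$. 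For monotone $f$ this is immediate, while for non-monotone $f$ an additional randomized selection step (in the spirit of the classical factor-$2$ bound for unconstrained non-monotone submodular maximization) must be invoked, which is precisely what degrades the memory bound to the non-monotone expression stated in the theorem.

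The crux of the proof is the multi-matroid exchange invariant used inside the core-maintenance step: the algorithm must, in a single online pass, guarantee simultaneous replaceability in each of the $k$ matroids, while keeping $|C_\tau|$ only $k^{O(\rho k)}$ rather than exponential in $k\rho$. For a single matroid, one-element exchanges would suffice, and a clean ``strong basis exchange'' argument would be available; for $k$ matroids the needed swaps are in general multi-element and must be certified via an augmenting-path argument inside the intersection of $k$ matroids, adapted so that admit/evict decisions can be made online without knowing which element of $\OPT$ will appear later. Threading this balance -- retaining enough witnesses to cover every potential optimum, while keeping the core small enough to fit in $n$-independent memory -- is the delicate technical step, and is the only place in the proof where the $k^{\rho k}$-type dependence in the final space bound arises.
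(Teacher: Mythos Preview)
Your proposal identifies the right high-level goal (retain a set of size independent of $n$ that provably contains a near-optimal solution), but both the mechanism and the analysis diverge from the paper in ways that leave genuine gaps.

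\textbf{The single-threshold analysis does not give a $2$-approximation.} You guess one value $\tau^\star\approx f(\OPT)/\rho$ and argue that each missing $o\in\OPT\setminus C_{\tau^\star}$ is paired with some $r_o\in C_{\tau^\star}$ with $f(\{r_o\})\geq\tau^\star$, yielding $\sum_o f(\{r_o\})\gtrsim f(\OPT\setminus C_{\tau^\star})$. But elements of $\OPT$ can have wildly different marginals (one element may carry almost all of $f(\OPT)$), so neither direction of your charging works: there is no reason $|\OPT\setminus C|\cdot\tau^\star$ controls $f(\OPT\setminus C)$, and no reason the set $\{r_o\}\cup(\OPT\cap C)$ is a common independent set whose $f$-value is comparable to $\sum_o f(\{r_o\})$. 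The paper avoids this by guessing, separately for each $j$, the marginal $f(o_j\mid S_{j-1})$ of the $j$-th optimal element with respect to the partial solution built so far (this is the index $i(j)$ in their Algorithm~2). The analysis then telescopes: each replacement $u_j$ satisfies $f(u_j\mid S_{j-1})\geq f(o_j\mid S_{j-1})/(1+\eps)$, and summing over $j$ gives $f(S_{|\OPT|})\gtrsim f(\OPT\cup S_{|\OPT|})-f(S_{|\OPT|})$, hence the factor $2$.

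\textbf{The exchange invariant you leave as a black box is the actual content of the result, and the paper's solution is not an admit/evict rule.} The paper's core subroutine (\texttt{FEAlg}) solves a sharply defined subproblem: mark online a set $U$ with $|U|=O(k^{\rho k-1})$ such that for \emph{every} common independent set $O$, if $o_1$ is the first element of $O$ to arrive, then $U$ contains some $u$ arriving weakly before $o_1$ with $(O\setminus\{o_1\})\cup\{u\}$ still common independent. It achieves this by maintaining a branching tree of ``states'', each state being a $k$-tuple $(s_1,\dots,s_k)$ with $s_i\in\cI_i$; an arriving element $u$ is marked iff $s_i+u\in\cI_i$ for all $i$ in some state, and that state then forks into $k$ children. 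A potential argument bounds the total number of marks. There is no eviction; the $k^{\rho k}$ bound comes from the branching factor of this tree, not from augmenting-path exchanges. The main algorithm then runs a fresh instance of \texttt{FEAlg} in each of the $|\OPT|$ phases (on the contracted matroids $M_i/S_{j-1}$, filtered to elements whose marginal matches the current guess $i(j)$), and guesses which marked element is the correct $u_j$; this guessing is implemented by forking threads, which is where the exponent $\rho$ on $k^{\rho k}$ arises. The non-monotone case runs $\lceil\eps^{-1}\rceil$ parallel copies of \texttt{FEAlg} per phase and picks one at random, so that each non-$\OPT$ element enters $S$ with probability at most $\eps$; this, not an unconstrained-maximization subroutine, is what handles non-monotonicity and accounts for the extra $\eps^{-O(\rho)}$ in the space bound.

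In short: you are missing the per-element marginal guessing that makes the $2$-approximation go through, and the specific ``first-element replacement'' subroutine that makes the space bound achievable.
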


For monotone objective functions, \cref{th:matroid_intersection_alg_short} is based on merging ideas that appeared in recent papers by Huang et al.~\cite{huang2020approximability} and Huang and Ward~\cite{huang2020fpt} (see \cref{sec:multiple_matroids} for details). 
However, to obtain the same guarantee for non-monotone functions requires an interesting novel guessing scheme.
Moreover, this theorem cannot be improved by much.
The exponential dependence on $k$ is necessary by \cref{thm:hardness_multiple_streaming}, and the approximation ratio cannot be improved, even for a cardinality constraint, without using a linear in $n$ memory because of the inapproximability result of~\cite{feldman_2020_one-way}. It is open (even for a single partition matroid constraint) whether the exponential dependence on $\rho$ in \cref{th:matroid_intersection_alg_short} is necessary.

Up to this point, our inapproximability results concentrated on the asymptotic approximation ratio obtainable as a function of $k$, which is more relevant for large values of $k$. Small values of $k$ have also been considered extensively in the literature. For example, as aforementioned, the approximation ratio that can be obtained by a data streaming algorithm for a cardinality constraint, which is a special case of {\SMkM} with $k = 1$, was the subject of a long line of research~\cite{alaluf2020optimal,badanidiyuru2014streaming,feldman_2020_one-way,huang2020approximability,conficml0001MZLK19,mcgregor2019better} and is now essentially settled. Maximizing a monotone submodular function subject to a bipartite matching constraint is another important special case of {\SMkM}, this time for $k = 2$.

Since $\rho$ is usually polynomially related to $n$ in bipartite matching constraints, the class of algorithms considered interesting for the last problem is more restricted than for general {\SMkM}. Specifically, people are interested in algorithms that use $\tilde{O}(\rho)$ memory. That is, the algorithm does not use more memory (up to logarithmic factors) than what is required to simply store a solution. Such algorithms are known as semi-streaming algorithms. Recently, Levin and Wajc~\cite{levin2021streaming} described a semi-streaming algorithm for maximizing a monotone submodular function subject to a bipartite matching constraint which improves over the state-of-the-art for general {\SMkM} with a monotone objective function. They also proved, conditioned on some complexity-theoretic assumption, a lower bound of $1.914$ on the approximation ratio that can be obtained by a semi-streaming algorithm for the problem. Our final result improves over this upper bound and is independent of any complexity-theoretic assumption, but does assume that the graph can contain parallel edges (which are distinct elements from the point of view of the submodular objective function); the hardness of~\cite{levin2021streaming} applies even when this is not the case.
\begin{restatable}{theorem}{thmBipartiteMatching} \label{thm:bipartite_matching}
\iftoggle{InIntroduction}{No semi-streaming algorithm can obtain, with probability at least $2/3$, an approximation ratio of $2.692$ for maximizing a non-negative monotone submodular function subject to a bipartite matching constraint.}{Any single-pass data stream algorithm for {\MSMBM}  must use memory of $\Omega(r \log^{\omega(1)} r)$ on $r$-vertex graphs if it  finds a $2.692$-approximate solution with probability at least $2/3$.}
\end{restatable}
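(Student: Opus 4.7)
The plan is to establish an unconditional memory lower bound via a reduction from a multi-party communication game in the spirit of $\chain{p}{r}$, which the paper foreshadows via its notational macros. The target bound $\Omega(r \log^{\omega(1)} r)$ is exactly what rules out semi-streaming algorithms (which use $\widetilde{O}(r)$ memory), and its superpolylogarithmic extra factor strongly suggests a chain depth $p = \log^{\omega(1)} r$ with each layer forcing an extra $\Omega(r)$ bits of memory.

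First, I would build a gap-producing base gadget: a bipartite graph on $2r$ vertices together with a monotone non-negative submodular objective, designed so that YES-instances admit a matching of value $V$ while NO-instances admit no matching of value better than $V/2.692$. A natural candidate is a coverage-type construction in which each edge $e$ carries a set $S_e$ from some universe $U$ and $f(M)=|\bigcup_{e\in M}S_e|$. The critical feature is that parallel edges are allowed: multiple edges between the same endpoint pair can carry different sets $S_e$, which decouples the matching constraint from the submodular coverage and is precisely what lets one push the gap past prior bounds that only exploit structural matching hardness. The sets $S_e$ would be sampled according to a carefully designed distribution whose parameters are optimized so that an obvious ``planted'' matching covers $U$ almost fully in YES-instances while, in NO-instances, the best matching essentially repeats the coverage of already-matched edges. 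The constant $2.692$ should emerge as the solution of an optimization (likely related to $e$ but not equal to it) balancing gadget density and the number of candidate edges one can hide.

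Second, I would embed this base gadget into a chain: partition the stream into $p$ segments, each presenting a ``layer'' of edges that refines the instance. The construction is arranged so that the true planted matching's edges are interleaved across segments, and at every segment only a polylogarithmic fraction of the planted matching has arrived. A YES/NO decision depends on a final segment's consistency with a choice made at layer~$1$. By a standard hybrid/round-elimination argument (as used in the $\chain{p}{r}$ framework), any semi-streaming algorithm must either preserve $\Omega(r)$ bits of entropy about the layer-$1$ choice across each subsequent segment — costing $\Omega(rp)$ memory overall — or fail on some segment with constant probability, in which case no constant success probability over the whole chain is possible.

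The main obstacle is to prove the combinatorial gap lemma for the base gadget with the sharp constant: I expect this to require a probabilistic construction of $\{S_e\}$ plus a delicate charging/rearrangement argument showing that any matching in the NO-distribution has coverage at most $V/2.692$ with high probability, where the analysis likely reduces to bounding the expected size of the union of $\rho$ random subsets under a constrained choice. A secondary difficulty is the reduction's soundness: one must verify that conditioning on the chain structure does not inadvertently reveal the planted matching, which typically requires a symmetrization/permutation step that preserves the submodular value but scrambles all other information. Once the gap lemma and the chain simulation are in place, combining them with Yao's minimax principle and the communication lower bound on $\chain{p}{r}$ yields the stated $\Omega(r \log^{\omega(1)} r)$ memory bound for any algorithm achieving a $2.692$-approximation with probability $\geq 2/3$.
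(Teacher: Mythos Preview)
Your proposal has a genuine gap: you are missing one of the two hardness ingredients that actually produces the constant $2.692$. The paper's proof is \emph{not} a pure $\chain{p}{r}$-style reduction with a cleverly optimized coverage gadget. Instead, it \emph{combines} a $\chain{p+1}{n}$ reduction with Kapralov's streaming lower bound for maximum bipartite matching, used as a black box. The constant $2.692$ is exactly $1+\alpha$ with $\alpha=1.692$ being Kapralov's inapproximability threshold; it does not arise from an internal optimization over random set systems. Your expectation that the constant ``should emerge as the solution of an optimization (likely related to $e$)'' indicates that you have not identified this second ingredient, and without it your approach cannot reach past roughly $2$ (the known cardinality-constraint bound).

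Concretely, the paper fixes $p$ hard $r$-vertex matching instances $G_1,\dots,G_p$ (chosen via Kapralov's theorem so that no low-memory algorithm can output an $\alpha$-approximate matching in any $G_i$), takes $n$ parallel copies of every edge in each $G_i$, and defines a \emph{recursive} submodular function $f_{o_1,\dots,o_p}$---not a coverage function---parameterized by hidden indices $o_i\in[n]$. The recursion is designed so that using only the $o_i$-indexed copies of a maximum matching in each $G_i$ gives value $\approx p$, whereas if the $o_i$-copies are absent \emph{and} the algorithm's output contains at most a $1/\alpha$-fraction of a maximum matching in every $G_i$, the value is at most $1+\tfrac{p}{\alpha+1}$. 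The $\chain{p+1}{n}$ instance controls whether the $o_i$-copies appear in the stream; the matching hardness is invoked separately to argue that a low-memory $\algsub$ cannot accumulate a large matching in any $G_i$, which is the step your plan lacks entirely. Your coverage construction with random $S_e$ and a symmetrization step would at best reproduce the submodular-under-cardinality hardness, and the ``combinatorial gap lemma with the sharp constant'' you anticipate as the main obstacle is not how the $2.692$ is obtained.
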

The last result is obtained by combining known hardness results for semi-streaming algorithms for the maximum cardinality matching problem~\cite{DBLP:conf/soda/Kapralov21} and submodular function maximization subject to a cardinality constraint~\cite{feldman_2020_one-way} in a non-trivial way so that the obtained hardness is stronger than what is known for any one of the two problems individually. Moreover, the reduction is general, and another consequence of it is the following: any semi-streaming algorithm for maximizing a monotone submodular function subject to a bipartite matching constraint that has an approximation guarantee better than $3$ would yield an improved semi-streaming algorithm for the maximum cardinality bipartite matching problem, which is a longstanding notorious open problem. We refer to reader to \cref{sec:main-hardness} for further detail.

\togglefalse{InIntroduction}

\section{Preliminaries}

In this section we give some additional technical details that are necessary for proving the results stated in \cref{sec:introduction}. In \cref{ssc:oracles} we discuss in more detail the oracles used to access the objective function and constraint matroids; and in \cref{ssc:chain} we present a known hard problem from which we often reduce to prove our inapproximability results.

\subsection{More details about the access oracles} \label{ssc:oracles}

As mentioned above, the matroid constraints are accessed via either a common independence oracle or $k$ distinct independence oracles, one per matroid. We also need to specify the method used to access the objective function $f$. In the submodular optimization literature, submodular objective functions such as $f$ are usually accessed via a \emph{value oracle} that, given a set $S \subseteq \cN$, returns $f(S)$. In the context of data stream and online algorithms, it is important that the independence and value oracles do not leak information in a way that contradicts our expectations from such algorithms. Accordingly, our algorithms query the oracles only on sets of elements that are explicitly stored in their memory.

The above information leakage issue often makes proving inapproximability results more complicated because such results have to formalize in some way the types of queries that are allowed (i.e., queries that are not considered ``leaky''). All our inapproximability results apply to the model used by our algorithmic results; namely, when the algorithm is allowed to query the oracles only on sets of elements that are explicitly stored in its memory---see~\cite{haung2021improvedunpublishedversion} for a formal statement of this natural model. However, we strive to weaken this assumption, and prove most of our inapproximability results even for algorithms that enjoy a less limited access to the oracles. For example, the proof of \cref{thm:hardness_multiple_streaming} manages to avoid this issue completely using the following technique. The algorithm is given upfront a ``super ground set'' and fully known objective function and matroids over this super ground set. The real ground set is then chosen as some subset of this super ground set, and only the elements of this real ground set appear in the input stream of the algorithm. Since the challenge that the algorithm has to overcome in this case is to remember which elements belong to the real ground set, the oracles cannot leak important information to the algorithm, and therefore, we allow the algorithm unrestricted access to them.

The situation for \cref{thm:multiple_matroids_single_oracle_hardness} is a bit more involved because of the following observation. If the algorithm is allowed unrestricted access to the common independence oracle, then it can construct $k$ matroids $M_1, M_2, \dotsc, M_k$ that are consistent with this oracle, which makes the distinction between a single common independence oracle and $k$ independence oracles mute. Therefore, some restriction on the access to the common independence oracle must be used. The (arguably) simplest and most natural restriction of this kind is to allow the algorithm to query the common independence oracle only on subsets that do not include any elements that did not appear in the stream so far; and it turns out that this simple restriction suffices for the proof of \cref{thm:multiple_matroids_single_oracle_hardness} to go through.

It remains to consider our last inapproximability result, namely \cref{thm:bipartite_matching}. Here the elements of the ground set are edges, and the algorithm is given each edge in the form of its two endpoints. Therefore, there is no need for independence oracles. (Formally, this situation is equivalent to the case mentioned above in which there is a ``super ground set'' that is given upfront to the algorithm, and only part of this super ground set appears in the stream.) Unfortunately, preventing information leakage via the value oracle is more involved. For simplicity, the proof that we give in \cref{sec:main-hardness} assumes the same model that we use in our algorithmic results. However, our proof can be extended also to algorithms with a more powerful way to access the objective function $f$, such as the $p$-players model described in~\cite{feldman_2020_one-way}.

\subsection{The \texorpdfstring{{\chainPn}}{CHAINp(n)} problem} \label{ssc:chain}

Many of our inapproximability results use reductions to a (hard) problem named \chainPn, introduced by Cormode, Dark, and Konrad~\cite{cormode_2019_independent}, which is closely related to the Pointer Jumping problem (see~\cite{chakrabarti_2007_lower}). 
In this problem, there are $p$ players $P_1, \ldots, P_p$. For every $1 \leq i < p$, player $P_i$ is given a bit string $x^{i} \in \{0,1\}^n$ of length $n$, and, for every $2 \leq i \leq p$, player $P_i$ (also) has as input an index $t^i\in \{1, 2, \dotsc, n\}$. (Note that the convention in this terminology is that the superscript of a string/index indicates the player receiving it.) 
Furthermore, it is promised that either $x^{i}_{t^{i + 1}} = 0$ for all $1 \leq i < p$ or $x^{i}_{t^{i + 1}} = 1$ for all these $i$ values. 
We refer to these cases as the $0$-case and $1$-case, respectively. 
The objective of the players in \chainPn is to decide whether the input instance belongs to the $0$-case or the $1$-case. The first player, based on the input bit string $x^{1}$, sends a message $M^1$ to the second player. Any player $2 \leq i < p$ , based on the  message it receives from the previous player (i.e., $M^{i-1}$), the input bit string $x^{i}$ and index $t^i$, sends message $M^i$ to the next player. 
The last player, based on $M^{p-1}$ and $t^p$, decides if we are in the $0$-case or $1$-case. 
Each player has unbounded computational power and can use any (potentially randomized) algorithm. 
We refer to the collections of the algorithms used by all the players as a protocol. The success probability of a protocol is the probability that its decision is correct, and the communication complexity of a protocol is the size of the maximum message sent (i.e., maximum size of $M^1, \ldots, M^{p-1}$). In~\cite{feldman_2020_one-way}, the following lower bound was shown for the \chainPn problem, which is very similar to the lower bounds previously proved by~\cite{cormode_2019_independent}.

\begin{restatable}{theorem}{thmpindex}[Theorem 3.3 in~\cite{feldman_2020_one-way}] 
    For any positive integers $n$ and $p\geq 2$, any (potentially randomized) protocol for \chainPn with success probability of at least $\sfrac{2}{3}$ must have a communication complexity of at least $\sfrac{n}{36 p^2}$. Furthermore, this holds even when instances are drawn from a known distribution $D(p,n)$. 
    \label{thm:pindex_hardness}
\end{restatable}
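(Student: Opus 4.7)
The plan is to prove the theorem by induction on the number of players $p$, following the template of Cormode, Dark, and Konrad~\cite{cormode_2019_independent} for their closely related chain index problem. The hard distribution $D(p,n)$ is the natural one: draw every string $x^i \in \{0,1\}^n$ and every index $t^i \in [n]$ independently and uniformly at random, flip a fair coin $b\in\{0,1\}$, and condition on $x^i_{t^{i+1}} = b$ for all $1 \leq i < p$. Since only $p-1$ bits are pinned down, the other coordinates remain uniform and the two promise cases are equiprobable. The base case $p=2$ is exactly one-way INDEX on $n$ bits, where the classical Fano/encoding lower bound yields $\Omega(n)$ communication for any randomized protocol with success probability $\geq 2/3$; tracking constants gives the required $n/144 = n/(36 \cdot 2^2)$.

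For the inductive step, given a $c$-bit protocol $\Pi$ for \chainPn with success probability $\geq 2/3$ under $D(p,n)$, I would build a protocol $\Pi'$ for \chain{p-1}{n'}, with $n' = n - O(n/p)$ and communication $O(c)$, that succeeds with probability $\geq 2/3$ under $D(p-1, n')$. The reduction prepends one extra layer to the $(p-1)$-player chain using public randomness: the first player of $\Pi'$ samples candidate values $(x^1, t^2)$ together with a guess $c^{\star} \in \{0,1\}$ for the chain value, conditioned on $x^1_{t^2} = c^{\star}$; it runs Player~1 of $\Pi$ on the sampled $x^1$ to produce $M^1$ and also simulates Player~2 of $\Pi$ using its own input $y^1$ (padded to length $n$) as $x^2$ and the sampled $t^2$ as the second-layer pointer. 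The remaining players of $\Pi'$ simulate the corresponding players of $\Pi$ using their own inputs. Conditioned on $c^{\star}$ matching the unknown chain value $b$ (an event of probability $1/2$), the constructed instance is distributed as $D(p,n)$ with chain value $b$, so $\Pi$ outputs $b$ with probability $\geq 2/3$. Repeating this construction $O(1)$ times with fresh public randomness, coupled with a consistency check appended to the final message, boosts the overall success probability back to $\geq 2/3$ at only a constant communication overhead. The inductive hypothesis then forces $O(c) \geq n'/(36(p-1)^2)$, and telescoping over $p-2$ steps yields $c \geq n/(36p^2)$.

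The main obstacle is twofold: (i) making the embedding distributionally exact so that no success-probability loss accumulates across the $p$ induction steps, and (ii) keeping the per-step shrinkage $n-n'$ to only $O(n/p)$, so that iterating $p$ times costs a total factor of $O(1)$ in $n$ rather than a fatal $2^{\Theta(p)}$ factor. The quadratic-in-$p$ denominator of the induction hypothesis then supplies one factor of $p$, and the per-step amplification of success probability supplies the other, giving the $1/p^2$ scaling. An arguably cleaner route that avoids iterated reductions altogether is a direct information-cost argument: defining $\mathrm{IC}_i := I(x^i_{t^{i+1}}; M^i \mid M^{<i}, x^{<i}, t^{\leq i})$, Fano's inequality applied to Player~$p$'s decision gives $\sum_{i<p} \mathrm{IC}_i = \Omega(1)$, while a round-elimination-style argument tailored to the chain structure bounds each $\mathrm{IC}_i$ by $O(cp/n)$, immediately delivering $c = \Omega(n/p^2)$ with the explicit constant $1/36$.
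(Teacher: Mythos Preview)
The paper does not prove this statement at all: it is imported verbatim as Theorem~3.3 of~\cite{feldman_2020_one-way} and used as a black box throughout (see \cref{ssc:chain}). There is therefore no ``paper's own proof'' to compare your proposal against.

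As for your sketch on its own merits: the overall strategy (round elimination in the style of~\cite{cormode_2019_independent}, or a direct information-cost argument) is the standard and correct template for this type of bound. However, several steps are not yet arguments. In the inductive reduction, you cannot ``condition on $c^{\star}$ matching the unknown chain value $b$'' and then boost by repetition with a ``consistency check'': the protocol does not know $b$, so it cannot detect which repetitions had $c^{\star}=b$, and na\"ive majority over repetitions where half are adversarially wrong does not recover success probability $\geq 2/3$. You also assert but do not justify the per-step shrinkage $n-n'=O(n/p)$; without this, iterating $p-2$ times could lose far more than a constant factor. Your alternative information-cost route is more promising, but the claimed bound $\mathrm{IC}_i = O(cp/n)$ needs an actual derivation---it does not follow from any single standard lemma and is where the real work lies. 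If you want the precise constant $1/36$, you will ultimately need to consult the proof in~\cite{feldman_2020_one-way} (or the closely related argument in~\cite{cormode_2019_independent}), which tracks constants carefully through the round elimination.
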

The distribution $D(p,n)$ referred to by \cref{thm:pindex_hardness} is simply the uniform distribution over all $0$-case and $1$-case instances (see the definition of $D^p$ in Appendix~C of~\cite{feldman_2020_one-way}). 

\section{Inapproximability for Common Independence Oracle Only} \label{sec:inapproximability_oracle}

In this section, we prove \cref{thm:multiple_matroids_single_oracle_hardness}, which also serves as a warm-up for proof techniques to be discussed later.
\cref{thm:multiple_matroids_single_oracle_hardness} shows strong limits on what can be achieved if one only has access to a common independence oracle, i.e., given a set $S \subseteq \cN$, we can determine whether $S \in \bigcap_{i = 1}^k \cI_i$, but we do not have a way to determine whether $S \in \cI_i$ for a particular matroid $M_i=(\cN,\cI_i)$.
Even though, in later sections, we focus on having one oracle for each matroid, the common independence oracle model is sometimes very natural.
As aforementioned, one can still obtain an $O(k \log k)$-approximation for \SMkM using a semi-streaming algorithm even in this restricted model~\cite{haba2020streaming}. 
Our hardness result, \cref{thm:multiple_matroids_single_oracle_hardness}, implies that this approximation ratio for \SMkM is optimal up to a factor of $O(\log k)$.

We highlight that \cref{thm:multiple_matroids_single_oracle_hardness} assumes that the algorithm can query its oracle only about subsets of elements that it has already read from the input stream. This assumption can be somewhat relaxed, but it cannot be completely removed because an algorithm can construct matroids $M_1, M_2, \dotsc, M_k$ that are consistent with a given independence oracle for the intersection of these matroids if it is allowed to query this oracle on every subset of the ground set.

Throughout this section, the submodular function we aim at maximizing is the cardinality function, as stated in \cref{thm:multiple_matroids_single_oracle_hardness}.

\medskip

A key component in the proof of \cref{thm:multiple_matroids_single_oracle_hardness} is the construction of $k$ matroids leading to an \SMkM instance where many elements, when appearing in the stream, are indistinguishable from elements that appeared so far.
This allows us to ``hide'' a large cardinality common independent set consisting of elements that cannot be distinguished at the time they appear in the stream from other elements that are not part of a large common independent set.

A second key ingredient is to link this construction of bad \SMkM instances to the hard \chain{p}{n} problem, by showing how a general \chain{p}{n} problem can be rephrased as a hard \SMkM problem.
This second component is not needed if we only consider algorithms that store elements explicitly, and can only access elements (to be returned or used in a call to the common independence oracle or value oracle) that are stored explicitly in memory.
However, the connection to \chain{p}{n} allows for showing \cref{thm:multiple_matroids_single_oracle_hardness} in its stated form, which is more general and would also allow algorithms to use any way to encode the indices of elements seen so far.

To better highlight the different proof ideas, we first consider the case where the algorithm is required to explicitly store elements, which allows us to readily derive \cref{thm:multiple_matroids_single_oracle_hardness} (even with a slightly stronger memory lower bound) for this case.
This nicely highlights why our bad family of \SMkM instances allows for capturing very hard problems.
We then consider the more general setting assumed by \cref{thm:multiple_matroids_single_oracle_hardness}, where elements need not be stored explicitly in memory.
We start with the discussion of our bad \SMkM instances, which is common to both cases.

\subsection{Construction of \texorpdfstring{$k$}{k} matroids for hard \SMkM instances}\label{sec:badKMatroidsSingleOracle}

Our construction of $k$ matroids for hard {\SMkM} instances has one parameter $m\in \mathbb{Z}_{\geq 1}$ (apart from the number $k$ of matroids) and $k-1$ special elements $u_1,\ldots, u_{k-1}$ to be defined later.
The ground set $\cN$ of the matroids has size $n = m\cdot k$, and we partition $\cN$ into $k$ sets $\cN_1,\ldots, \cN_k$ of size $m$ each.
For simplicity of notation, we use, for $i\in [k]$, the shorthands $\cN_{\leq i} = \cup_{j=1}^i \cN_i$ and $\cN_{\geq i} = \cN \setminus \cN_{\leq i-1}$. 
For each $i\in [k-1]$, let $u_i\in \cN_i$ be an element in $\cN_i$ that we fix. (The elements $u_i$ will play the role of important hidden elements that will be hard to distinguish from other elements when they appear in the stream.)
The matroids $M_i = (\cN, \cI_i)$, for every $i\in [k]$, are partition matroids defined by
\begin{equation*}
\cI_i \coloneqq \mleft\{S\subseteq \cN \colon |S\cap (\cN_{\leq i}\setminus \{u_1,\ldots, u_{i-1}\})| \leq 1\mright\}\enspace.
\end{equation*}
When intersecting these $k$ matroids, the common independent sets 
\begin{equation*}
\cI \coloneqq \bigcap_{i=1}^k \cI_i
\end{equation*}
are all sets $S\subseteq\cN$ that can be described as follows.
There is an index $i\in [k]$ such that $S$ contains no elements of $\cN_{\geq i+1}$, one element of $\cN_i$, and an arbitrary subset of the elements $\{u_1,\ldots, u_{i-1}\}$.
In particular, maximum cardinality common independent sets have size $k$, and they are the sets consisting of $\{u_1,\ldots, u_{k-1}\}$ and one more (arbitrary) element of $\cN_k$.

In our hard {\SMkM} instances, the stream first reveals the elements of $\cN_1$ (in an order to be specified), then the elements of $\cN_2$, and so on.
As mentioned in \cref{thm:multiple_matroids_single_oracle_hardness}, the objective is to return a common independent set of maximum cardinality.
The hardness of these instances follows from the fact that many elements cannot be distinguished from each other when we see only a prefix of the stream.
Formally, elements cannot be distinguished in a set system if they are \emph{equivalent}, which is defined as follows.
\begin{definition}[Equivalent elements]
Let $E$ be a finite set and $\mathcal{F} \subseteq 2^E$. Two elements $e_1,e_2\in E$ are \emph{equivalent} in $\mathcal{F}$ if for every $U\subseteq E\setminus \{e_1,e_2\}$ we have $U\cup \{e_1\} \in \mathcal{F}$ if and only if $U\cup \{e_2\}\in \mathcal{F}$.
\end{definition}

Given this definition, we get that as long as only elements from $\cN_{\leq i}$ are revealed in the stream, one cannot distinguish between the elements in $\cN_i$.
\begin{observation}\label{obs:NiAreEquivalent}
For any $i\in [k]$, all elements of $\cN_i$ are equivalent in $\{S\in \cI\colon S \subseteq \cN_{\leq i} \}$.
\end{observation}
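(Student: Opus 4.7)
The plan is to show that for any two elements $e_1, e_2 \in \cN_i$, the transposition $\pi \colon \cN \to \cN$ that swaps $e_1$ with $e_2$ and fixes every other element preserves membership in $\cI$ for subsets of $\cN_{\leq i}$. Equivalence of $e_1$ and $e_2$ in $\{S \in \cI \colon S \subseteq \cN_{\leq i}\}$ is then immediate by applying $\pi$ to $U \cup \{e_1\}$ for an arbitrary $U \subseteq \cN_{\leq i} \setminus \{e_1, e_2\}$, noting that both $U \cup \{e_1\}$ and $U \cup \{e_2\}$ automatically lie in $\cN_{\leq i}$.

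First, I would reduce the list of matroid constraints that need to be checked on $\cN_{\leq i}$. For $S \subseteq \cN_{\leq i}$ and $j > i$, the constraint $|S \cap (\cN_{\leq j} \setminus \{u_1, \ldots, u_{j-1}\})| \leq 1$ simplifies to $|S \cap (\cN_{\leq i} \setminus \{u_1, \ldots, u_i\})| \leq 1$, because the elements $u_{i+1}, \ldots, u_{j-1}$ lie outside $\cN_{\leq i}$ and hence outside $S$. Since $\cN_{\leq i} \setminus \{u_1, \ldots, u_i\} \subseteq \cN_{\leq i} \setminus \{u_1, \ldots, u_{i-1}\}$, this is implied by the $M_i$-constraint and can be dropped. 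Thus only the constraints coming from $M_1, \ldots, M_i$ remain effective on $\cN_{\leq i}$.

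Next, I would verify that each of $M_1, \ldots, M_i$ is invariant under $\pi$. For $j < i$, the set $\cN_{\leq j} \setminus \{u_1, \ldots, u_{j-1}\}$ is disjoint from $\cN_i$, since $\cN_{\leq j} \cap \cN_i = \emptyset$; hence $\pi$, which acts only on $\{e_1,e_2\} \subseteq \cN_i$, leaves the intersection $S \cap (\cN_{\leq j} \setminus \{u_1, \ldots, u_{j-1}\})$ unchanged setwise, so $M_j$-independence is preserved. For $j = i$, the key point is that both $e_1$ and $e_2$ lie in $\cN_{\leq i} \setminus \{u_1, \ldots, u_{i-1}\}$: indeed each $u_l$ with $l < i$ belongs to $\cN_l \subseteq \cN_{\leq i-1}$, so $\{u_1, \ldots, u_{i-1}\}$ is disjoint from $\cN_i$. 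Swapping $e_1$ and $e_2$ therefore preserves the cardinality $|S \cap (\cN_{\leq i} \setminus \{u_1, \ldots, u_{i-1}\})|$, and hence $M_i$-independence.

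Combining the two steps yields $\pi(S) \in \cI$ iff $S \in \cI$ for every $S \subseteq \cN_{\leq i}$, which gives the claimed equivalence. I do not anticipate any real obstacle: the content of the observation is essentially that the matroid definitions are symmetric across $\cN_i$ as soon as only elements of $\cN_{\leq i}$ are on the table, and the only mild subtlety to watch for is that the special status of $u_i$ in the later matroids $M_{i+1}, \ldots, M_k$ could in principle distinguish $u_i$ from the rest of $\cN_i$—which it does not, because those constraints are strictly weaker than the $M_i$-constraint on $\cN_{\leq i}$ and are therefore inactive.
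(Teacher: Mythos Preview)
Your proof is correct. The paper states this as an observation without a formal proof, treating it as immediate from the construction; your argument via the transposition swapping $e_1,e_2\in\cN_i$, together with the reduction showing that the constraints of $M_{i+1},\ldots,M_k$ restricted to $\cN_{\leq i}$ are already implied by that of $M_i$, is exactly the natural way to make this rigorous.
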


We now formalize why this construction of \SMkM instances leads to hard problems that need high communication complexity.

\subsection{Hardness under element storage model}\label{sec:singleOracleElemStorage}

As mentioned, we first consider the model where algorithms have to store elements explicitly, which we simply call the \emph{element storage model}.
We now show for the element storage model how \SMkM instances based on the construction of \cref{sec:badKMatroidsSingleOracle} allow for readily deriving a lower bound on the \emph{maximum storage size}, i.e., the maximum number of elements to be stored, of a $(k-\varepsilon)$-approximation for \SMkM that only has access to a common independence oracle.
The lower bound we obtain is slightly stronger (by a factor of $k^3$) than the memory lower bound in \cref{thm:multiple_matroids_single_oracle_hardness}, which holds for a more general model.
The following statement formalizes our result for the element storage model.
\begin{theorem} \label{thm:multiple_matroids_single_oracle_hardness_element_storage}
A data stream algorithm for {\SMkM} in the element storage model, whose only access to the matroids is via the common independence oracle, and with expected approximation ratio $k - \eps$ (for some $\eps > 0$), must admit a maximum storage size of $\Omega(\sfrac{\eps n}{k^3})$ many elements. This holds even when the task is to find a maximum size common independent set in $k$ partition matroids.
\end{theorem}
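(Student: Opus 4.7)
The plan is to invoke Yao's minimax principle on a distribution of instances from the construction in \cref{sec:badKMatroidsSingleOracle}, with $m = n/k$, where each $u_i \in \cN_i$ is drawn uniformly and independently for $i \in [k-1]$, and the stream presents $\cN_1, \cN_2, \ldots, \cN_k$ in consecutive blocks (in an arbitrary fixed order within each block). It then suffices to lower bound the maximum storage size of any \emph{deterministic} element-storage algorithm \ALG on this random input by $\Omega(\eps n/k^3)$. Let $T_i$ denote the elements of $\cN_i$ present in the algorithm's memory at the end of phase $i$; clearly $|T_i| \leq s$, where $s$ is the maximum storage size.

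The main technical ingredient is a \emph{masking lemma}: for every $i \in [k]$, the restricted family $\{S \in \cI : S \subseteq \cN_{\leq i}\}$ depends only on $u_1, \ldots, u_{i-1}$ and not on $u_i, \ldots, u_{k-1}$. To see this, note that for any $l > i$ the $\cI_l$-constraint restricted to $S \subseteq \cN_{\leq i}$ reduces to $|S \cap (\cN_{\leq i} \setminus \{u_1, \ldots, u_i\})| \leq 1$ (since $u_{i+1}, \ldots, u_{l-1}$ lie outside $\cN_{\leq i}$), whereas the $\cI_i$-constraint is $|S \cap (\cN_{\leq i} \setminus \{u_1, \ldots, u_{i-1}\})| \leq 1$; the former is implied by the latter. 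Hence inside $\cN_{\leq i}$ only the matroids $\cI_1, \ldots, \cI_i$ are active, each of which is determined by $u_1, \ldots, u_{i-1}$. Because in the element-storage model every oracle call during phase $i$ is on a subset of $\cN_{\leq i}$, the algorithm's memory state at the end of phase $i$, and in particular $T_i$, is a deterministic function of $u_1, \ldots, u_{i-1}$ alone.

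The rest is a clean symmetry argument. Conditioning on $u_1, \ldots, u_{i-1}$ fixes $T_i$ with $|T_i| \leq s$, while $u_i$ is uniform over $\cN_i$ and independent of $T_i$; hence $\Pr[u_i \in T_i] \leq s/m$. Since $u_i$ is revealed to the stream only during phase $i$, the final output $S$ can contain $u_i$ only if $u_i \in T_i$, so $\Pr[u_i \in S] \leq s/m$ for every $i \in [k-1]$. Combining with the structural bound $|S| \leq 1 + |\{i \in [k-1] : u_i \in S\}|$ inherited from the characterization of common independent sets in \cref{sec:badKMatroidsSingleOracle}, we obtain $\mathbb{E}[|S|] \leq 1 + (k-1)s/m$.

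Since $\OPT = k$ on every instance, an expected approximation ratio of at most $k - \eps$ forces $\mathbb{E}[|S|] \geq k/(k-\eps)$ (applying Jensen's inequality for the convex function $x \mapsto 1/x$ if the ratio is defined as $\mathbb{E}[\OPT/\ALG]$, and directly otherwise). Rearranging gives $(k-1)s/m \geq \eps/(k-\eps)$, and substituting $m = n/k$ and using $(k-1)(k-\eps) \leq k^2$ yields $s = \Omega(\eps n / k^3)$ for every $\eps \in (0, k-1)$. I expect the masking lemma to be the main obstacle: although the full matroid intersection depends on all of $u_1, \ldots, u_{k-1}$, the argument must carefully verify that the only information an element-storage algorithm can extract during phase $i$ is fully determined by $u_1, \ldots, u_{i-1}$, so that $T_i$ and $u_i$ are genuinely independent.
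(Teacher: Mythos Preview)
Your proof is correct and follows essentially the same approach as the paper: both exploit that, when only $\cN_{\leq i}$ has been revealed, the common independence oracle cannot distinguish the elements of $\cN_i$ (your masking lemma is the same content as the paper's \cref{obs:NiAreEquivalent}), and both convert this into the bound $\Pr[u_i \in \text{memory after phase } i] \leq s/m$, then finish with the identical calculation $\bE[|S|] \leq 1 + (k-1)s/m$ versus $\OPT = k$. The only cosmetic difference is where the randomness sits---the paper fixes the $u_i$'s and randomizes the arrival order within each block, whereas you fix the order and randomize the $u_i$'s via Yao; these are dual formulations of the same symmetry argument.
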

\begin{proof}
Consider a data stream algorithm for \SMkM in the element storage model that has access only to the common independence oracle and achieves an approximation ratio of $k-\varepsilon$ (for some $\varepsilon > 0$).
Let $\Gamma$ be the maximum storage size of the algorithm.

We consider $k$ matroids $M_1,\ldots, M_k$ as described in \cref{sec:badKMatroidsSingleOracle}, with $u_i$ being chosen to be an arbitrary element in $\cN_i$ for $i\in [k-1]$.
In our \SMkM instance, first the elements of $\cN_1$ reveal in uniformly random order, then the elements in $\cN_2$, and so on.
For $i\in [k-1]$, let $S_i \subseteq \cN_{\leq i}$ be the elements saved by the algorithm after the elements of $\cN_{\leq i}$ have been revealed in the stream.
By \cref{obs:NiAreEquivalent}, at this point of the procedure, the elements of $\cN_i$ cannot be distinguished by a common independence oracle.
Therefore, because the elements $\cN_i$ arrived in uniformly random order, we have
\begin{equation}\label{eq:boundUiInSi}
\Pr[u_i \in S_i] \leq \frac{\Gamma}{m}\quad \forall i \in [k-1]\enspace.
\end{equation}
Let $T\in \cI$ be the set returned by the algorithm.
Clearly, to have $u_i\in T$, the element $u_i$ must be in memory from the moment it appears in the stream until termination of the algorithm.
This is only possible if $u_i\in S_i$.
Thus,
\begin{equation*}
\frac{k}{k-\varepsilon} \leq \bE[|T|]
                        \leq 1 + \bE[|T\cap \{u_1,\ldots, u_{k-1}\}|]
                        \leq 1 + \sum_{i=1}^{k-1} \Pr[u_i \in S_i]
                        \leq 1 + k \cdot \frac{\Gamma}{m}\enspace,
\end{equation*}
where the first inequality follows from our assumption that the algorithm has an expected approximation factor of $k-\varepsilon$ and the observation that $k$ is the size of a maximum cardinality common independent set,
and the last inequality is due to~\eqref{eq:boundUiInSi}.
By reordering terms and using $n=m\cdot k$, we obtain the desired result
\begin{equation*}
\Gamma \geq \frac{m}{k}\cdot \frac{\varepsilon}{k-\varepsilon}
       \geq \frac{\varepsilon m}{k^2}
          = \frac{\varepsilon n}{k^3}\enspace.
\qedhere
\end{equation*}

\end{proof}

\subsection{Generalized hardness}

We now show our more general hardness result, \cref{thm:multiple_matroids_single_oracle_hardness}, which does not assume that elements need to be stored explicitly.
This hardness is based on combining the ideas highlighted in \cref{sec:singleOracleElemStorage} with a reduction from \chain{k}{m}.
(We write \chain{k}{m} instead of \chain{k}{n} because the parameter $m$ used in the chain problem will be different from the size $n$ of the ground set of the \SMkM problem we consider.)
Specifically, we assume the existence of an algorithm named $\ALG$ for {\SMkM} with an expected approximation guarantee of $k - \varepsilon$, and show that this leads to a protocol $\PRT$ for \chain{k}{m} whose communication complexity depends on the space complexity of $\ALG$. This allows us to translate the communication complexity lower bound that is known for protocols for \chain{k}{m} into a space complexity lower bound for $\ALG$.
To reduce from the chain problem, consider an instance of \chain{k}{m}.
For $i\in [m]$, let $x^i\in \{0,1\}^m$ be the bit string revealed to player $P_i$, and, for $i\in \{2,\ldots, k\}$, let $t^i\in [m]$ be the index revealed to player $P_i$.

Before describing $\PRT$ itself, let us first present a simpler protocol for $\chain{k}{m}$, which is given as \cref{alg:chainsubreduction} below. We show that \cref{alg:chainsubreduction} already allows for distinguishing the $0$-case from the $1$-case of $\chain{k}{m}$ with a probability that depends on $\varepsilon$; however this probability is in general strictly below the required success probability of $\sfrac{2}{3}$.
The final protocol $\PRT$ is then obtained by boosting the success probability of \cref{alg:chainsubreduction}, which, as we discuss later, can simply be achieved by running multiple copies of \cref{alg:chainsubreduction} in parallel.

\cref{alg:chainsubreduction} creates, based on the information revealed from the \chain{k}{m} instance, an \SMkM instance on the fly on which \ALG is called.
This \SMkM instance is based on our construction of $k$ matroids described in \cref{sec:badKMatroidsSingleOracle}.
More precisely, we first define $k$ matroids $M_1,\ldots, M_k$ on a ground set $\cN=\cN_1 \cup \ldots \cup \cN_k$ of size $n=k\cdot m$ as explained in \cref{sec:badKMatroidsSingleOracle}.
For each $i\in [k]$, denote the elements of $\cN_i$ by $\cN_i = \{e^i_1,\ldots, e^i_m\}$, where we think of element $e^i_j$ as corresponding to the bit string entry $x^i_j$.
The choice of the special elements $u_1,\ldots, u_{k-1}$ will be set to $u_i= e^i_{t_{i+1}}$ for $i\in [k-1]$.
Moreover, \cref{alg:chainsubreduction} only forwards elements $e^i_j\in \cN$ to \ALG with indices $i\in [k]$ and $j\in [m]$ such that $x^i_j=1$.

\floatstyle{ruled}
\newfloat{protocol}{th}{lop}
\floatname{protocol}{Protocol}
\begin{protocol}
\caption{Reduction from \chain{k}{m} to {\SMkM}} \label{alg:chainsubreduction}
\textbf{Algorithm of Player $i$}
\begin{algorithmic}[1]
    \If{$i = 1$}
			\State Initialize a new instance of $\ALG$.
		\Else
			\State Create an instance of $\ALG$ that starts from the state forwarded by the previous player.
			\State Set $u_{i-1} = e^{i-1}_{t_{i}}$.
		\EndIf
		\If{$i < k$} 
			\For{$j = 1$ \textbf{to} $n$}
				\If{$b^i_j = 1$} Forward to $\ALG$ the element $e^i_j$. \EndIf
			\EndFor
			\State Forward to the next player the state of $\ALG$ and the values of $t_2, t_3, \dotsc, t_i$.\label{algline:forwardTIndices}
		\Else \ (i.e., if $i=k$) 
      \State Forward the elements in $\cN_k$ to \ALG in arbitrary order.
			\State Get the output set $T$ of \ALG. If $|T| \leq 1$ output ``$0$-case'', otherwise output ``$1$-case''.
		\EndIf
\end{algorithmic}
\end{protocol}

It is not difficult to verify that by the time the protocol arrives to player number $i$, this player has enough information to implement the common independence oracle of all the matroids restricted to the elements $\cN_{\leq i}$.
Indeed, this independence oracle only depends on the special elements $u_1,\ldots, u_{i-1}$ but not the special elements $u_i,\ldots, u_{k-1}$, which will be revealed (and set in \cref{alg:chainsubreduction}) at a later point in time.
This observation guarantees that algorithm $\ALG$ can be invoked in the way in which \cref{alg:chainsubreduction} uses it.

Note that the $k$-matroid intersection problem that \ALG receives consists of the $k$ matroids $M_1,\ldots, M_k$ restricted to the elements of
\begin{equation*}
\overline{\cN}\coloneqq \{e_j^i \colon i\in [k-1], j\in [m] \text{ with } x^i_j = 1\}\cup \cN_k \enspace.
\end{equation*}
Also note that the maximum cardinality common independent set within $\overline{\cN}$ is $1$ if we are in the $0$-case and $k$ in the $1$-case.

We begin the analysis of \cref{alg:chainsubreduction} by showing that it always succeed in the $0$-case.
\begin{lemma} \label{lem:chainreduction_zero_case}
In the $0$-case, \cref{alg:chainsubreduction} always produces the correct answer. 
\end{lemma}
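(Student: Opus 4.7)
The plan is to exploit the fact that in the $0$-case the promise gives $x^{i}_{t^{i+1}} = 0$ for every $1 \leq i < k$, which by construction means that the special element $u_i = e^i_{t^{i+1}}$ is \emph{never} forwarded to $\ALG$ (since the inner loop of player $P_i$ only forwards $e^i_j$ when $x^i_j = 1$). Consequently the ground set $\overline{\cN}$ that $\ALG$ operates on is entirely disjoint from $\{u_1,\ldots,u_{k-1}\}$.

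Next I would invoke the structural characterization of the common independent sets of $M_1,\ldots,M_k$ recalled at the end of \cref{sec:badKMatroidsSingleOracle}: every $S \in \cI$ has the form $S = A \cup \{e\}$ where $A \subseteq \{u_1,\ldots,u_{i-1}\}$ and $e$ is a single element of $\cN_i$, for some index $i \in [k]$ (with the convention that when $i=k$, $e$ can be any element of $\cN_k$). In particular, outside the special elements $\{u_1,\ldots,u_{k-1}\}$, no common independent set can contain more than one element.

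Putting the two observations together, the output $T \in \cI$ of $\ALG$ is a subset of $\overline{\cN}$, and $\overline{\cN} \cap \{u_1,\ldots,u_{k-1}\} = \emptyset$ in the $0$-case, so $T$ contains at most one element. Hence $|T| \leq 1$, and by the rule in the last line of \cref{alg:chainsubreduction} the final player outputs ``$0$-case'', which is correct.

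The argument is essentially a direct combinatorial verification, so there is no real obstacle; the only thing to be mindful of is to phrase the characterization of $\cI$ uniformly for all $i \in [k]$ (so that it also covers the case where the single non-$u$ element lies in $\cN_k$), and to note that the conclusion $|T| \leq 1$ is a deterministic consequence of the structure of $\overline{\cN}$ and holds regardless of any randomness used internally by $\ALG$.
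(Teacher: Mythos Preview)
Your proof is correct and follows essentially the same approach as the paper. The paper's own proof is a terse one-liner that invokes the observation stated just before the lemma (that in the $0$-case every common independent set in $\overline{\cN}$ has size at most $1$), whereas you spell out explicitly why that observation holds by combining the $0$-case promise with the structural description of $\cI$ from \cref{sec:badKMatroidsSingleOracle}; the underlying reasoning is identical.
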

\begin{proof}
As mentioned, in the $0$-case, any common independent set in $\overline{\cN}$ is of size at most $1$.
Because \ALG returns a common independent set $T\subseteq \overline{\cN}$, we have $|T|\leq 1$ and \cref{alg:chainsubreduction} identifies correctly that we are in the $0$-case.
\end{proof}

\begin{lemma} \label{lem:chainreduction_one_case}
In the $1$-case, \cref{alg:chainsubreduction} produces the correct answer with probability at least $\sfrac{\varepsilon}{k^2}$.
\end{lemma}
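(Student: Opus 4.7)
The plan is straightforward: in the 1-case the \SMkM instance fed to \ALG has optimum value exactly $k$, and the assumed approximation guarantee of \ALG combined with the trivial upper bound $|T|\leq k$ forces $|T|\geq 2$ to occur with non-trivial probability.

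As a first step, I would observe that $u_i = e^i_{t^{i+1}}\in \overline{\cN}$ for every $i \in [k-1]$: this is because in the 1-case $x^i_{t^{i+1}} = 1$, so $u_i$ is among the elements that \cref{alg:chainsubreduction} forwards to \ALG. Using the characterization of $\cI$ from \cref{sec:badKMatroidsSingleOracle}, $\{u_1,\ldots,u_{k-1}\}\cup\{e^k_1\}$ is then a common independent set of size $k$ contained in $\overline{\cN}$, and the same characterization shows no larger common independent set exists. Hence the optimum of the restricted \SMkM instance handed to \ALG equals $k$.

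Next I would invoke the assumed expected approximation ratio of $k-\varepsilon$ to get $\bE[|T|]\geq k/(k-\varepsilon)$, and combine it with the deterministic bound $|T|\leq k$ (which holds because $T\in\cI$) to obtain
\[
\frac{k}{k-\varepsilon} \;\leq\; \bE[|T|] \;\leq\; 1\cdot \Pr[|T|\leq 1] + k\cdot \Pr[|T|\geq 2] \;=\; 1 + (k-1)\cdot\Pr[|T|\geq 2]\enspace.
\]
Rearranging yields $\Pr[|T|\geq 2]\geq \varepsilon/((k-1)(k-\varepsilon))$, and the announced bound $\varepsilon/k^2$ then follows from the elementary inequality $(k-1)(k-\varepsilon)\leq k^2$, which is immediate for all $k\geq 1$ and $\varepsilon\geq 0$. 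Since \cref{alg:chainsubreduction} declares ``1-case'' exactly when $|T|\geq 2$, this is the desired conclusion.

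I do not anticipate any meaningful obstacle: the 1-case is precisely the direction in which the approximation guarantee of \ALG propagates directly into the success probability of \cref{alg:chainsubreduction} via a one-line Markov-type estimate, and the 0-case work of \cref{lem:chainreduction_zero_case} already handles the other side deterministically. The more subtle indistinguishability argument used in the proof of \cref{thm:multiple_matroids_single_oracle_hardness_element_storage}, and the subsequent parallel-repetition boosting needed to raise the success probability of \cref{alg:chainsubreduction} to the standard $\sfrac{2}{3}$ in order to assemble the full protocol $\PRT$, are handled separately and are not required for the lemma at hand.
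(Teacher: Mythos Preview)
Your proposal is correct and follows essentially the same route as the paper: both use that the optimum in the $1$-case equals $k$, invoke $\bE[|T|]\geq k/(k-\varepsilon)$, bound $\bE[|T|]\leq 1+(k-1)\Pr[|T|\geq 2]$ via $|T|\leq k$, and rearrange to get $\Pr[|T|\geq 2]\geq \varepsilon/((k-1)(k-\varepsilon))\geq \varepsilon/k^2$. Your write-up even adds a bit more justification for why the restricted optimum is exactly $k$, which the paper states without elaboration.
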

\begin{proof}
As discussed, the maximum cardinality common independent set in $\overline{\cN}$ in the $1$-case has size $k$. Because \ALG guarantees $(k-\varepsilon)$-approximation (in expectation), the expected size of its output $T$ fulfills
\begin{equation}\label{eq:lowBoundExpCardT}
\bE\left[|T|\right] \geq \frac{k}{k-\varepsilon}\enspace.
\end{equation}
Moreover, because $T$ has cardinality at most $k$, we obtain
\begin{align*}
\bE\left[|T|\right] \leq k \cdot \Pr\left[|T|>1\right] + \Pr\left[|T|\leq 1\right]
  = 1 + (k-1) \Pr\left[|T| > 1\right]\enspace.
\end{align*}
By combining the above relation with~\eqref{eq:lowBoundExpCardT}, we obtain, as desired,
\begin{equation*}
\Pr[|T| > 1] \geq \frac{\varepsilon}{k-\varepsilon} \cdot \frac{1}{k-1} \geq \frac{\varepsilon}{k^2}\enspace.
\qedhere
\end{equation*}
\end{proof}

We are now ready to present the promised protocol $\PRT$, which simply executes $\lceil \sfrac{2k^2}{\varepsilon} \rceil$ parallel copies of \cref{alg:chainsubreduction}, and then outputs that we are in the $1$-case if and only if at least one of the executions returned this answer.
We highlight that the parallel copies of \cref{alg:chainsubreduction} do not need to send independent copies of the values of $t_2,\ldots, t_i$ to the next player, as is done in Line~\ref{algline:forwardTIndices} of the protocol, because these indices are the same for each parallel run.
Hence, the protocol is run in parallel except for this one step.

\begin{corollary} \label{cor:chain_reduction}
In the $0$-case, $\PRT$ always answers correctly, and in the $1$-case, it still answers correctly with probability at least $\sfrac{2}{3}$.
\end{corollary}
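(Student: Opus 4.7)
The plan is to verify the two cases of the corollary directly from Lemmas \ref{lem:chainreduction_zero_case} and \ref{lem:chainreduction_one_case}, using that the parallel copies of the reduction use independent random bits (the only randomness in each copy comes from the internal coin tosses of \ALG, so the copies can be made mutually independent).

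For the $0$-case, \cref{lem:chainreduction_zero_case} guarantees that every single copy of \cref{alg:chainsubreduction} outputs ``$0$-case'' with probability $1$. Hence, every one of the $\lceil \sfrac{2k^2}{\varepsilon} \rceil$ copies outputs ``$0$-case'', so the OR aggregator of $\PRT$ also outputs ``$0$-case'' with probability $1$.

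For the $1$-case, \cref{lem:chainreduction_one_case} says each copy outputs the correct answer ``$1$-case'' with probability at least $\sfrac{\varepsilon}{k^2}$. By the independence of the random bits used in the different copies, the probability that all $\lceil \sfrac{2k^2}{\varepsilon} \rceil$ copies simultaneously fail to output ``$1$-case'' is at most
\[
\left(1-\frac{\varepsilon}{k^2}\right)^{\lceil 2k^2/\varepsilon \rceil} \leq e^{-(\varepsilon/k^2)\cdot(2k^2/\varepsilon)} = e^{-2} < \frac{1}{3}\enspace,
\]
where we used the standard estimate $1-x \leq e^{-x}$. Therefore $\PRT$ outputs ``$1$-case'' with probability at least $1 - e^{-2} > \sfrac{2}{3}$, as desired.

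There is no real obstacle here; the only subtle point to mention is that sharing the (public) indices $t_2,\ldots, t_i$ across the parallel copies, as indicated just before the statement of the corollary, does not compromise independence because those values are part of the deterministic input rather than random bits. Consequently, the failure events of the individual copies remain mutually independent, which is all that is needed for the product bound above.
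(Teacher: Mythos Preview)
Your proof is correct and follows the same route as the paper's: the $0$-case is immediate from \cref{lem:chainreduction_zero_case}, and in the $1$-case you bound the probability that all $\lceil 2k^2/\varepsilon\rceil$ independent copies fail by $(1-\varepsilon/k^2)^{\lceil 2k^2/\varepsilon\rceil}\le e^{-2}<\sfrac{1}{3}$. Your explicit remark that the shared indices $t_2,\dots,t_i$ are deterministic input and hence do not spoil independence is a welcome clarification that the paper leaves implicit.
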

\begin{proof}
The first part of the corollary is a direct consequence of Lemma~\ref{lem:chainreduction_zero_case}. Additionally, by Lemma~\ref{lem:chainreduction_one_case}, the probability that $\PRT$ answers that we are in the $0$-case when we are, in fact, in the $1$-case is at most
\begin{equation*}
    \left(1 - \frac{\varepsilon}{k^2}\right)^{\lceil \sfrac{2 k^2}{\varepsilon} \rceil}
    \leq
    \left(1 - \frac{\varepsilon}{k^2}\right)^{\sfrac{2 k^2}{\varepsilon}}
    \leq
    \left(e^{-\sfrac{\varepsilon}{k^2}}\right)^{\sfrac{2k^2}{\varepsilon}}
    =
    e^{-2}
    <
    \frac{1}{3}
    \enspace.
    \qedhere
\end{equation*}

\end{proof}

Combining \cref{cor:chain_reduction} with a communication complexity lower bound for the chain problem completes the proof of \cref{thm:multiple_matroids_single_oracle_hardness}.
\begin{proof}[Proof of \cref{thm:multiple_matroids_single_oracle_hardness}]
Because \cref{cor:chain_reduction} shows that $\PRT$ is an algorithm for \chain{k}{m} that succeeds with probability at least $\sfrac{2}{3}$, \cref{thm:pindex_hardness} guarantees that its communication complexity (i.e., the maximum size of any message in it) is at least $\Omega(\sfrac{m}{k^2})$.

Let us now bound the total communication complexity of $\PRT$ from above. We note that each message of $\PRT$ consists of $\lceil \sfrac{2k^2}{\varepsilon} \rceil$ states of $\ALG$ and up to $k - 1$ indices. (These are the indices $t_2,\ldots, t_k$ and we recall that only one copy of them is sent to the next player instead of sending one for each parallel execution of \cref{alg:chainsubreduction}.) Therefore, the total communication complexity of $\ALG$ is $O((\sfrac{k^2}{\eps}) \cdot S + k \cdot \log m)$, where $S$ is the space complexity of $\ALG$.

Combining the upper and lower bounds we have found for the communication complexity of $\PRT$, we get
\begin{equation}\label{eq:combProtocolBounds}
	O\left(\frac{k^2}{\eps} \cdot S + k \cdot \log m\right) = \Omega\left(\frac{m}{k^2}\right)
	\enspace.
\end{equation}
We now need to consider two cases. If $S = \Omega(\sfrac{\eps \log m}{k})$, then the $\tfrac{k^2}{\eps} \cdot S$ term on the left hand side of the last equality dominates, and thus, $S = \Omega(\sfrac{\eps m}{k^4}) = \Omega(\sfrac{\eps n}{k^5}) = \Omega(\sfrac{\eps n}{k^5 \log k})$, as desired.

It remains to consider the case of $S = O(\sfrac{\eps \log m}{k})$. In this case the second term on the left hand side of Equation~\eqref{eq:combProtocolBounds} dominates, which implies $k = \Omega(\sqrt[3]{\sfrac{m}{\log m}})$, and therefore,
\[
	\frac{\eps n}{k^5 \log k}
	=
	\frac{\eps m}{k^4 \log k}
	=
	\frac{\eps m}{\Omega(\frac{m}{\log m}) \cdot k \log k}
	=
	O(1)
	=
	O(S)
	\enspace,
\]
where the penultimate equality holds since $\eps \leq k - 1 < k$ and $\log k = \Omega(\log m - \log \log m) = \Omega(\log m)$.
\end{proof}

\section{Inapproximability for Multiple Independence Oracles} \label{sec:hardness_intersection}
In this section, we prove \cref{thm:hardness_multiple_streaming}, which gives a strong inapproximability result for data stream algorithms as a function of the number $k$ of matroids, even in the case when the objective function $f$ is a linear function (unlike in the previous section, here we allow access to the independence oracles of the individual matroids).
\thmHardnessMultipleStreaming*
Note that the above result implies that
\begin{enumerate*}[label=(\roman*)]
\item any data stream algorithm with an approximation guarantee  $o(k/\log k)$ requires super polynomial memory in $k$, and
\item any data stream algorithm with constant approximation guarantee requires exponential memory in $k$.
\end{enumerate*}

The techniques we use also readily imply hardness for the (preemptive) online version of this problem.
In this version, the elements of the ground set $\cN$ arrive online, and upon receiving each element the algorithm has to decide either to add this element to the solution it maintains, or to reject the element. 
If the algorithm accepts an element to its solution, it may remove this element from the solution at a later point; however, a decision to reject an element (or remove it from the solution at a later time) is irrevocable. 
The algorithm is also required to keep its solution feasible at all times. We have the following hardness in this model.
\begin{theorem} \label{thm:inapproximability}
  For \(k\geq 2\), the competitive ratio of any online algorithm for {\SMkM} against an oblivious adversary is at least $\frac{k}{81 \ln k}$. This holds even when the task is to find a maximum size common independent set in $k$ partition matroids, and the common rank of these matroids is $O(k / \log k)$.
\label{thm:hardness_multiple_online}
\end{theorem}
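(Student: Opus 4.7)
The plan is to derive the online lower bound from essentially the same hard-instance template used to establish \cref{thm:hardness_multiple_streaming} in \cref{sec:hardness_intersection}, and apply Yao's minimax principle to transfer the bound to randomized online algorithms against oblivious adversaries. I would build a distribution $\mathcal{D}$ over \SMkM instances on $k$ partition matroids with common rank $\rho = \Theta(k/\ln k)$, organized into $\rho$ phases: the stream within phase $i$ consists of a batch of $m$ elements that are pairwise equivalent (in the sense of \cref{obs:NiAreEquivalent}) in all $k$ matroids when restricted to the prefix seen so far, and exactly one of them, $u_i$, is the ``hidden'' element needed to extend the partial optimum. The partition matroids are arranged so that $\{u_1, \dotsc, u_\rho\}$ is the unique common independent set of maximum size $\rho$, and any solution containing more than a constant number of non-hidden elements violates one of the $k$ matroids, which is checkable in the online model because infeasibility is forbidden at every timestep.

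The key observation that makes the online model susceptible to this construction is that a (preemptive) online algorithm may only shrink its current feasible solution, so by the end of each phase the elements it still holds from that phase must have been committed under a distribution where the identity of $u_i$ is uniform among the $m$ phase elements. Fix any deterministic online algorithm and let $a_i$ be the expected number of elements of phase $i$ that appear in the final solution $T$. By symmetry, $\Pr[u_i \in T \mid a_i] \leq a_i/m$, while feasibility at the end of the stream forces $\sum_{i=1}^\rho a_i \leq \rho + O(1)$. Therefore
\begin{equation*}
\mathbb{E}[|T|] \;\leq\; O(1) + \sum_{i=1}^\rho \frac{a_i}{m} \;\leq\; O(1) + \frac{\rho+O(1)}{m}\enspace.
\end{equation*}
Choosing $m = \Theta(k)$ (so $m/\rho = \Theta(\ln k)$) makes the right-hand side $O(1)$, whereas the optimum has size $\rho = \Theta(k/\ln k)$, yielding a competitive ratio of $\Omega(k/\ln k)$; tuning the constants in the matroid and phase-size construction (and the choice of $\rho$) produces the stated constant $\tfrac{1}{81}$. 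Yao's principle then lifts the bound to randomized online algorithms against an oblivious adversary.

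The main obstacle is twofold. First, realising common rank exactly $\Theta(k/\ln k)$ with $k$ partition matroids, while still forcing the ``at most one non-hidden element'' structure per phase, requires replicating each of $\rho$ underlying partition constraints roughly $k/\rho$ times so that the $k$ matroids impose only $\rho$ substantive constraints; care is needed to make the replicates genuine partition matroids and to ensure that the equivalence property survives across replicates. Second, preemption must be handled: a preemptive algorithm can accept and later swap elements of the current phase, so the relevant symmetry argument is applied not to acceptance decisions individually but to the identity of the (at most one, per phase) element retained when the phase ends. Since within a phase the elements are exchangeable in $\mathcal{D}$, any deterministic selection rule on the multiset of phase-$i$ elements retains $u_i$ with probability at most $a_i/m$, which is exactly what the calculation above needs.
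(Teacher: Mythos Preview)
There is a genuine gap in the construction. Your proposal leans on the template of \cref{sec:inapproximability_oracle} (invoking \cref{obs:NiAreEquivalent}) and proposes to manufacture $k$ matroids by replicating $\rho\approx k/\ln k$ underlying partition constraints. But \cref{obs:NiAreEquivalent} asserts equivalence only for the \emph{common} independence oracle; with per-matroid oracles the Section~3 matroids reveal the hidden element during its own phase. Concretely, $M_{i+1}$ declares a pair $\{e,f\}\subseteq\cN_i$ independent if and only if one of $e,f$ equals $u_i$, so any online algorithm that can query $M_{i+1}$ (or any replicate of it) identifies $u_i$ as soon as two phase-$i$ elements have arrived. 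Replication does not help, since it does not change which pairs are independent in any single matroid. More generally, no fixed family of partition matroids can make all phase-$i$ elements equivalent in every $M_j|_{\cN_{\leq i}}$ \emph{and} single out $\{u_1,\dots,u_\rho\}$ as the unique large common independent set: the latter forces some matroid to distinguish $u_i$ within its phase, and for partition matroids that distinction is already visible on $\cN_i$.

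The paper's proof avoids this by a different mechanism: the $k$ matroids are the fixed coordinate partition matroids on $[p]^k$ (so the oracles carry no information about which element is hidden), and the hiding is done entirely by randomness in the stream---in phase $r$ one samples $m=k^3$ random elements $S_r$ independent of $o^{(1)},\dots,o^{(r-1)}$ and only \emph{afterwards} picks $o^{(r)}$ uniformly from $S_r$. The property you assert deterministically (``more than $O(1)$ non-hidden elements are infeasible'') is obtained probabilistically via a collision bound (\cref{lemma:dist_success}), and this is precisely where $p=\Theta(k/\ln k)$ enters. A secondary issue: your bound $\Pr[u_i\in T]\le a_i/m$ with $a_i$ the expected count of phase-$i$ elements in the \emph{final} solution is not valid, because the final solution depends on $u_i$ through later phases (an algorithm that keeps one phase-$i$ element and drops it in phase $i{+}1$ unless it turns out to be $u_i$ has $a_i=1/m$ but $\Pr[u_i\in T]=1/m$, not $1/m^2$). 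The correct quantity is the solution size at the end of phase $i$, which is only bounded by the common rank $\rho$; this forces $m\gg\rho^2$ (the paper takes $m=k^3$), not $m=\Theta(k)$.
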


The key building block for both (streaming and online) hardness results is  a  ``hard'' distribution of instances described in \cref{ssc:distribution}. This distribution  is then used to  prove \cref{thm:hardness_multiple_online,thm:hardness_multiple_streaming} in \cref{sec:multiple_hardness_online,sec:multiple_hardness_streaming}, respectively, in a similar way to the proof of \cref{thm:multiple_matroids_single_oracle_hardness}. 

\subsection{Description of hard distribution} \label{ssc:distribution}

Let $p$ be a non-negative integer parameter of the construction. Our instances are subsets of the set $\cN = [p]^k$, where we allow multiple elements with the same coordinates (i.e., ``multi-subsets''). A set $S \subseteq \cN$ is independent in the matroid $M_i$ (for any integer $i \in [k]$) if and only if no two elements of $S$ share the same value in coordinate number $i$ (in other words, $u_i \neq v_i$ for every two distinct elements $u, v \in S$). 
One can observe that this definition makes $M_i$ a partition matroid. We recall that $S\subseteq \cN$ is a \emph{common independent set} if it is independent in all matroids; otherwise, we will refer to it as \emph{dependent}. 
Note that the common rank of the $k$ matroids $M_1,\ldots, M_k$ is $p$.

In \cref{alg:streaming_adversary} we describe a procedure for sampling $p-1$ many subsets $S_1, S_2, \ldots, S_{p-1} \subseteq [p]^k$ and $p-1$ ``hidden'' optimal elements $o^{(1)} \in S_1, o^{(2)}\in S_2, \ldots, o^{({p-1})}\in S_{p-1}$. 
 In every iteration $r=1,\ldots, p-1$, the algorithm first forms  $S_r$ by sampling $m$ elements independently and uniformly from those elements that form a common independent set with $\{o_1, \ldots, o_{r-1}\}$. That is, $S_r$ contains $m$ uniformly random  samples with replacements from  
\[
	\{u \in \cN \mid  \forall_{1 \leq i < r, 1 \leq j \leq k}\; o^{(i)}_{j} \neq u_{j}\}\enspace.
\]
 Then, 
\emph{after} the selection of $S_r$, the algorithm samples $o^{(r)}$ uniformly at random among the $m$ elements in $S_r$. 

 \begin{algorithm}
 \caption{\textsc{Hard Instance Generation }$(p, m)$} \label{alg:streaming_adversary}
 \begin{algorithmic}[1]
 \For{$r = 1$ \textbf{to} $p-1$}
 	\State Obtain $S_r$ by sampling $m$ elements uniformly and with replacement from \[\{u \in \cN \mid  \forall_{1 \leq i < r, 1 \leq j \leq k}\; o^{(i)}_{j} \neq u_{j}\} \enspace.\] 
 	\State Select $o^{(r)}$ from $S_r$ uniformly at random.
 \EndFor
 \State Output $S_1, \ldots, S_{p-1}$ and $o^{(1)}, \ldots, o^{({p-1})}$.
 \end{algorithmic}
 \end{algorithm}
 We remark that the algorithm  with small probability  may sample the same element more than once when forming the set $S_r$. When this happens, we consider these samples to be unique elements on the stream (that are dependent). This allows us to simplify the notation in the following as each set $S_r$ is now guaranteed to contain exactly $m$ elements. Formally, this corresponds to extending the ground set $\cN$ by making $m$ copies $u^1, \ldots, u^m$ of each element $u\in \cN$, and whenever an element $u$ is sampled $i$ times, we include the copies $u^1, \ldots, u^i$.  

We refer to  \cref{alg:streaming_adversary} as ``\textsc{Hard Instance Generation}'' as it is the basic building block of our hardness results in both the online and streaming models. 
More specifically,  our hardness result for the  online model  is based on the (random) stream obtained  by first feeding the elements in $S_1$ (in any order), then $S_2$ (in any order), and so on until $S_{p-1}$ is fed. The intuition is that when the algorithm has only seen  the elements in $S_1, \ldots, S_i$, it has no information about the selection of $o^{(i)}$ and so any online algorithm is unlikely to have saved the element $o^{(i)}$. In addition, while $\{o^{(1)}, \ldots, o^{(p-1)}\}$ is a common independent set by construction, we prove  (see \cref{lemma:dist_success} below) that any other two elements are likely to be dependent. This creates the ``gap'' between the values of the solution $\{o^{(1)}, o^{(2)}, \ldots, o^{(p-1)}\}$ and a solution with any other elements, which in turn  yields the desired hardness result. For our hardness in the data stream model, we  forward a subset of the above-mentioned stream, and the difficulty for a low-space streaming algorithm  is to ``remember'' whether the special elements $o^{(1)}, o^{(2)}, \ldots, o^{(p-1)}$ appeared in the stream. This is formalized in the next sections.

We complete this section by proving that, with good probability, any large solution must contain the hidden elements $o^{(1)}, \ldots, o^{(p-1)}$.
\begin{definition}
  We say that the output of \cref{alg:streaming_adversary} is \emph{successful} if any two elements $e, f \in S_1 \cup S_2  \cup \cdots \cup S_{p-1} \setminus \{o_1, \ldots, o_{p-1}\}$ are dependent, i.e., there is a coordinate $i\in [k]$ such that $e_i = f_i$.
\end{definition}

\begin{lemma}
    The output of \cref{alg:streaming_adversary} is successful with  probability at least $1- \binom{p m}{2} e^{-k/p}$.
    \label{lemma:dist_success}
\end{lemma}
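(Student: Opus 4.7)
My plan is to establish the lemma by a union bound over all $\binom{(p-1)m}{2} \leq \binom{pm}{2}$ unordered pairs of sample positions in $S_1 \cup \cdots \cup S_{p-1}$. Labelling samples by position $(r,i)$ with $r \in [p-1]$ and $i \in [m]$, I fix such a pair $e = S_r[i]$, $f = S_{r'}[i']$ with $r \leq r'$ and $(r,i) \neq (r',i')$, and aim to prove
\[
\Pr\bigl[\text{$e$ and $f$ are both non-optimal and $e_j \neq f_j$ for every $j \in [k]$}\bigr] \leq e^{-k/p};
\]
summing over pairs then yields the claim.

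The key structural observation is that, once one conditions on the vector $I = (I_1, \ldots, I_{p-1}) \in [m]^{p-1}$ of indices defining the optimal elements $o^{(r)} = S_r[I_r]$, the sampling process becomes coordinate-wise independent. Indeed, each allowed set is a product $A_r = \prod_{j=1}^{k} \bigl([p] \setminus \{o^{(1)}_j, \ldots, o^{(r-1)}_j\}\bigr)$, so conditional on $I$ the marginal one-dimensional processes across distinct coordinates evolve independently. Moreover the non-optimality event is precisely $\{I_r \neq i,\ I_{r'} \neq i'\}$, a function of $I$ alone. Hence it suffices to show that for every $I$ in this set and every coordinate $j \in [k]$,
\[
\Pr[e_j \neq f_j \mid I] \leq 1 - 1/p,
\]
since multiplying over $k$ coordinates then gives $(1 - 1/p)^k \leq e^{-k/p}$, and averaging over $I$ preserves the bound.

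The per-coordinate bound is the only nontrivial step. When $r = r'$, the values $e_j$ and $f_j$ are i.i.d.\ uniform on a set of size $p - r + 1$, giving $1 - 1/(p - r + 1) \leq 1 - 1/p$ directly. When $r < r'$, I reveal the one-dimensional process sequentially: first $e_j$, then $o^{(r)}_j, o^{(r+1)}_j, \ldots, o^{(r'-1)}_j$. At each step $s$ with $r \leq s \leq r'-1$, conditional on $e_j \neq o^{(t)}_j$ for all previous $t$ and on $I_r \neq i$, the value $o^{(s)}_j$ is uniform on a set of size $p - s + 1$ containing $e_j$, so its probability of avoiding $e_j$ is $(p - s)/(p - s + 1)$. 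A telescoping product gives
\[
\Pr\bigl[e_j \notin \{o^{(1)}_j, \ldots, o^{(r'-1)}_j\} \bigm| I\bigr] = \prod_{s=r}^{r'-1} \frac{p - s}{p - s + 1} = \frac{p - r' + 1}{p - r + 1}.
\]
Finally, conditional on the above, $f_j$ is uniform on the $(p - r' + 1)$-element set excluding all of $o^{(1)}_j, \ldots, o^{(r'-1)}_j$, so $\Pr[e_j = f_j \mid I]$ equals the above probability divided by $p - r' + 1$, namely $1/(p-r+1)$. Hence $\Pr[e_j \neq f_j \mid I] \leq 1 - 1/p$ in this case as well, and the proof concludes. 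I anticipate the telescoping identity to be the one step requiring genuine computation; it is precisely what makes pairs of samples from arbitrarily distant rounds $r, r'$ obey the same uniform $e^{-k/p}$ bound, independently of $r' - r$.
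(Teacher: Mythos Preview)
Your proof is correct and follows the same overall architecture as the paper: a union bound over the at most $\binom{pm}{2}$ pairs of sample positions, coordinate-wise independence of the process once the index vector $I$ is fixed, and a per-coordinate collision bound of $1/(p-r+1)\geq 1/p$ yielding $(1-1/p)^k\leq e^{-k/p}$.

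The one noteworthy difference is how you establish the per-coordinate collision probability when $r<r'$. You reveal the process forward, tracking whether $e_j$ survives each successive $o^{(s)}_j$, and obtain $\Pr[e_j=f_j\mid I]=1/(p-r+1)$ via the telescoping product $\prod_{s=r}^{r'-1}\frac{p-s}{p-s+1}\cdot\frac{1}{p-r'+1}$. The paper reaches the same conclusion in one line by a deferred-decision argument: since $e$ is \emph{not} the optimal element of round $r$, its value does not influence the generation of $o^{(r)},\ldots,o^{(r'-1)}$ or of $f$; hence one may reveal everything except $e$ first, after which each coordinate $e_j$ is uniform on the $(p-r+1)$-element set $[p]\setminus\{o^{(1)}_j,\ldots,o^{(r-1)}_j\}$, a set that necessarily contains $f_j$. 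This gives $\Pr[e_j=f_j]=1/(p-r+1)$ immediately, with no telescoping. Your computation is a perfectly valid alternative that makes the independence structure more explicit, but the deferred-decision viewpoint is what makes the bound transparent and explains, without calculation, why the answer is independent of $r'-r$.
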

\begin{proof}
    Consider two elements $u \in S_{r_1} \setminus \{o_{r_1}\}$ and $v \in S_{r_2}$ with $r_1\leq r_2$. As $u \neq o_{r_1}$, each coordinate of $u$ equals that of $v$ with probability at least $1/(p-r_1+1) \geq 1/p$. Now, as there are $k$ coordinates, and each coordinate of $u$ is sampled independently at random, 
    \[
        \Pr[\mbox{$\{u,v\}$ is a common independent set}] \leq (1-1/p)^k \leq e^{-k/p}\,.
    \]
    The lemma now follows  by taking the union bound over all  possible pairs $u,v$; the number of such pairs is upper bounded by $\binom{(p-1) m}{2}$. 
\end{proof}


\newcommand{\sALG}{ALG\xspace}


\subsection{Hardness for online algorithms} 
\label{sec:multiple_hardness_online}

Let $p = \lceil k/(27 \ln k)\rceil+1$ and $m = k^3$. We will prove that the competitive ratio of any online algorithm is at least $k/(81 \ln k)$. We assume throughout that $k$ is such that  $k/(81 \ln k) > 1$. This is without loss of generality since the statement is trivial if $k/(81 \ln k) \leq 1$.   We shall consider the following distribution of instances. Run \cref{alg:streaming_adversary} with the parameters $p$ and $m$ to obtain sets $S_1, \ldots, S_{p-1}$ (and hidden elements $o^{(1)}, \ldots, o^{(p-1)}$), and construct an input stream in which the elements of $S_1$ appear first (in any order) followed by those in $S_2$ and so on until the elements in $S_{p-1}$ appear.  We will show that any deterministic online algorithm \sALG cannot be $((p-1)/3)$-competitive on this distribution of instances. \cref{thm:hardness_multiple_online} then follows via Yao's principle.

To analyze the  competitive ratio of \sALG, 
let $O$ be the event that the output of \sALG contains some element of $\{o^{(1)}, o^{(2)}, \ldots, o^{(p-1)}\}$, and let $S$ be the event that the instance is successful. Then, if we use $|\sALG|$ to denote the size of the independent set outputted by \sALG,
\begin{align*}
    \bE[|\sALG|] 
    & \leq \left(1 - \Pr[\neg O,  S]\right) \cdot k + 1\enspace,
\end{align*}
where the inequality holds because any solution has size at most $k$, and if the algorithm fails to identify any element in $\{o^{(1)}, \ldots, o^{(p-1)}\}$, then it can produce solution of size at most $1$ for a successful instance.

Note that since $k/(81 \ln k) > 1$ we have that $p \leq 3 \cdot k/(27 \ln k) = k/(9 \ln k)$.  
Now, by \cref{lemma:dist_success} and the selection of $p$ and $m$, we have 
 \[   \Pr[ S]  \geq 1- \binom{p m}{2} e^{-k/p} 
     \geq 1- k^8  \cdot e^{-9 \ln k}
    = 1- 1/k\enspace .
\]

To bound $\Pr[\neg O]$, note that the set $S_r$ contains no information about $o^{(r)}$ because $o^{(r)}$ is selected uniformly at random from $S_r$. 
Moreover, when all $m$ elements of $S_r$ have been inspected in the stream, \sALG keeps at most $p$ independent elements.  
Any one of these elements is $o^{(r)}$ with probability at most $p/m$.  In other words, the algorithm selects $o^{(r)}$ with probability at most $p/m$.   Hence, by the union bound,  the algorithm has selected any of  the $p-1$ elements $\{o^{(1)}, \ldots, o^{(p-1)}\}$ with probability at most $p/m \cdot (p-1) \leq 1/k$.   We thus have $\Pr[\neg O] \geq 1-1/k$, and together with the above proved inequality $\Pr[S] \geq 1-1/k$, we get via the union bound
\begin{align*}
    \bE[|\sALG|] 
     \leq \left(1 - \Pr[\neg O,  S]\right) \cdot k + 1 
     \leq (2/k) \cdot k + 1 = 3\enspace. 
\end{align*}
Since the stream contains a solution $\{o^{(1)}, \ldots, o^{(p-1)}\}$ of size $p-1$, this implies that \sALG is not better than $((p-1)/3)$-competitive, which in turn implies \cref{thm:hardness_multiple_online}.

\subsection{Hardness for streaming algorithms}
\label{sec:multiple_hardness_streaming}

Let \sALG be a data stream algorithm for finding a set of maximum cardinality subject to $k$ partition matroid constraints.  Further suppose that \sALG has the following properties:

\begin{itemize}
    \item \sALG uses memory $M$;
    \item \sALG outputs an $\alpha$-approximate solution with probability at least $2/3$, where $\alpha \leq \frac{k}{32 \ln k}$ (note that $\alpha$ is also lower bounded by $1$ since it is an approximation ratio).
\end{itemize}

Select $p= \lceil 3\cdot \alpha\rceil $, and let $m$  be the smallest power of two such that $m\geq e^{k/(8\alpha)}$. Note that this selection satisfies
\[
 \mbox{$p\in [3\cdot \alpha, 4\cdot \alpha]$} \enspace, \quad \mbox{$m \in [e^{k/(8\alpha)}, 2 e^{k/(8\alpha)}]$}\enspace, \qquad \mbox{and} \qquad   8 \cdot p\leq k \leq m \enspace.
\]
We will  use \sALG to devise a protocol for the \chainPm problem that succeeds with probability at least $2/3$ and has communication complexity at most $M + p \log_2 m$. Combining this reduction with \cref{thm:pindex_hardness} then yields \cref{thm:hardness_multiple_streaming}, i.e., that any such algorithm $\sALG$ must have a memory footprint $M$ that is at least $\Omega\left(e^{k/(8\alpha)}/{k^2}\right)$.


\subsubsection{Description of protocol}
We use \sALG to obtain \cref{alg:chainreductionmultiple} for \chainPm. The protocol consists of two phases: a precomputation phase that is independent of the \chainPm instance, followed by a description of the messages of the players.

\paragraph{Precomputation phase.}  In the {precomputation phase}, the players use shared random coins\footnote{We note that the hardness result  of \chainPm (\cref{thm:pindex_hardness}) holds when the players have access to public coins, i.e., shared randomness. This is proved, e.g., in Theorem 3.3 of~\cite{feldman_2020_one-way}. In general, Newman's theorem~\cite{DBLP:journals/ipl/Newman91} says that we can turn any public coin protocol into a private coin protocol with little (logarithmic) increase in communication.} to generate instances from the same distribution produced by \cref{alg:streaming_adversary} for all possible values of $t^2, \ldots, t^p\in [m]$ in an instance of \chainPm.
Specifically,  first a set $T$ is sampled from the same distribution as the set $S_1$ produced by \cref{alg:streaming_adversary}. 
The elements of $T$ are then randomly permuted. For $j\in[m]$, we let $T(j)$ denote the $j$-th element in the the obtained ordered set. The reason that the elements in $T$ are randomly permuted is to make sure that for any fixed $t^2\in [m]$, the element $T(t^2)$ is uniformly random, and thus, has the same distribution as $o^{(1)}$ in \cref{alg:streaming_adversary}.  
Following the choice of $S_1 = T$ and $o^{(1)}\in S_1$, \cref{alg:streaming_adversary} proceeds to sample $S_2$ to be $m$ random elements that are independent with respect to $o^{(1)}$. 
In the precomputation phase we do so for each possible element in $T$, i.e., we sample sets $T_{1}, T_2, \ldots, T_m$, one for each  possible choice of  $t^2\in [m]$.  
Then, for each such $T_{t^2}$ we sample $m$ sets $T_{t^2,1}, T_{t^2, 2}, \ldots, T_{t^2, m}$ for all possible choices of $t^3 \in [m]$ and so on. 
The sets constructed in the precomputation phase can thus naturally be represented by a tree, where each path from the root $T$ to a leaf corresponds to a particular choice of $t^2, t^3, \ldots, t^p \in [m]$.  
For $m=3$ and $p=3$, this tree is depicted in \cref{fig:precomputation}. The thick path corresponds to the case of $t_2 = 2$ and $t_3 = 3$. 

As described above, we randomly permute the sets so as to make sure that, for fixed $t^2, t^3, \ldots t^r \in [m]$, the distribution of $o^{(1)}$ is the same as that of $T(t^2)$ and, in general, the distribution of $o^{(i)}$ is the same as that of $T_{t^2, \ldots, t^i}(t^{i+1})$.  This gives us the following observation.
\begin{observation}
    Fix $t^2, t^3, \ldots, t^r\in [m]$. Over the randomness of the precomputation phase, the elements $o^{(1)} = T(t^2), o^{(2)} = T_{t^2}(t^3), \ldots, o^{(r-1)} = T_{t^2, \ldots, t^{r-1}}(t^r)$ and the sets $S_1 = T, S_2 = T_{t^2}, \ldots, S_{r-1} = T_{t^2, \ldots, t^{r-1}}$ have the same distribution as  the output of \cref{alg:streaming_adversary}.
    \label{obs:same_dist}
\end{observation}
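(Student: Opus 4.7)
The plan is to prove the observation by induction on $r$, showing at each level that the joint distribution of the path through the precomputation tree matches the joint distribution produced by the first $r-1$ iterations of \cref{alg:streaming_adversary}. The key technical point is that the random permutation applied to each sampled set converts the deterministic ``index lookup'' $T_{t^2,\ldots,t^i}(t^{i+1})$ into a uniformly random selection from that set, which is precisely what \cref{alg:streaming_adversary} does when choosing $o^{(i)}$.

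For the base case $r=2$, by construction $T$ is sampled from the same distribution as $S_1$. Since $T$ is then permuted uniformly at random and $t^2 \in [m]$ is fixed, the element $T(t^2)$ is a uniformly random element of the (unordered) multiset $T$, which matches the distribution of $o^{(1)}$ conditional on $S_1 = T$. Thus the joint law of $(T, T(t^2))$ equals that of $(S_1, o^{(1)})$.

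For the inductive step, assume the joint distribution of
\[
\bigl(T,\; T(t^2),\; T_{t^2},\; T_{t^2}(t^3),\; \ldots,\; T_{t^2,\ldots,t^{r-1}}\bigr)
\]
matches the joint distribution of $(S_1, o^{(1)}, S_2, o^{(2)}, \ldots, S_{r-1})$ output by the first $r-1$ iterations of \cref{alg:streaming_adversary}. By construction of the precomputation phase, conditional on the path $(T, T_{t^2}, \ldots, T_{t^2,\ldots,t^{r-1}})$ and the prefix indices, the set $T_{t^2,\ldots,t^{r-1}}$ consists of $m$ i.i.d.\ samples from elements that are independent (in all matroids) with respect to $T(t^2),\ldots, T_{t^2,\ldots, t^{r-2}}(t^{r-1})$, which by the inductive hypothesis have the same joint distribution as $o^{(1)},\ldots, o^{(r-2)}$. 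This matches exactly the rule used in \cref{alg:streaming_adversary} to sample $S_{r-1}$ conditional on $o^{(1)}, \ldots, o^{(r-2)}$. Finally, because $T_{t^2,\ldots,t^{r-1}}$ is randomly permuted and $t^r$ is fixed, $T_{t^2,\ldots,t^{r-1}}(t^r)$ is uniformly distributed over the $m$ elements of $T_{t^2,\ldots,t^{r-1}}$, matching the rule used to sample $o^{(r-1)}$. Composing these two steps extends the distributional equality to level $r$.

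The argument is almost entirely bookkeeping; the only step that one must be slightly careful about is verifying that the random permutation applied in the precomputation phase commutes with the sampling of the children in the tree, so that ``pick the $t^{r}$-th entry after permuting'' is distributionally identical to ``pick a uniformly random element,'' conditional on the (unordered) set. I expect no substantive obstacle beyond being explicit about the conditioning at each level of the induction.
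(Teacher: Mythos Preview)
Your proposal is correct and follows essentially the same reasoning as the paper. The paper does not give a formal proof of this observation; it justifies it informally in the text preceding the statement by noting that the random permutation of each set ensures that, for fixed indices $t^2,\ldots,t^r$, the element $T_{t^2,\ldots,t^i}(t^{i+1})$ is uniformly distributed in its set, exactly as $o^{(i)}$ is in \cref{alg:streaming_adversary}. Your inductive argument is the natural formalization of that sentence, and the care you take with the joint conditioning at each level is appropriate.
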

\begin{figure}[t]
\begin{center}
    \begin{tikzpicture}
        \node (s) at (0,0) {\small $T$}; 
        \node (s1) at (-4, -1.5) {\small $T_1$};
        \node (s2) at (0, -1.5) {\small $T_2$};
        \node (s3) at (4, -1.5) {\small $T_3$};
        
        \node (s11) at (-5, -3) {\small $T_{1,1}$};
        \node (s12) at (-4, -3) {\small $T_{1,2}$};
        \node (s13) at (-3, -3) {\small $T_{1,3}$};

        \node (s21) at (-1, -3) {\small $T_{2,1}$};
        \node (s22) at (-0, -3) {\small $T_{2,2}$};
        \node (s23) at (1, -3) {\small $T_{2,3}$};

        \node (s31) at (3, -3) {\small $T_{3,1}$};
        \node (s32) at (4, -3) {\small $T_{3,2}$};
        \node (s33) at (5, -3) {\small $T_{3,3}$};
        
        \draw (s) edge (s1) edge[ultra thick] (s2) edge (s3);
        \draw (s1) edge (s11) edge (s12) edge (s13);
        \draw (s2) edge (s21) edge (s22) edge[ultra thick] (s23);
        \draw (s3) edge (s31) edge (s32) edge (s33);
    \end{tikzpicture}
    \end{center}
    \caption{A tree representation of the sets computed during precomputation for $m = p =3$.}
    \label{fig:precomputation}
\end{figure}
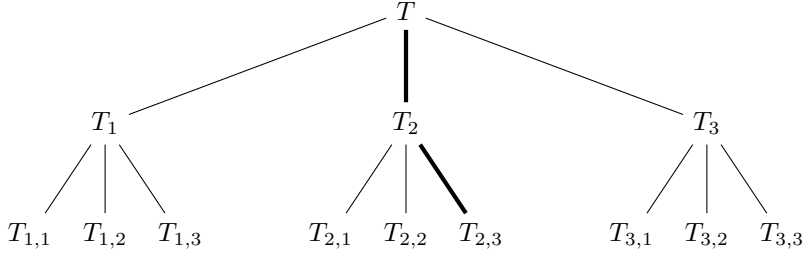

The reason the players do this precomputation  is that, after they have commonly agreed on the tree-structure of sets (which  can be generated using the public coins), it requires little communication to decide on a ``hard'' instance generated from the same distribution as \cref{alg:streaming_adversary}. Indeed, Player $r$ only needs to know $t^2, \ldots, t^{r}$ ($r\log_2 m$ bits of information) in-order to know the set $T_{t^2, \ldots, t^r}$. 

\paragraph{The messages of the players.}
After generating the (common) sets in the precomputation phase using the public coins, the players now proceed as follows.  The first player receives as input $x^1$ and simulates $ALG$ on the subset $\{T(j) \mid x^1_j = 1\}$ of $T$ corresponding to the $1$-bits. These elements are given to $ALG$ as a stream in any order. Player $1$ then sends to Player $2$ the message containing the state of $ALG$ after processing this stream of elements. 

The second player receives input $x^2, t^2$ and initializes \sALG with the state received from the first player. Then, the elements  $\{T_{t^2}(j) \mid x^2_j = 1 \}$  of $T_{t^2}$ that correspond to $1$-bits of $x^2$ are streamed to \sALG in any order.  Player $2$ sends to Player $3$ a message containing the state of $\sALG$  after processing these elements and the index $t^2$. 
Player $r,$ for $r = 3, \ldots, p-1$, proceeds similarly to Player $2$: given input $x^r, t^r$, \sALG is first initialized with the state received from the previous player, and then the elements $\{T_{t^2, \ldots, t^r}(j) \mid x^r_j = 1\}$ are streamed to \sALG in any order. Notice that Player $r$ knows $t^2, \ldots, t^{r-1}$ from the message of the previous player,  $t^r, x^r$ from the input, and $T_{t^2, \ldots, t^r}$ from the precompuation phase, and so the set $\{T_{t^2, \ldots, t^r}(j) \mid x^r_j = 1\}$ can be computed. Finally, Player $r$ sends to Player $r+1$ a message consisting of the indices $t^2, \ldots, t^r$ and the memory state of \sALG. 

The final player initializes \sALG with the received state and asks \sALG to return an independent set. If the independent set consists of at least two elements, Player $p$ outputs ``$1$-case'', and otherwise, the output is ``$0$-case''.
\begin{protocol}[t]
\caption{Reduction from \chainPm to {SMkM} } \label{alg:chainreductionmultiple}
\textbf{Precomputation}
\begin{algorithmic}[1]
    \State Let $T$ be a uniformly random subset of  $\cN = [p]^k$ of size $m$.
    \State Order the elements of $T$ randomly, and let $T(j)$ denote the $j$-th element.
    \For{$r = 2, \ldots,p-1$ and $t^2, \ldots, t^{r} \in [m]$}
    \State Identify $o^{(1)} = T(t^2)$,  $o^{(2)}=T_{t^2}(t^3)$, \ldots, $o^{(r-1)}=T_{t^2, \ldots, t^{r-1}}(t^r)$.
	\State Let $T_{t^2, \ldots, t^r}$ be a uniformly random subset of $\{u \in \cN \mid  \forall_{1 \leq i < r, 1 \leq j \leq k}\; o^{(i)}_{j} \neq u_{j}\}$ of size $m$.
    \State Order the elements of $T_{t^2, \ldots, t^r}$ randomly, and let $T_{t^2, \ldots, t^r}(j)$ denote the $j$-th element.
    \EndFor
\end{algorithmic}

\textbf{Player $P_r$'s Algorithm for $r=1, \ldots, p-1$}
\begin{algorithmic}[1]
    \State Initialize $ALG$ with the received memory state  (or initial state if first player).
    \State Simulate $ALG$ on the elements $T_r = \{T_{t^2, \ldots, t^{r}}(j) \mid j\in [n] \text{ with }x^{t}_j = 1\}$ given in any order.
    \State Send to $P_{r+1}$ the values $t^2, t^3, \ldots, t^r$ and the memory state of ALG. 
\end{algorithmic}
\textbf{Player $P_p$'s Algorithm}
\begin{algorithmic}[1]
    \State If $ALG$ with the received memory state returns an independent set of size at least $2$, output ``$1$-case''; otherwise, output ``$0$-case''.
\end{algorithmic}
\end{protocol}

\subsubsection{Analysis}
The messages sent by the players contain the memory state of \sALG and at most $p-2$ indices $t^2, \ldots, t^{p-1}\in [m]$. The memory state of \sALG is at most $M$ bits by assumption, and each index requires $\log_2 m$ bits. The communication complexity of the protocol is, therefore, upper bounded by $M +  p \log_2 m$. 

To analyze the success probability of the protocol, we have the following lemma.
\begin{lemma}
    The instance that the players stream to $\sALG$ satisfies the following:
    \begin{itemize}
        \item In the $1$-case, the stream contains $p-1$ independent elements.
        \item In the $0$-case, with probability at least $2/3$, any two elements in the stream are dependent.
    \end{itemize}
    \label{lemma:prot_correctness}
\end{lemma}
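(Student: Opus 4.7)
The plan is to relate the stream produced by the protocol to the output of \cref{alg:streaming_adversary} via \cref{obs:same_dist}, and then handle the two cases separately.

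First, I would condition on any fixed values of the indices $t^2,\ldots,t^p \in [m]$. With these indices fixed, only the sets $T,T_{t^2},T_{t^2,t^3},\ldots,T_{t^2,\ldots,t^{p-1}}$ (and their orderings) are relevant, and by \cref{obs:same_dist} they are distributed identically to the sets $S_1,S_2,\ldots,S_{p-1}$ produced by \cref{alg:streaming_adversary}, together with hidden elements $o^{(i)} = T_{t^2,\ldots,t^i}(t^{i+1})$ for $i\in [p-1]$. Thus \cref{lemma:dist_success} can be applied to this distribution.

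For the $1$-case, recall that $x^i_{t^{i+1}} = 1$ for every $i \in [p-1]$. Hence, when Player $P_i$ streams the subset $\{T_{t^2,\ldots,t^i}(j) : x^i_j = 1\}$ to \sALG, the element indexed by $j = t^{i+1}$ is included, i.e., $o^{(i)}$ is in the stream for each $i$. Since $\{o^{(1)},\ldots,o^{(p-1)}\}$ is a common independent set by construction of \cref{alg:streaming_adversary}, the stream contains $p-1$ independent elements.

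For the $0$-case, $x^i_{t^{i+1}} = 0$ for every $i$, so none of the hidden elements $o^{(1)},\ldots,o^{(p-1)}$ appear in the stream; every streamed element belongs to $\bigcup_{r}(T_{t^2,\ldots,t^r}\setminus\{o^{(r)}\})$. By \cref{lemma:dist_success}, with probability at least $1 - \binom{(p-1)m}{2}e^{-k/p}$ any two such elements are dependent. It remains to verify that this probability is at least $\sfrac{2}{3}$ given the choice $p \in [3\alpha,3\alpha+1]$ and $m \in [e^{k/(8\alpha)},2e^{k/(8\alpha)}]$ under the assumption $\alpha \leq k/(32\ln k)$. The estimate I expect to carry out is
\[
\binom{(p-1)m}{2}e^{-k/p}
\leq (pm)^2\,e^{-k/p}
\leq O(\alpha^2)\cdot e^{k/(4\alpha) - k/(3\alpha+1)}
= O(\alpha^2)\,e^{-\Omega(k/\alpha)},
\]
and then use $\alpha \leq k/(32\ln k)$ to show $e^{-\Omega(k/\alpha)}$ is polynomially small in $k$, dominating the $\alpha^2$ factor.

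The main obstacle is this last calculation: the exponents $k/(4\alpha)$ and $k/(3\alpha+1)$ nearly cancel, so I need to be careful about constants to extract a genuine negative exponent $-\Omega(k/\alpha)$ from the difference, and then combine it with $\alpha^2 \leq [k/(32\ln k)]^2$ to conclude a bound below $\sfrac{1}{3}$. Everything else is straightforward bookkeeping following from \cref{obs:same_dist} and \cref{lemma:dist_success}.
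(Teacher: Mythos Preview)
Your plan matches the paper's proof: it too applies \cref{obs:same_dist} to identify the streamed sets with the output of \cref{alg:streaming_adversary}, argues in the $1$-case that each $o^{(i)}=T_{t^2,\ldots,t^i}(t^{i+1})$ is streamed and that $\{o^{(1)},\ldots,o^{(p-1)}\}$ is independent by construction, and in the $0$-case invokes \cref{lemma:dist_success}. For the final numerical bound the paper takes a slightly different route, using $p\le m$ to bound $\binom{pm}{2}\le m^4$ and then $p\le 4\alpha$, $m\le 2e^{k/(8\alpha)}$ to obtain $m^4e^{-k/p}\le 16\,e^{-k/(8\alpha)}\le 1/3$, which sidesteps the near-cancellation of exponents you flagged.
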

\begin{proof}
    In the $1$-case, we have $x^1_{t^2} = x^2_{t^3} = \ldots = x^{p-1}_{t^p} =  1$ and so the elements $T(t^2)$, $T_{t^2}(t^3)$, \ldots, $T_{t^2, \ldots, t^{p-1}}(t^p)$ are part of the stream. By definition, they form an independent set consisting of $p-1$ elements.
    
    In the $0$-case, we have that $x^1_{t^2} = x^2_{t^3} = \ldots = x^{p-1}_{t^p} =  0$ and so any two elements $e, f$ in the stream belong to the set $S_1 \cup S_2 \ldots, S_{p-1} \setminus \{o^{(1)}, o^{(2)},  \ldots, o^{(p-1)}$\}, where $S_1  = T, o^{(1)} = T(t^2)$ and $S_j = T_{t^2, \ldots, t^j}, o^{(j)} = T_{t^2, \ldots, t^j}(t^{j+1})$ for $j = 2, \ldots, p-1$. By \cref{obs:same_dist}, we can apply \cref{lemma:dist_success} to obtain that, with probability at least $1-\binom{p m}{2} e^{-k/p}$, any two elements in the stream are dependent. The statement now follows  since the selection of our parameters $p, m$ and $k$ implies
    \[
        \binom{p m}{2} e^{-k/p}    \leq m^4 e^{-k/p}  \leq 16e^{k/(8\alpha)} \cdot e^{-k/(4\alpha)} = 16 e^{-k/(8\alpha)}\leq 1/3\enspace, 
    \]
    where for the first inequality we used $p\leq m$, and for the second inequality we used that $p \leq  4\cdot \alpha$ and $m \leq 2e^{k/(8\alpha)}$. The last inequality holds for $k \geq 3$ (the case of $k = 2$ can be ignored because it implies $k / (32 \ln k) < 1$, which makes \cref{thm:hardness_multiple_streaming} trivial).
\end{proof}

We now argue how the above lemma implies that \cref{alg:chainreductionmultiple} has a success probability of $2/3$ in both the  $0$-case and $1$-case. For $0$-case instances, we have with probability $2/3$ that any two elements are dependent. Hence, with that probability,  there is no way for \sALG to return an independent set with more than one element. Thus, the output of Player $p$ is correct with probability at least $2/3$ in the $0$-case. In the $1$-case, the stream always contains a solution of value $p-1$. By the assumption that $\sALG$ returns an $\alpha$-approximate solution with probability at least $2/3$, $\sALG$ returns an independent set of size at least $(p-1)/ \alpha$ with probability at least $2/3$. This implies that Player $p$ is correct in this case with probability $2/3$ since
\[
    (p-1) /\alpha    \geq (3\cdot\alpha-1) /\alpha  \geq  2\enspace. 
\]
Using \sALG we have, thus, devised a protocol for \chainPm that is correct with probability $2/3$ and has a communication complexity that is upper bounded by $M + p \log_2 m$. By \cref{thm:pindex_hardness}, we thus must have $(M + p \log_2 m) \geq m/(36p^2)$. Now using that $p \leq k/8$ and    $e^{k /(8\alpha)}  \leq m$, we get 
\[
    M \geq \frac{m}{36 p^2} - p \log_2 n 
    \geq \frac{64}{36} \frac{m}{ k^2} - k^2
    \geq \frac{64}{36} \frac{e^{k/(8\alpha)}}{k^2} - k^2\enspace,  
    \]
which is $\Omega\left(e^{k/(8\alpha)}/{k^2}\right)$ since by assumption on $\alpha$ we have $e^{k/(8\alpha)} \geq k^4$. We have thus proved that the memory usage  $M$ of \sALG must be at least $\Omega\left(e^{k/(8\alpha)}/{k^2}\right)$ as required by \cref{thm:hardness_multiple_streaming}.

\section{\texorpdfstring{$2$}{2}-Approximation Data Stream Algorithm} \label{sec:multiple_matroids}

We begin this section with a more detailed statement of \cref{th:matroid_intersection_alg_short}. Throughout this section, the $\tilde{O}$ notation hides a $\log n$ factor. If one makes (as is done in \cref{sec:introduction}) the simplifying assumption that an element of $\cN$ can be stored in $O(1)$ space, then $\tilde{O}$ can be replaced with $O$ throughout the section.

\thMatroidIntersectionAlgShort*
\noindent \textbf{Remark:} If the rank $\rho_I$ of the \emph{matroid intersection} (i.e., the maximum size of a common independent set) is known, then one can truncate all the matroid constraints to this rank before executing the algorithm whose existence is guaranteed by \cref{th:matroid_intersection_alg_short}. This does not affect the approximation ratio guaranteed by the algorithm, but makes the space complexity depend on $\rho_I$ instead of $\rho$.

To make our main new ideas easier to understand, we prove in this section a simplified version of \cref{th:matroid_intersection_alg_short}, which appears below as \cref{th:matroid_intersection_alg_simplified}. \Cref{th:matroid_intersection_alg_simplified} assumes access to two values $\tau$ and $|OPT|$ that are not usually available, where the set $OPT$ represents an arbitrary (but fixed) optimal solution for the problem. A proof of the original \cref{th:matroid_intersection_alg_short} can be found in \cref{apx:matroid_intersection_alg}. In this proof, we compensate for the lack of knowledge of $\tau$ and $|OPT|$ using a standard technique originally due to~\cite{badanidiyuru2014streaming} (our implementation of this technique adopts some details from~\cite{haba2020streaming}).

\begin{theorem} \label{th:matroid_intersection_alg_simplified}
For every $\eps \in (0, 1/7)$, there exists a $(2 + O(\eps))$-approximation data stream algorithm for {\SMkM} that has a space complexity of $\tilde{O}((\frac{k^{2\rho k}(\log \rho + \log \eps^{-1})}{\eps^3})^{\rho + 1}\rho)$ and assumes access to $|OPT|$ and a value $\tau \in [f(OPT), 3f(OPT)]$. If the objective function of {\SMkM} is guaranteed to be monotone, the space complexity of the algorithm can be improved to $\tilde{O}((\frac{k^{\rho k}(\log \rho + \log \eps^{-1})}{\eps})^{\rho + 1}\rho)$.
\end{theorem}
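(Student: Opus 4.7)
The plan is to design a data stream algorithm that maintains, throughout the stream, a modestly-sized collection of candidate common independent sets indexed by a combinatorial ``template'', and, after the stream ends, returns the best of these candidates. The approach will fuse the FPT-streaming template of Huang and Ward with the marginal-value thresholding idea of Badanidiyuru et al.\ as suggested in the introduction, plus a novel guessing layer to handle non-monotone $f$.

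First I would set a threshold $\theta := \eps \tau / \rho$. Since $|\OPT| \le \rho$ and $\tau \le 3 f(\OPT)$, dropping any arriving element whose marginal value to the partial solution it would join is smaller than $\theta$ costs at most $\rho \cdot \theta = O(\eps) \cdot f(\OPT)$ in total by submodularity, so such elements can be discarded safely. Next, to cope with the fact that a matroid-feasible swap may need to be made at several distinct positions of the hypothetical $\OPT$, I would index the algorithm's state by a \emph{template} $T \in [k]^{\rho k}$ which pre-commits, for each of the $\rho$ slots in a putative optimum and each of the $k$ matroids, to the identity of an exchange partner; the total number of templates is bounded by $k^{\rho k}$. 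For each template, the algorithm maintains one common independent set $S_T$ of size at most $\rho$, and on the arrival of an element $e$ iterates over all templates, performing for each $T$ the exchange dictated by $T$ whenever $e$ passes the threshold test $f(e \mid S_T) \ge \theta$ and the matroid-exchange test consistent with $T$. At the end, the algorithm returns $\argmax_{T} f(S_T)$. The streaming memory is dominated by storing $|\cT| \le k^{\rho k}$ many sets of size $\rho$, plus the bookkeeping needed to discretize the marginal gains, giving the stated bound in the monotone case.

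The core of the proof is an approximation lemma showing that there exists a template $T^\star \in \cT$ (namely the one induced by $\OPT$ itself) such that, at the end of the stream,
\[
f(\OPT) \;\le\; 2\,f(S_{T^\star}) + O(\eps)\,\tau.
\]
To establish this I would split $\OPT$ into three classes: (i) elements already in $S_{T^\star}$; (ii) elements rejected by the threshold, which together contribute at most $\rho\cdot\theta = O(\eps)\tau$ by submodularity; and (iii) elements rejected by the matroid-exchange test for $T^\star$. Class (iii) is charged to elements of $S_{T^\star}$ that blocked them via a submodular exchange argument in the style of Chakrabarti and Kale; because the template $T^\star$ fixes the blocking in all $k$ matroids simultaneously in a way consistent with a matroid-intersection exchange chain, each charged element of $S_{T^\star}$ is charged at most once and carries marginal value at least that of the $\OPT$-element it is charging. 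Combined with submodularity, this produces the $2f(S_{T^\star})$ upper bound and hence a $(2+O(\eps))$-approximation. For the non-monotone case, I would insert an additional guessing layer of size $\poly(\eps^{-\rho})$ that pre-commits, for each of the $\rho$ slots of $\OPT$, to a discretized marginal-value bucket, so that the algorithm can detect and ignore $\OPT$-elements whose contribution is dominated (or made negative) by later-arriving elements; this is the ``interesting novel guessing scheme'' alluded to in the introduction, and accounts for the extra $\eps^{-\rho}$ and $\eps^{-3}$ factors in the non-monotone space bound.

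The main obstacle will be the approximation lemma, specifically the charging step for class (iii). The difficulty is inherently $k$-matroid: an excluded $\OPT$-element can be simultaneously blocked in all $k$ matroids, and a naive charging gives only a $(k+1)$-approximation. Producing a one-to-one charging with marginal-value dominance, which is what the factor $2$ relies on, requires that the template $T^\star$ encode a coherent exchange chain across all $k$ matroids at once, and that the algorithm's swap rule respect this chain at every update. Verifying that such a template exists and is always reachable in the stream is the crux of the argument. A secondary obstacle is showing, in the non-monotone case, that the guessing layer adds only a $\poly(\eps^{-\rho})$ factor rather than a fresh $k^{\rho k}$ factor; this will rely on the observation that the discretized profile guess can be reused across all templates simultaneously.
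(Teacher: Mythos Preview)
Your proposal has a genuine gap at exactly the point you flag as the main obstacle, and the paper's proof takes a structurally different route that sidesteps that obstacle entirely.

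\textbf{The gap in your charging argument.} You propose to obtain the factor $2$ via a Chakrabarti--Kale style exchange charging, asserting that because the template $T^\star$ ``fixes the blocking in all $k$ matroids simultaneously,'' each element of $S_{T^\star}$ is charged at most once. But this is precisely the step that fails for $k$ matroids: a rejected $\OPT$-element may be spanned in each $M_i$ by a \emph{different} element of $S_{T^\star}$, and no static pre-committed template over $[k]^{\rho k}$ is known to force a one-to-one charging with marginal-value dominance. Your template indexes $\rho k$ coordinates, but an exchange chain across $k$ matroids for a single $\OPT$-element can have unbounded length, and there is no matroid-intersection exchange lemma that produces the one-to-one matching you need. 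Without that, the argument collapses to the usual $O(k)$ bound.

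\textbf{What the paper actually does.} The paper avoids any charging. It introduces an auxiliary online problem, \textsc{First Element} \SMkM: mark a set $U$ of stream elements so that, for \emph{every} common independent set $O$ with first-arriving element $o_1$, some $u\in U$ arrives weakly before $o_1$ and satisfies $(O\setminus\{o_1\})\cup\{u\}$ is a common independent set. An algorithm \texttt{FEAlg} solving this with $|U|=O(k^{\rho k-1})$ is built by maintaining a branching tree of state tuples $(s_1,\dots,s_k)$ with $s_i\in\cI_i$; this is where the $k^{\rho k}$ factor actually originates, not from a template over slots of $\OPT$. The main algorithm then runs in $|\OPT|$ \emph{sequential phases}: in phase $j$ it guesses the discretized bucket of $f(o_j\mid S_{j-1})$, feeds only elements in that bucket to a fresh \texttt{FEAlg} instance on the contracted matroids $M_i/S_{j-1}$, and guesses which marked element $u_j$ keeps $S_{j-1}\cup\{o_{j+1},\dots,o_{|\OPT|}\}\cup\{u_j\}$ independent. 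The key invariant (proved by induction) is that $S_j\cup\{o_{j+1},\dots,o_{|\OPT|}\}$ is always a common independent set and $u_j$ arrives weakly before $o_j$; hence phase $j+1$ still sees $o_{j+1}$ in its remaining stream. Since $u_j$ lies in the same marginal bucket as $o_j$, one has $f(u_j\mid S_{j-1})\ge f(o_j\mid S_{j-1})/(1+\eps)$, and a telescoping sum (no charging) gives $(2+O(\eps))\,f(S_{|\OPT|})\ge f(\OPT)$.

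\textbf{On the non-monotone case.} Your description (``detect and ignore $\OPT$-elements whose contribution is dominated or made negative'') is not what the paper does. The paper runs $\lceil\eps^{-1}\rceil$ parallel copies of \texttt{FEAlg} in each phase to produce $\lceil\eps^{-1}\rceil$ \emph{distinct} candidates $u_{j,1},\dots,u_{j,\lceil\eps^{-1}\rceil}$, and either sets $u_j=o_j$ (if $o_j$ is marked) or picks $u_j$ uniformly at random among them. This guarantees every non-$\OPT$ element lies in $S_{|\OPT|}$ with probability at most $\eps$, so the sampling lemma (Buchbinder et al.) gives $\bE[f(\OPT\cup S_{|\OPT|})]\ge(1-\eps)f(\OPT)$, recovering the $2+O(\eps)$ ratio. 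The extra $\eps^{-2}$ in the space bound comes from guessing, per phase, both the index of $u_j$ and the index of the last element read among $O(\eps^{-1}k^{\rho k-1})$ marked elements.
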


The proof of \cref{th:matroid_intersection_alg_simplified} begins by defining an auxiliary problem. We say that a data stream algorithm marks online a set $U$ if, upon reading an element $u$ from the stream, the algorithm decides immediately (and without observing future elements of the stream) whether $u$ belongs to $U$. Given an instance of {\SMkM}, a data stream algorithm for the {\FESMkM} problem is required to mark online a set $U$ of elements of the stream that has the following property. For \emph{every} common independent set $O \subseteq \cN$, if $o_1$ is the first element of $O$ to arrive in the stream, then the set $U$ must include some element $u \in U$ such that
\begin{enumerate}[label=({A}\arabic*),leftmargin=2.5em]
	\item $u$ is either equal to $o_1$ or arrived before $o_1$ in the stream, and \label[property]{condition:u_weakly_before}
	\item $(O \setminus \{o_1\}) \cup \{u\}$ is a common independent set.
\end{enumerate}
We stress that the set $U$ should have the above properties with respect to all common independent sets $O \subseteq \cN$ at the same time. Additionally, to simplify statements like the one appearing in \cref{condition:u_weakly_before}, below we say that an element $u$ appears \emph{weakly before} (respectively, \emph{weakly after}) an element $v$ if $u$ is either equal to $v$ or appears before (respectively, after) $v$.

A trivial algorithm for {\FESMkM} is an algorithm that marks all elements of the stream as belonging to $U$. However, we are interested in algorithms for {\FESMkM} that mark only a (relatively) small number of elements. Specifically, in \cref{sec:first_element_SMkM} we use ideas from~\cite{huang2020fpt} to prove the following theorem.
\begin{restatable}{theorem}{thmFirstElement} \label{thm:first_element}
There exists an algorithm (named {\FEA}) for {\FESMkM} that marks $O(k^{\rho k - 1})$ elements as belonging to $U$, and has a space complexity of $\tilde{O}(k^{\rho k + 1}\rho)$.
\end{restatable}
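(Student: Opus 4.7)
The plan is to adapt the matroid-intersection kernelization ideas of Huang and Ward~\cite{huang2020fpt} to the streaming setting. \FEA will maintain a family $\mathcal{K}$ of common independent sets of size at most $\rho$, together with the marked set $U = \bigcup_{S \in \mathcal{K}} S$. Initially $\mathcal{K} = \{\emptyset\}$. Upon reading an element $v$, for every $S \in \mathcal{K}$ with $|S| < \rho$ for which $S \cup \{v\}$ is a common independent set, \FEA checks whether $\mathcal{K}$ already contains some $S \cup \{v'\}$ (with $v'$ arrived before $v$) whose \emph{signature} matches that of $S \cup \{v\}$; if not, it adds $S \cup \{v\}$ to $\mathcal{K}$ and marks $v$. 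The signature of $S \cup \{v\}$ will record, for each of the $k$ matroids $M_i$, the matroid ``position'' of $v$ relative to $S$---either $v$ is independent of $S$ in $M_i$, or $v$ lies in the span of $S$ in $M_i$ and one records a canonical element of $S$ that can be exchanged with $v$.

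The analysis has three parts. A counting argument on signatures shows at most $O(k^k)$ classes per extension step, and hence $|\mathcal{K}| = O(k^{k\rho})$; a sharper accounting at the empty-set starting step reduces the bound on $|U|$ to the stated $O(k^{\rho k - 1})$. The space bound $\tilde{O}(k^{\rho k + 1}\rho)$ then follows because each $S \in \mathcal{K}$ together with its signature requires $\tilde{O}(\rho k)$ bits. Correctness requires that for any common independent set $O = \{o_1, \ldots, o_\rho\}$ whose first-arriving element is $o_1$, the set $U$ contains some $u$ arriving weakly before $o_1$ with $(O \setminus \{o_1\}) \cup \{u\}$ a common independent set. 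The key observation is that when $o_1$ is read, \FEA processes the pair $(\emptyset, o_1)$: either $o_1$ introduces a fresh signature and is marked (so $u = o_1$ works), or $\mathcal{K}$ already contains $\{u\}$ for some earlier-marked $u$ whose signature matches that of $\{o_1\}$, in which case $u$ can substitute for $o_1$ in $O$ via a multi-matroid exchange.

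The main obstacle will be defining the signature to be simultaneously coarse enough to bound $|\mathcal{K}|$, fine enough to guarantee the multi-matroid exchange step, and efficiently computable from the information already stored by \FEA. My plan is to follow~\cite{huang2020fpt} by choosing, for each matroid independently, a canonical exchange witness analogous to those used in the deterministic rendering of the representative-sets construction, and then to verify via an iterated application of the matroid exchange axiom that signature-equivalence simultaneously preserves extensibility in all $k$ matroids.
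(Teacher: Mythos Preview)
Your signature definition is degenerate. You compute the signature of $S\cup\{v\}$ only when this set is a common independent set, so in every matroid $M_i$ the element $v$ is independent of $S$; the signature is therefore always the constant tuple ``independent in all $k$ matroids''. There is exactly one class per extension step, not $O(k^k)$. In particular, at level $S=\varnothing$ only the first non-loop element $u$ of the stream is ever marked there, and your correctness claim would then require $(O\setminus\{o_1\})\cup\{u\}$ to be a common independent set for this single fixed $u$ and \emph{every} common independent set $O$ whatsoever. That is false: take two rank-$2$ partition matroids on $\{a,b,c,d\}$ with blocks $\{a,c\},\{b,d\}$ in $M_1$ and $\{a,d\},\{b,c\}$ in $M_2$, stream order $a,c,d$, and $O=\{c,d\}$. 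Then $o_1=c$, the only element marked before $c$ is $u=a$, and $\{a,d\}\notin\cI_2$.

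The paper's algorithm is structurally different: it does \emph{not} store common independent sets. A state is a tuple $s=(s_1,\ldots,s_k)$ with $s_i\in\cI_i$ where the $s_i$ are allowed to differ across $i$. When an arriving $u$ satisfies $s_i+u\in\cI_i$ for all $i$, the state branches into $k$ children, the $i$-th obtained by adding $u$ to coordinate $s_i$ only; $u$ is marked whenever at least one state branches. The potential $\sum_s (k^{\rho k-\sum_i|s_i|}-1)/(k-1)$ gives the $O(k^{\rho k-1})$ bound on marks. Correctness comes from the invariant that some state $s$ always satisfies, for every $i$ and every $v\in s_i$, the relation $(O-o_1+v)\notin\cI_i$; this per-matroid blocking information is exactly what a signature relative to a common independent set cannot carry, and it is what forces $o_1$ (or an earlier valid substitute) to be marked. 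Any repair of your scheme will need to separate the $k$ matroids into distinct coordinates of the state rather than work with a single common independent set.
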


We now get to the proof of \cref{th:matroid_intersection_alg_simplified} for the case of a monotone objective function.
The algorithm that we use for proving \cref{th:matroid_intersection_alg_simplified} in this case is given as \cref{alg:monotone_SMkM}. In this algorithm we denote by $o_1, o_2, \dotsc, o_{|OPT|}$ the elements of $OPT$ in the order of their arrival. In iteration number $j$ of the algorithm, the algorithm aims to add to its solution either the element $o_j$ itself, or an appropriate replacement, which the algorithm does with the aid of {\FEA}. The pseudocode of \cref{alg:monotone_SMkM} requires guessing values and pieces of information that the algorithm has no access to. In general, this guessing must be done by trying all the possible options in parallel, and we discuss this in more detail in the proof of \cref{obs:monotone_space_complexity}. 

\begin{algorithm}
\caption{Algorithm for {\SMkM} with a Monotone Objective Function} \label{alg:monotone_SMkM}
\begin{algorithmic}[1]
\State $S_0 \gets \varnothing$.
\For{$j = 1$ \textbf{to} $|OPT|$}
	\State \parbox[t]{150mm}{Guess an integer $0 \leq i(j) \leq \lceil \log_{1 + \eps} (j^2/\eps) \rceil$ such that $f(o_{j} \mid S_{j - 1}) \in (\tau / (1 + \eps)^{i(j) + 1}, \tau / (1 + \eps)^{i(j)}]$. If there is no such integer $i(j)$, guess $i(j) = \infty$.\strut} \label{line:guess_i}
	\If{$i(j) \neq \infty$}
		\State \parbox[t]{145mm}{Initialize an instance $ALG_j$ of {\FEA} that uses $g(T) = f(T \mid S_{j-1})$ as the objective function, and the matroids $M_1 / S_{j-1}, M_2 / S_{j-1}, \dotsc, M_k / S_{j-1}$ as the constraints.\footnotemark\strut}
		\For{every element $u$ that arrives in the input stream} \label{line:elements_reading}
			\If{$f(u \mid S_{j - 1}) \in (\tau / (1 + \eps)^{i(j) + 1}, \tau / (1 + \eps)^{i(j)}]$}
				\State Feed $u$ to $ALG_j$.
				\If{$ALG_j$ marks $u$}
					\State \parbox[t]{128mm}{Guess whether $u$ is the first element marked by $ALG_j$ such that $S_{j - 1} \cup \{o_{j + 1}, o_{j + 2}, \dotsc, o_{|OPT|}\} \cup \{u\}$ is a common independent set. If this is the case, denote $u$ by $u_j$ and exit the loop starting on \cref{line:elements_reading}.\strut} \label{line:u_j_choose}
				\EndIf
			\EndIf
		\EndFor
		\State Set $S_{j + 1} \gets S_{j} \cup \{u_{j}\}$.
	\Else
		\State Set $S_{j + 1} \gets S_{j}$.
	\EndIf
\EndFor
\State \Return{$S_{|OPT|}$}.
\end{algorithmic}
\end{algorithm}

\footnotetext{The operator `$/$' represents matroid contraction here. For example, $M_1 / S_{j - 1}$ is a matroid over the ground set $\cN \setminus S_{j - 1}$ in which a set $T \subseteq \cN \setminus S_{j - 1}$ is independent if and only if $T \cup S_{j - 1}$ is independent in $M_1$.}The heart of the analysis of \cref{alg:monotone_SMkM} is the following lemma, which shows that \cref{alg:monotone_SMkM} is able to select an element $u_j$ in every relevant iteration. This is important since the element $u_j$ is a good replacement for $o_j$ both in terms of the matroid constraints (due to \cref{line:u_j_choose} of the algorithm) and in terms of the objective function (since its marginal contribution is similar to the one of $o_j$).
\begin{lemma} \label{lem:u_k_selection}
For every integer $1 \leq j \leq |OPT|$, the set $S_j \cup \{o_{j + 1}, o_{j + 2}, \dotsc, o_{|OPT|}\}$ is a common independent set. Furthermore, if $i(j) \neq \infty$, then \cref{alg:monotone_SMkM} selects some element as $u_j$, and this element appears weakly before $o_j$ in the input stream.
\end{lemma}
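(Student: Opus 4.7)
The plan is to prove the claim by induction on $j$, extending it to also include the base case $j = 0$, where the statement reduces to $S_0 \cup OPT = OPT$ being a common independent set, which holds by definition of $OPT$. I would then carry out the inductive step from $j - 1$ to $j$ as follows.

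If $i(j) = \infty$, the algorithm sets $S_j = S_{j-1}$, and $S_j \cup \{o_{j+1}, \dots, o_{|OPT|}\}$ is a subset of $S_{j-1} \cup \{o_j, o_{j+1}, \dots, o_{|OPT|}\}$, which is common independent by the inductive hypothesis. In the interesting case $i(j) \neq \infty$, I would invoke \cref{thm:first_element} on $ALG_j$, taking $O = \{o_j, o_{j+1}, \dots, o_{|OPT|}\}$ as the candidate common independent set in the contracted matroids $M_1 / S_{j-1}, \dots, M_k / S_{j-1}$. This $O$ is common independent in the contracted matroids by the inductive hypothesis, and the guess of $i(j)$ places $f(o_j \mid S_{j-1})$ in the prescribed bucket, so $o_j$ is fed to $ALG_j$. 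Because the elements of $OPT$ are indexed by arrival order, $o_j$ remains the first element of $O$ to arrive in the filtered stream of $ALG_j$. Hence \cref{thm:first_element} guarantees that $ALG_j$ marks some element $u$ weakly before $o_j$ for which $(O \setminus \{o_j\}) \cup \{u\} = \{o_{j+1}, \dots, o_{|OPT|}, u\}$ is common independent in the contracted matroids, equivalently $S_{j-1} \cup \{o_{j+1}, \dots, o_{|OPT|}, u\}$ is common independent in the original matroids. This shows that a valid candidate exists for the guess on \cref{line:u_j_choose}, so $u_j$ is selected; moreover, since $u_j$ is by definition the first marked element with the required property, it arrives no later than $u$ and therefore weakly before $o_j$. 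The defining property of $u_j$ then yields that $S_j \cup \{o_{j+1}, \dots, o_{|OPT|}\}$ is common independent, closing the induction.

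The main subtlety to spell out carefully is this application of \cref{thm:first_element} to $ALG_j$, whose stream is a filtered subsequence of the original and whose matroids are contractions of the original ones. I need to verify that $o_j$ is truly the first element of $O$ in this filtered stream --- which holds because $o_j$ survives the filtering by the bucket condition, and any other $o_i \in O$ (necessarily with $i > j$) arrives after $o_j$ in the original stream, and hence also in the filtered one --- and that ``weakly before $o_j$ in the filtered stream'' coincides with ``weakly before $o_j$ in the original stream,'' since the filtering preserves relative order. Everything else is direct bookkeeping against the inductive hypothesis.
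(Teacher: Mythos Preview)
Your inductive skeleton is right, but the invocation of \cref{thm:first_element} has a genuine gap. You apply the {\FESMkM} guarantee to $ALG_j$ with $O = \{o_j, o_{j+1}, \dotsc, o_{|OPT|}\}$, but the stream that $ALG_j$ actually receives consists only of elements passing the bucket filter $f(u \mid S_{j-1}) \in (\tau/(1+\eps)^{i(j)+1}, \tau/(1+\eps)^{i(j)}]$. You correctly check that $o_j$ passes this filter, but there is no reason $o_{j+1}, \dotsc, o_{|OPT|}$ should: the bucket $i(j)$ was chosen to match $f(o_j \mid S_{j-1})$, not their marginals. Hence $O$ need not be a subset of $ALG_j$'s ground set, and \cref{thm:first_element} does not apply directly. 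The paper handles this by introducing a hypothetical stream $\cI$ that agrees with the filtered stream on every element strictly before $o_j$ but then contains \emph{all} elements weakly after $o_j$ (so that $O \subseteq \cI$); it applies \cref{thm:first_element} on $\cI$ to obtain a suitable marked element $u'_j$ weakly before $o_j$, and then uses the fact that {\FEA} marks online to argue that $u'_j$ is also marked in the actual run, since the actual stream and $\cI$ coincide on the prefix up to $u'_j$.

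A second, smaller omission: you assert that $o_j$ is fed to $ALG_j$ because it satisfies the bucket condition, but you also need that iteration $j$ begins early enough to read $o_j$ at all. Iteration $j$ starts at the element following $u_{j'}$ for the largest $j' < j$ with $i(j') \neq \infty$ (or at the stream's start if none exists), so you must invoke the ``weakly before $o_{j'}$'' clause of the induction hypothesis to conclude that $u_{j'}$ is strictly before $o_j$. The paper makes this explicit via its element $u_{\text{first}}$; without it, the inductive use of the ``weakly before'' part of the lemma is never actually exercised.
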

\begin{proof}
We prove the lemma by induction on $j$, and the induction hypothesis that we use is that the lemma holds for all lower values of $j$. This hypothesis immediately implies that $S_{j-1} \cup \{o_{j}, o_{j + 1}, \dotsc, o_{|OPT|}\}$ is a common independent set.\footnote{For $j = 1$ this does not follow from the induction hypothesis, but it is still true since $S_{j-1} \cup \{o_{j}, o_{j + 1}, \dotsc, o_{|OPT|}\} = OPT$ in this case.} The lemma now follows in the case of $i(j) = \infty$ because in this case we have $S_j = S_{j - 1}$, and therefore, also
\[
	S_j \cup \{o_{j + 1}, o_{j + 2}, \dotsc, o_{|OPT|}\}
	\subseteq
	S_j \cup \{o_{j}, o_{j + 1}, \dotsc, o_{|OPT|}\}
	=
	S_{j-1} \cup \{o_{j}, o_{j + 1}, \dotsc, o_{|OPT|}\}
	\enspace,
\]
which guarantees that $S_j \cup \{o_{j + 1}, o_{j + 2}, \dotsc, o_{|OPT|}\}$ is a common independent set.

It remains to consider the case of $i(j) \neq \infty$. Let $u_{\text{first}}$ be the first element in the remaining data stream at the beginning of iteration number $j$ of \cref{alg:monotone_SMkM}. One can note that $u_{\text{first}}$ must be either the first element of the original input stream or the element following $u_{j'}$ for some integer $1 \leq j' < j$ for which $i(j') \neq \infty$. This observation implies that $o_j$ appears weakly after $u_{\text{first}}$ in the input stream because the induction hypothesis guarantees that $u_{j'}$ appears weakly before $o_{j'}$ in the stream. At this point we would like to argue that $ALG_j$ is guaranteed to pick an element $u'_j$ that appears weakly before $o_j$ and obeys the condition on \cref{line:u_j_choose} of \cref{alg:monotone_SMkM}, and, therefore, can be marked as $u_j$. Unfortunately, this does not follow immediately from \cref{thm:first_element} since the input stream fed to $ALG_j$ varies based on the output this algorithm.

To solve the above difficulty, we need to consider an input stream $\cI$ that is a sub-stream of the original input stream. Specifically, an element $u$ belongs to $\cI$ if (1) it appears weakly after $o_j$, or (2) it appears weakly after $u_{\text{first}}$ and also obeys $f(u \mid S_{j - 1}) \in (\tau / (1 + \eps)^{i(j) + 1}, \tau / (1 + \eps)^{i(j)}]$. By the induction hypothesis, $\{o_j, o_{j + 1}, \dotsc, o_{|OPT|}\}$ is a subset of the stream $\cI$ that is independent in all the matroids $M_1 / S_{j - 1}, M_2 / S_{j - 1}, \dotsc, M_k / S_{j - 1}$. Therefore, given the input stream $\cI$, $ALG_j$ is guaranteed to mark some element $u'_j$ that arrives weakly before $o_j$ and also obeys that
\[
	S_{j - 1} \cup \{o_j, o_{j + 1}, \dotsc, o_{|OPT|}\} - o_j + u'_j
	=
	S_{j - 1} \cup \{o_{j + 1}, \dotsc, o_{|OPT|}\} + u'_j
\]
is a common independent set. In reality, $ALG_j$ does not get the stream $\cI$. However, since $ALG_j$ makes its decisions online, it will mark $u'_j$ if given any prefix of $\cI$ that includes $u'_j$. This implies that $u'_j$ is marked by $ALG_j$ unless some previous element is chosen as $u_j$. Therefore, \cref{alg:monotone_SMkM} is guaranteed to select some element as $u_j$, and this element appears weakly before $u'_j$, and therefore, also weakly before $o_j$.

To complete the proof of the lemma, we note that, by the definition of $u_j$,
\[
	S_{j - 1} \cup \{o_{j + 1}, o_{j + 2}, \dotsc, o_{|OPT|}\} + u_j
	=
	S_j \cup \{o_{j + 1}, o_{j + 2}, \dotsc, o_{|OPT|}\}
\]
is a common independent set.
\end{proof}

Using the last lemma, we can now prove the approximation ratio of \cref{alg:monotone_SMkM}.
\begin{corollary} \label{cor:approximation_monotone}
\cref{alg:monotone_SMkM} is a $(2 + O(\eps))$-approximation algorithm that outputs a set $S_{|OPT|}$ that is a common independent set.
\end{corollary}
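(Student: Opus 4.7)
The plan is to combine the structural guarantee from \cref{lem:u_k_selection} with a telescoping submodularity argument that compares $f(S_{|OPT|})$ to $f(OPT)$ iteration by iteration. First, \cref{lem:u_k_selection} directly yields that $S_{|OPT|} \cup \varnothing = S_{|OPT|}$ is a common independent set, since the lemma applies with $j = |OPT|$. So only the approximation ratio remains.

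Next, I would partition the iterations into two types: those with $i(j)\neq \infty$ (``contributing'') and those with $i(j) = \infty$ (``skipped''). For a contributing iteration, the guess in \cref{line:guess_i} ensures $f(o_j\mid S_{j-1}) \leq \tau/(1+\eps)^{i(j)}$, whereas the element $u_j$ chosen by $ALG_j$ satisfies $f(u_j\mid S_{j-1}) > \tau/(1+\eps)^{i(j)+1}$ (since only such elements are fed to $ALG_j$). Consequently $f(o_j\mid S_{j-1}) \leq (1+\eps)\,f(u_j\mid S_{j-1})$, and summing over contributing iterations telescopes to
\[
\sum_{j\colon i(j)\neq\infty} f(o_j\mid S_{j-1}) \;\leq\; (1+\eps)\sum_{j\colon i(j)\neq\infty} f(u_j\mid S_{j-1}) \;=\; (1+\eps)\,f(S_{|OPT|}),
\]
by monotonicity (with $f(\varnothing) \geq 0$) and the fact that $S_{|OPT|}$ is precisely the union of the $u_j$'s.

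For a skipped iteration, I would use monotonicity and submodularity to observe $f(o_j\mid S_{j-1})\leq f(o_j)\leq f(OPT)\leq \tau$, which together with the failure of any valid $i(j)$ in the discrete range forces $f(o_j\mid S_{j-1}) \leq \tau/(1+\eps)^{\lceil \log_{1+\eps}(j^2/\eps)\rceil+1} \leq \eps \tau/j^2$. Summing, $\sum_{j\colon i(j)=\infty} f(o_j\mid S_{j-1}) \leq \eps\tau \sum_{j\geq 1} j^{-2} = O(\eps)\cdot\tau = O(\eps)\,f(OPT)$, using $\tau \leq 3f(OPT)$.

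Finally, by monotonicity and submodularity,
\[
f(OPT) \;\leq\; f(OPT\cup S_{|OPT|}) \;=\; f(S_{|OPT|}) + \sum_{j=1}^{|OPT|} f\bigl(o_j\mid S_{|OPT|}\cup\{o_1,\ldots,o_{j-1}\}\bigr) \;\leq\; f(S_{|OPT|}) + \sum_{j=1}^{|OPT|} f(o_j\mid S_{j-1}),
\]
where the last inequality uses submodularity and $S_{j-1}\subseteq S_{|OPT|}\cup\{o_1,\ldots,o_{j-1}\}$. Combining the two bounds on $\sum_j f(o_j\mid S_{j-1})$ yields $f(OPT) \leq (2+\eps)\,f(S_{|OPT|}) + O(\eps)\,f(OPT)$, and rearranging gives the desired $(2+O(\eps))$ ratio. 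The argument is almost entirely routine; the one place that requires care is justifying that skipped iterations really do have negligible marginals, which is where the quadratic tail $1/j^2$ in the range of $i(j)$ is used in an essential way to make the sum over $j$ converge.
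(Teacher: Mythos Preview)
Your proof is correct and follows essentially the same approach as the paper: the same per-iteration bounds (for contributing iterations, $f(o_j\mid S_{j-1})\leq(1+\eps)f(u_j\mid S_{j-1})$; for skipped iterations, $f(o_j\mid S_{j-1})\leq \eps\tau/j^2$) are combined with a submodularity-based telescoping and monotonicity to compare $f(S_{|OPT|})$ with $f(OPT)$. The only cosmetic difference is the direction of the telescoping: you add $o_1,\ldots,o_{|OPT|}$ to $S_{|OPT|}$ and use $S_{j-1}\subseteq S_{|OPT|}\cup\{o_1,\ldots,o_{j-1}\}$, whereas the paper telescopes $f(\{o_j,\ldots,o_{|OPT|}\}\cup S_{j-1})$ over $j$; both arrangements give the same $(2+O(\eps))$ bound.
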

\begin{proof}
The second part of the corollary follows immediately from \cref{lem:u_k_selection} by plugging in $j = |OPT|$. Therefore, we concentrate on proving the first part of the corollary.

First, we note that for every integer $1 \leq j \leq |OPT|$ the submodularity of $f$ guarantee that $f(o_j \mid S_{j - 1}) \leq f(\{o_j\}) \leq f(OPT) \leq \tau$ (the second inequality follows from the optimality of $OPT$). Therefore, if $i(j) = \infty$, then we must have
\[
	f(o_j \mid S_{j - 1})
	\leq
	\frac{\tau}{(1 + \eps)^{\lceil \log_{1 + \eps}(j^2/\eps) + 1 \rceil}}
	\leq
	\frac{\eps \tau}{j^2}
	\enspace,
\]
which implies
\[
	f(S_j) - f(S_{j - 1})
	=
	0
	\geq
	\frac{f(o_j \mid S_{j - 1})}{1 + \eps} - \frac{\eps \tau}{j^2}
	\enspace.
\]
Additionally, if $i(j) \neq \infty$, then the fact that $u_j$ is one of the elements fed into $ALG_j$ guarantees
\[
	f(S_j) - f(S_{j - 1})
	=
	f(u_j \mid S_{j - 1})
	\geq
	\frac{\tau}{(1 + \eps)^{i(j) + 1}}
	\geq
	\frac{f(o_j \mid S_{j - 1})}{1 + \eps}
	\enspace.
\]

Combining the last two inequalities, we get that for every integer $1 \leq j \leq |OPT|$,
\begin{align*}
	f(S_j) - f(S_{j - 1})
	\geq{} &
	\frac{f(o_j \mid S_{j - 1})}{1 + \eps} - \frac{\eps \tau}{j^2}
	\geq
	\frac{f(o_j \mid S_{j - 1} \cup \{o_{j + 1}, o_{j + 2}, \dotsc, o_{|OPT|}\})}{1 + \eps} - \frac{\eps \tau}{j^2}\\
	={} &
	\frac{f(\{o_{j}, o_{j + 1}, \dotsc, o_{|OPT|}\} \cup S_{j - 1}) - f(\{o_{j + 1}, o_{j + 2}, \dotsc, o_{|OPT|}\} \cup S_{j - 1})}{1 + \eps} - \frac{\eps \tau}{j^2}\\
	\geq{} &
	\frac{f(\{o_{j}, o_{j + 1}, \dotsc, o_{|OPT|}\} \cup S_{j - 1}) - f(\{o_{j + 1}, o_{j + 2}, \dotsc, o_{|OPT|}\} \cup S_{j})}{1 + \eps} - \frac{\eps \tau}{j^2}
	\enspace,
\end{align*}
where the second inequality follows from the submodularity of $f$, and the last inequality follows from its monotonicity. Adding up the last inequality over all $1 \leq j \leq |OPT|$ now yields
\begin{align*}
	f(S_{|OPT|})
	\geq
	f(S_0) + \frac{f(OPT) - f(S_{|OPT|})}{1 + \eps} - \sum_{i = 1}^{|OPT|} \frac{\eps \tau}{j^2}
	\geq
	\frac{f(OPT) - f(S_{|OPT|})}{1 + \eps} - 6\eps \cdot f(OPT)
	\enspace,
\end{align*}
where the second inequality holds by the non-negativity of $f$. Rearranging this inequality now completes the proof of the lemma since $\eps \in (0, 1/7)$.
\end{proof}

To complete the proof of \cref{th:matroid_intersection_alg_simplified} for the case of a monotone objective function, it remains to analyze the space complexity of \cref{alg:monotone_SMkM}, which we do next.
\begin{observation} \label{obs:monotone_space_complexity}
\Cref{alg:monotone_SMkM} can be implemented so that its space complexity is
\[
	\tilde{O}(\eps^{-\rho}k^{\rho k(\rho + 1)}\rho(\log \rho + \log \eps^{-1})^{\rho + 1})
	=
	\tilde{O}\mleft(\mleft(\frac{k^{\rho k}(\log \rho + \log \eps^{-1})}{\eps}\mright)^{\rho + 1}\rho\mright)
	\enspace.
\]
\end{observation}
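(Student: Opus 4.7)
The plan is to implement both guessing steps of \cref{alg:monotone_SMkM} by spawning one parallel branch per guess sequence, feeding each stream element to every live branch in lockstep, and then bounding the total memory as the product of (number of branches) and (peak space used by a single branch). The per-branch cost will be dominated by one concurrent instance of \FEA, whose space bound is supplied by \cref{thm:first_element}.

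First I would count the branching factor at a single iteration $j$. The guess of $i(j)$ on \cref{line:guess_i} selects a value from a set of cardinality $O(\log_{1+\eps}(j^2/\eps))+1 = O(\eps^{-1}(\log \rho + \log \eps^{-1}))$, where I used $j \leq |OPT| \leq \rho$ and the standard bound $\log_{1+\eps}(x) = O(\eps^{-1}\log x)$. The guess on \cref{line:u_j_choose} picks one of the elements marked by $ALG_j$, of which there are at most $O(k^{\rho k - 1})$ by \cref{thm:first_element}. Multiplying, the per-iteration branching factor is $O(\eps^{-1} k^{\rho k - 1}(\log \rho + \log \eps^{-1}))$, and raising to the $\rho$-th power (since there are at most $|OPT| \leq \rho$ nontrivial iterations) yields a total of
\[
O\!\left(\eps^{-\rho}\, k^{\rho(\rho k - 1)}\, (\log \rho + \log \eps^{-1})^{\rho}\right)
\]
branches at the end.

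Next I would bound the peak space within a single branch. At any moment the branch stores the set $S_{j-1}$ (at most $\rho$ elements, i.e.\ $\tilde{O}(\rho)$), the list of previously made guesses (dominated by $O(\rho(\log \rho + \log \eps^{-1} + \log n))$ bits, which is negligible), and one active copy of \FEA costing $\tilde{O}(k^{\rho k + 1}\rho)$. The one point where I will be careful is that the inner \textbf{for} loop at \cref{line:elements_reading} exits as soon as $u_j$ is selected, so within any branch only one \FEA instance is alive at any time; this prevents the per-branch space from being multiplied by an extra factor of $\rho$, and I view it as the only real subtlety of the argument.

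Finally, multiplying the branch count by the per-branch space gives
\[
\tilde{O}\!\left(\eps^{-\rho}\, k^{\rho^2 k + \rho k - \rho + 1}\, \rho\, (\log \rho + \log \eps^{-1})^{\rho}\right),
\]
and the one-line bounds $\rho^2 k + \rho k - \rho + 1 \leq \rho k(\rho+1)$ (valid for $\rho \geq 1$) together with $(\log \rho + \log \eps^{-1})^{\rho} \leq (\log \rho + \log \eps^{-1})^{\rho + 1}$ recover the first stated form $\tilde{O}(\eps^{-\rho} k^{\rho k(\rho + 1)}\rho (\log \rho + \log \eps^{-1})^{\rho + 1})$. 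The equivalent $((k^{\rho k}(\log \rho + \log \eps^{-1}))/\eps)^{\rho + 1}\rho$ form then follows by rearrangement.
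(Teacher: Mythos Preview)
Your proposal is correct and follows essentially the same approach as the paper: fork one thread per guess sequence, bound the per-iteration branching factor as $O(\eps^{-1}k^{\rho k-1}(\log\rho+\log\eps^{-1}))$, raise to the $\rho$-th power, and multiply by the per-thread space $\tilde{O}(k^{\rho k+1}\rho)$ of a single live \FEA instance. The paper's proof is the same computation, with the observation that \cref{line:u_j_choose} amounts to one guess (an index into the marked set) per iteration handled in a footnote rather than inline.
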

\begin{proof}
To implement the guesses in \cref{alg:monotone_SMkM}, one has to fork the execution at each guess. In other words, if the guess has $Y$ possible outcomes, then the ``thread'' of the algorithm performing the guess should split into $Y$ parallel threads at the point of guess, one executing with each one of the possible outcomes, which guarantees that at least one thread terminates after making all the correct guesses. Then, the algorithm should output the best among all the feasible solutions produced by all its threads, which guarantees that the output solution is at least as good as the solution produced by the above mentioned thread that makes all the right guesses.

Let us now bound the number of threads used in the above implementation. \cref{alg:monotone_SMkM} makes guesses in two lines only. In \cref{line:guess_i} the algorithm guesses a value for $i(j)$. The number of options for this value is
\[
	1 + \lceil \log_{1 + \eps} (j^2/\eps) \rceil
	\leq
	1 + \lceil \log_{1 + \eps} (\rho^2/\eps) \rceil
	=
	O(\eps^{-1}(\log \rho + \log \eps^{-1}))
	\enspace.
\]
Then, in \cref{line:u_j_choose} the algorithm guesses the first element in the output of $ALG_j$ that has a property which at least one element in this output is guaranteed to have by \cref{lem:u_k_selection}. Since the output of $ALG_j$ is of size $O(k^{\rho k - 1})$ by \cref{thm:first_element}, this is the number of possible guesses in this line. Since each iteration of \cref{alg:monotone_SMkM} makes a single guess of each of the above kinds, the number of threads used by the algorithm increases by a factor of $O(\eps^{-1}k^{\rho k - 1}(\log \rho + \log \eps^{-1}))$ following each such iteration.\footnote{In the case of \cref{line:u_j_choose} it might look like the algorithm makes multiple guesses in every iteration since this line is technically a part of the inner loop starting on \cref{line:elements_reading}; however, one can observe that it suffices to guess only a single number per iteration of the algorithm---namely, the index of the correct element within the output of $ALG_j$---in order to implement this line.}

Since \cref{alg:monotone_SMkM} makes $|OPT| \leq \rho$ iterations, it has only $O(\eps^{-\rho}k^{\rho(\rho k - 1)}(\log \rho + \log \eps^{-1})^\rho)$ threads upon termination (which is the moment in which the number of threads is the largest). Each one of these threads requires enough space to store the index $i(j)$, $O(\rho)$ elements and the state of $ALG_j$, which by \cref{thm:first_element} can be done with a space complexity of
\[
	O(\log \eps^{-1} + \log \log \rho) + \tilde{O}(\rho) + \tilde{O}(k^{k\rho + 1} \rho)
	=
	\tilde{O}(\log \eps^{-1} + k^{k\rho + 1} \rho)
	\enspace.
\]
Multiplying the last bound with the bound we have on the number of threads used by \cref{alg:monotone_SMkM} completes the proof of the observation.
\end{proof}

We now get to the proof of \cref{th:matroid_intersection_alg_simplified} in the general case in which the objective function is not guaranteed to be monotone. 
The algorithm that we use for proving \cref{th:matroid_intersection_alg_simplified} in this case is given as \cref{alg:non-monotone_SMkM}. The only difference between this algorithm and \cref{alg:monotone_SMkM} is that we now execute $ALG_j$ multiple times in each iteration, and pick as $u_j$ a single element out of the union of the outputs of these executions. This allows us to either pick $o_j$ itself or have multiple possible candidates as replacements for it. Out of these candidates, the algorithm picks a uniformly random one in \cref{line:random_decision}. Interestingly, as is explained in the proof of \cref{obs:non-monotone_space_complexity}, we end up implementing this random choice by trying all the possible options rather than by making a true random decision.

\begin{algorithm}[th]
\caption{Algorithm for {\SMkM} with a (not necessarily monotone) Objective Function} \label{alg:non-monotone_SMkM}
\begin{algorithmic}[1]
\State $S_0 \gets \varnothing$.
\For{$j = 1$ \textbf{to} $|OPT|$}
	\State \parbox[t]{150mm}{Guess an integer $0 \leq i(j) \leq \lceil \log_{1 + \eps} (j^2/\eps) \rceil$ such that $f(o_{j} \mid S_{j - 1}) \in (\tau / (1 + \eps)^{i(j) + 1}, \tau / (1 + \eps)^{i(j)}]$. If there is no such integer $i(j)$, guess $i(j) = \infty$.\strut} \label{line:guess_i_non-monotone}
	\If{$i(j) \neq \infty$}
		\For{$\ell = 1$ \textbf{to} $\lceil \eps^{-1} \rceil$}
			\State \parbox[t]{140mm}{Initialize an instance $ALG_{j, \ell}$ of {\FEA} that uses $g(T) = f(T \mid S_{j-1})$ as the objective function, and the matroids $M_1 / S_{j-1}, M_2 / S_{j-1}, \dotsc, M_k / S_{j-1}$ as the constraints.\strut}
		\EndFor
		\For{every element $u$ that arrives in the input stream} \label{line:elements_reading_non-monotone}
			\If{$f(u \mid S_{j - 1}) \in (\tau / (1 + \eps)^{i(j) + 1}, \tau / (1 + \eps)^{i(j)}]$}
				\For{$\ell = 1$ \textbf{to} $\lceil \eps^{-1} \rceil$} \label{line:copies_feeding}
					\State Feed $u$ to $ALG_{j, \ell}$.
					\If{$ALG_{j, \ell}$ marks $u$}
						\State \parbox[t]{122mm}{Guess whether $u = o_j$. If this is the case, denote $u$ by $u_j$ and exit the loop starting on \cref{line:elements_reading_non-monotone}.\strut} \label{line:o_guess}
						\State \parbox[t]{122mm}{Guess whether $u$ is the first element marked by $ALG_{j, \ell}$ such that $S_{j - 1} \cup \{o_{j + 1}, o_{j + 2}, \dotsc, o_{|OPT|}\} + u$ is a common independent set. If this is the case, denote $u$ by $u_{j, \ell}$.\strut} \label{line:u_j_choose_non-monotone}
						\State Exit the loop starting on \cref{line:copies_feeding}.
					\EndIf
				\EndFor
			\EndIf
			\If{some element was already denoted as $u_{j, \ell}$ for every integer $1 \leq \ell \leq \lceil \eps^{-1} \rceil$}
				\State Choose uniformly at random an integer $1 \leq \ell \leq \lceil \eps^{-1} \rceil$, and set $u_j \gets u_{j, \ell}$. \label{line:random_decision}
				\State Exit the loop starting on \cref{line:elements_reading_non-monotone}.
			\EndIf
		\EndFor
		\State Set $S_{j + 1} \gets S_{j} + u_{j}$.
	\Else
		\State Set $S_{j + 1} \gets S_{j}$.
	\EndIf
\EndFor
\State \Return{$S_{|OPT|}$}.
\end{algorithmic}
\end{algorithm}

The analysis of \cref{alg:non-monotone_SMkM} follows quite closely the analysis of \cref{alg:monotone_SMkM}. However, the non-monotonicity of the objective function and the extra logic within the algorithm itself makes the analysis of \cref{alg:non-monotone_SMkM} a bit more involved. The first part of this analysis is the following lemma, which is a counterpart of \cref{lem:u_k_selection} from the analysis of \cref{alg:monotone_SMkM}.

\begin{lemma} \label{lem:u_k_selection_non-monotone}
For every integer $1 \leq j \leq |OPT|$, the set $S_j \cup \{o_{j + 1}, o_{j + 2}, \dotsc, o_{|OPT|}\}$ is a common independent set. Furthermore, if $i(j) \neq \infty$, then \cref{alg:non-monotone_SMkM} selects some element as $u_j$, and this element appears weakly before $o_j$ in the input stream. Finally, if \cref{alg:non-monotone_SMkM} also selects elements $u_{j, 1}, u_{j, 2}, \dotsc, u_{j, \lceil \eps^{-1} \rceil}$, these elements also appear weakly before $o_j$.
\end{lemma}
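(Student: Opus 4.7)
The plan is to extend the induction on $j$ used in \cref{lem:u_k_selection}. The induction hypothesis will give, in particular, that $S_{j-1}\cup\{o_j,\dotsc,o_{|OPT|}\}$ is a common independent set, equivalently that $\{o_j,\dotsc,o_{|OPT|}\}$ is common independent in the contracted matroids $M_i/S_{j-1}$. The case $i(j)=\infty$ will be handled exactly as in \cref{lem:u_k_selection}, so I will focus on $i(j)\neq\infty$ and split on what happens when $o_j$ reaches the inner copies loop of iteration $j$.

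If some copy $ALG_{j,\ell^{*}}$ marks $o_j$ at that moment, the correct guess on the ``$u=o_j$'' line will be yes; the algorithm will set $u_j=o_j$ and exit the outer stream loop. Then $u_j$ trivially appears weakly before $o_j$ and $S_j\cup\{o_{j+1},\dotsc,o_{|OPT|}\}=S_{j-1}\cup\{o_j,\dotsc,o_{|OPT|}\}$ is common independent by the induction hypothesis. Any $u_{j,\ell}$ that was set before this moment came from an element strictly before $o_j$.

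In the complementary case, no copy marks $o_j$, and the plan is to show that every $u_{j,\ell}$ is defined strictly before $o_j$ is processed, so the ``all defined'' check will fire no later than at $o_j$'s processing step and will set $u_j$ to some $u_{j,\ell}$ strictly before $o_j$. The key will be to identify the sub-stream actually fed to $ALG_{j,\ell}$: it consists of exactly those elements whose marginal w.r.t.\ $S_{j-1}$ lies in $(\tau/(1+\eps)^{i(j)+1},\tau/(1+\eps)^{i(j)}]$ and that were \emph{not} marked by any earlier copy $\ell'<\ell$ during iteration $j$. Under the case assumption, $o_j$ will be fed to every copy, and, because all later $o_{j'}$ arrive after $o_j$ in the original stream, $o_j$ will be the first element of $O=\{o_j,\dotsc,o_{|OPT|}\}$ to appear in each copy's sub-stream. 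Applying \cref{thm:first_element} to $ALG_{j,\ell}$ with this $O$ (common independent in the contracted matroids by the induction hypothesis) will yield a marked element $u^{*}_{j,\ell}$ weakly before $o_j$ such that $(S_{j-1}\cup\{o_{j+1},\dotsc,o_{|OPT|}\})+u^{*}_{j,\ell}$ is common independent; the case assumption will force $u^{*}_{j,\ell}$ strictly before $o_j$, and the correct ``first good'' guess will assign $u_{j,\ell}$ to the earliest such marked element. The common independence of $S_j\cup\{o_{j+1},\dotsc,o_{|OPT|}\}$ will then follow directly from the ``good'' replacement property of the $u_{j,\ell}$ selected by the random choice.

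The hard part will be the bookkeeping for the parallel copies: because the sub-stream seen by $ALG_{j,\ell}$ is filtered by the markings of the earlier copies in the current iteration, I will need to verify that $o_j$ is nevertheless fed to every copy in the case of interest, and combine \cref{thm:first_element} for each copy with the online nature of \FEA to conclude that a valid ``first good'' marking occurs strictly before $o_j$ whenever no copy marks $o_j$ itself. Non-monotonicity of $f$ will play no role in this lemma; it will enter only the subsequent approximation analysis, through the expectation over the random choice among the $u_{j,\ell}$.
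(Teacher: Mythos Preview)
Your plan is essentially the paper's own proof: induction on $j$, the $i(j)=\infty$ case copied from \cref{lem:u_k_selection}, and a split according to whether $o_j$ gets marked by some copy (which is exactly the paper's split into $u_j=o_j$ versus $u_j\neq o_j$). Two points deserve care, and the paper makes both explicit.

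First, you should verify that iteration $j$ actually begins before $o_j$ appears. The paper does this by naming the first remaining stream element $u_{\text{first}}$ and observing that, by the induction hypothesis applied to $u_{j'}$ \emph{and} to the $u_{j',\ell}$ of iteration $j-1$, the element $u_{\text{first}}$ is weakly before $o_j$. This is why the third clause of the lemma (about the $u_{j,\ell}$) is part of the induction statement rather than a free add-on.

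Second, your invocation of \cref{thm:first_element} in case~(b) is not quite well-posed as written. The guarantee of {\FEA} is stated for a set $O$ contained in the stream that {\FEA} actually receives; but the elements $o_{j+1},\dots,o_{|OPT|}$ need not be in the sub-stream forwarded to $ALG_{j,\ell}$ (they may fail the marginal test, and in any case iteration $j$ ends before they arrive). The paper resolves this exactly as in \cref{lem:u_k_selection}: it defines an auxiliary stream $\cI_\ell$ consisting of the elements actually forwarded to $ALG_{j,\ell}$ together with \emph{all} elements weakly after $o_j$, applies \cref{thm:first_element} on $\cI_\ell$ to obtain a good marked element $u'_{j,\ell}$ weakly before $o_j$, and then uses the online property of {\FEA} to argue that the actual prefix fed to $ALG_{j,\ell}$ already triggers this mark. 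Your phrase ``combine \cref{thm:first_element} \dots\ with the online nature of {\FEA}'' is pointing in the right direction, but the auxiliary-stream step is what makes the application legitimate.
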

\begin{proof}
Like in the proof of \cref{lem:u_k_selection}, we prove the current lemma by induction on $j$ with the induction hypothesis being that the lemma holds for all lower values of $j$. By the induction hypothesis, the set $S_{j-1} \cup \{o_{j}, o_{j + 1}, \dotsc, o_{|OPT|}\}$ is a common independent set. Furthermore, when $i(j) = \infty$, the lemma follows from this observation by the same argument used to prove the corresponding claim in the proof of \cref{lem:u_k_selection}. Therefore, we concentrate below on the case of $i(j) \neq \infty$. Finally, we note that if \cref{alg:non-monotone_SMkM} picks $o_j$ as $u_j$, then the lemma becomes an immediate corollary of the induction hypothesis because in this case $S_j \cup \{o_{j + 1}, o_{j + 2}, \dotsc, o_{|OPT|}\} = S_{j - 1} \cup \{o_{j}, o_{j + 1}, \dotsc, o_{|OPT|}\}$ and the $j$-th iteration of the algorithm terminates immediately after it reads $o_j$. Thus, we can safely assume below that $u_j \neq o_j$.

Let $u_{\text{first}}$ be the first element in the remaining data stream at the beginning of iteration number $j$ of \cref{alg:non-monotone_SMkM}. One can note that $u_{\text{first}}$ must be either (i) the first element of the original input stream, (ii) the element following $u_{j', \ell}$ for some integers $1 \leq j' < j$ and $1 \leq \ell \leq \lceil \eps^{-1} \rceil$ for which $i(j') \neq \infty$ or (iii) the element following $u_{j'}$ for some integer $1 \leq j' < j$ for which $i(j') \neq \infty$. In all these cases $o_j$ appears weakly after $u_{\text{first}}$ in the input stream because the induction hypothesis guarantees that $u_{j', \ell}$ and $u_{j'}$ appear weakly before $o_{j'}$ in the stream.

For every integer $1 \leq \ell \leq \lceil \eps^{-1} \rceil$, consider now an input stream $\cI_\ell$ that is a sub-stream of the original input stream.
Specifically, an element $u$ belongs to $\cI_\ell$ if (1) it appears weakly after $o_j$, or (2) it appears weakly after $u_{\text{first}}$ and is forwarded by \cref{alg:non-monotone_SMkM} to $ALG_{j, \ell}$.
By the induction hypothesis, $\{o_j, o_{j + 1}, \dotsc, o_{|OPT|}\}$ is a subset of the stream $\cI_\ell$ that is independent in all the matroids $M_1 / S_{j - 1}, M_2 / S_{j - 1}, \dotsc, M_k / S_{j - 1}$. Therefore, given the input stream $\cI_\ell$, $ALG_{j,\ell}$ is guaranteed to mark some element $u'_{j, \ell}$ that arrives weakly before $o_j$ and also obeys that
\[
	S_{j - 1} \cup \{o_j, o_{j + 1}, \dotsc, o_{|OPT|}\} - o_j + u'_{j, \ell}
	=
	S_{j - 1} \cup \{o_{j + 1}, \dotsc, o_{|OPT|}\} + u'_{j, \ell}
\]
is a common independent set. In reality, $ALG_{j, \ell}$ does not get the stream $\cI_\ell$. However, since $ALG_{j,\ell}$ makes its decisions online, it will mark $u'_{j, \ell}$ if given any prefix of $\cI_{j, \ell}$ that includes $u'_{j, \ell}$. Since $u'_{j, \ell}$ arrives weakly before $o_j$, this implies that $u'_{j, \ell}$ is marked by $ALG_{j, \ell}$ unless it is both equal to $o_j$ and marked by $ALG_{j, \ell'}$ for some $\ell' \neq \ell$. However, the last option cannot happen since \cref{alg:non-monotone_SMkM} sets $u_j$ to be $o_j$ given this option, and we assumed above that this does not happen. Given these arguments, \cref{alg:non-monotone_SMkM} must select either $u'_{j, \ell}$ or an earlier element as $u_{j, \ell}$ for every integer $1 \leq \ell \leq \lceil \eps^{-1} \rceil$, and therefore, $u_{j, \ell}$ weakly arrives before $o_j$.

As $u_j$ is equal to $u_{j, \ell}$ for some integer $1 \leq \ell \leq \lceil \eps^{-1} \rceil$, the conclusion of the last paragraph implies the lemma since
\[
	S_{j - 1} \cup \{o_{j + 1}, o_{j + 2}, \dotsc, o_{|OPT|}\} + u_j
	=
	S_j \cup \{o_{j + 1}, o_{j + 2}, \dotsc, o_{|OPT|}\}
\]
is a common independent set because $S_{j - 1} \cup \{o_{j + 1}, \dotsc, o_{|OPT|}\} + u_{j, \ell}$ is.
\end{proof}

At this point we need to present the following known lemma.
\begin{lemma}[Lemma~2.2 of~\cite{buchbinder2014submodular}] \label{lem:randomness_bound}
Given a non-negative submodular function $g\colon 2^\cN \to \nnR$ and a probability $p$, if $A(p)$ is a random subset of $\cN$ that includes every element with probability at most $p$ (not necessarily independently), then $\bE[g(A(p))] \geq (1 - p) \cdot g(\varnothing)$.
\end{lemma}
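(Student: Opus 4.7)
The plan is to combine the non-negativity of $g$ with submodularity. First, I would observe that $\bE[g(A(p))]$ depends linearly on the joint distribution of the indicator variables $(\ind{u\in A(p)})_{u\in\cN}$, and the set of such distributions with all marginals at most $p$ is a convex polytope in $\bR^{2^n}$. Hence the infimum of $\bE[g(A(p))]$ over this set is attained at an extreme point, so it suffices to establish the inequality for such distributions, which are supported on only a few atoms and have comparatively simple structure.

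As a warm-up, I would verify the inequality in the modular case $g(S)=g(\varnothing)+\sum_{u\in S}\Delta_u$, where $\Delta_u=g(\{u\})-g(\varnothing)$. Here $\bE[g(A(p))]=g(\varnothing)+\sum_u\Delta_u\cdot\Pr[u\in A(p)]$. Terms with $\Delta_u\geq 0$ contribute non-negatively, while for $\Delta_u<0$ we have $\Delta_u\cdot\Pr[u\in A(p)]\geq p\cdot\Delta_u$ because $\Pr[u\in A(p)]\leq p$ and $\Delta_u$ is negative. Letting $\cN^-=\{u\colon\Delta_u<0\}$, the non-negativity $g(\cN^-)\geq 0$ gives $\sum_{u\in\cN^-}\Delta_u\geq -g(\varnothing)$, whence $\bE[g(A(p))]\geq g(\varnothing)+p\cdot(-g(\varnothing))=(1-p)\,g(\varnothing)$.

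For the general submodular case, a natural attempt is induction on $|\cN|$. The base case $\cN=\varnothing$ is immediate since $A(p)=\varnothing$ and $g(\varnothing)\geq(1-p)g(\varnothing)$ by non-negativity. In the inductive step, fix $u\in\cN$, let $A'=A(p)\cap(\cN\setminus\{u\})$, and observe that $A'$ still has all marginals at most $p$, so the inductive hypothesis gives $\bE[g(A')]\geq(1-p)g(\varnothing)$. Splitting on whether $u\in A(p)$ yields
\begin{equation*}
\bE[g(A(p))]=\bE[g(A')]+\bE\bigl[\ind{u\in A(p)}\cdot(g(A'\cup\{u\})-g(A'))\bigr]\enspace.
\end{equation*}
The hardest step, which I expect to be the main obstacle, is that the second term can be negative because $g$ is not monotone, so one cannot conclude immediately. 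One route to control this loss is to combine the submodularity bound $g(A'\cup\{u\})-g(A')\geq g(\cN)-g(\cN\setminus\{u\})$ with a careful choice of $u$ (e.g.\ one for which the right-hand side is non-negative, if such a $u$ exists), and in the remaining case where no such $u$ exists to argue that $g(S)\leq g(\varnothing)$ for all $S$ and obtain the bound by a separate direct argument. An alternative, likely cleaner route is the extreme-point reduction from the first paragraph, since extreme distributions supported on few atoms reduce the analysis to essentially the modular case already handled.
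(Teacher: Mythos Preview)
The paper does not actually prove this lemma; it is quoted from~\cite{buchbinder2014submodular} and used as a black box. So there is no in-paper proof to compare against, and I evaluate your proposal on its own merits.

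Your modular warm-up is correct, but both routes you sketch for the general submodular case have genuine gaps.

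For the inductive route, the ``remaining case'' claim is simply false. Take $\cN=\{a,b\}$ with $g(\varnothing)=1$, $g(\{a\})=g(\{b\})=2$, $g(\{a,b\})=0$. This $g$ is non-negative and submodular, and $g(u\mid\cN\setminus\{u\})=-2<0$ for both elements, so neither choice of $u$ satisfies your first case; yet $g(\{a\})=2>g(\varnothing)$, so you cannot conclude $g(S)\leq g(\varnothing)$ for all $S$. The induction does not close along the lines you propose. (This example is even tight for the lemma: the minimizing distribution places mass $p$ on $\{a,b\}$ and $1-p$ on $\varnothing$, giving $\bE[g(A(p))]=(1-p)\,g(\varnothing)$ exactly.)

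For the extreme-point route, knowing that an extreme distribution has support of size at most $n+1$ does not reduce anything to the modular case: the values of $g$ on those $n+1$ sets are still governed by submodularity, and your modular computation used $g(S)=g(\varnothing)+\sum_{u\in S}\Delta_u$, which is unavailable. The correct sharpening of your idea is that the minimizing distribution is supported on a \emph{chain} $\varnothing=T_0\subsetneq T_1\subsetneq\cdots\subsetneq T_k$. This is exactly Lov\'asz's theorem that, for submodular $g$, the convex closure on $[0,1]^{\cN}$ equals the Lov\'asz extension $\hat g$. Once you have that, the bound is immediate: with $x_u=\Pr[u\in A(p)]\leq p$,
\[
\bE[g(A(p))]\;\geq\;\hat g(x)\;=\;\int_0^1 g\bigl(\{u:x_u\geq\theta\}\bigr)\,d\theta\;\geq\;\int_p^1 g(\varnothing)\,d\theta\;=\;(1-p)\,g(\varnothing),
\]
since the level set is empty for $\theta>p$ and non-negativity of $g$ handles $\theta\leq p$. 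The chain structure (equivalently, the Lov\'asz-extension characterization) is the missing ingredient; your proposal points toward extreme points but does not identify it, and without it neither of your two routes completes.
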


We are now ready to prove the approximation guarantee of \cref{alg:non-monotone_SMkM}.

\begin{corollary}
\cref{alg:non-monotone_SMkM} is a $(2 + O(\eps))$-approximation algorithm that outputs a set $S_{|OPT|}$ which is a common independent set.
\end{corollary}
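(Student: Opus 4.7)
The plan is to mirror Corollary~\ref{cor:approximation_monotone} step for step, replacing its single monotonicity-using step by Lemma~\ref{lem:randomness_bound} applied to the random swap produced by Algorithm~\ref{alg:non-monotone_SMkM}. Feasibility is immediate: Lemma~\ref{lem:u_k_selection_non-monotone} with $j = |OPT|$ gives that $S_{|OPT|}$ is a common independent set, so the bulk of the work is the approximation ratio.

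As in the monotone proof I would introduce the auxiliary set $R_j = S_j \cup \{o_{j+1},\ldots,o_{|OPT|}\}$ along with $B_j = S_{j-1} \cup \{o_{j+1},\ldots,o_{|OPT|}\}$, so that $R_j$ is obtained from $B_j$ by adjoining the single element $u_j$. Let $\mathcal{F}_{j-1}$ denote the $\sigma$-algebra generated by the algorithm's randomness through iteration $j-1$. For each iteration $j$ with $i(j) \neq \infty$, the threshold test ensures that every candidate fed to an $ALG_{j,\ell}$ (including $o_j$ itself, if picked up by the guess of Line~\ref{line:o_guess}) has marginal at least $\tau/(1+\eps)^{i(j)+1} \geq f(o_j \mid S_{j-1})/(1+\eps)$, and so
\[
\mathbb{E}\bigl[f(S_j) - f(S_{j-1}) \mid \mathcal{F}_{j-1}\bigr] \;\geq\; \frac{f(o_j \mid S_{j-1})}{1+\eps}.
\]
The new ingredient is the non-monotone substitute for the step $f(B_j) \leq f(R_j)$: because the uniform choice in Line~\ref{line:random_decision} places each candidate into $\{u_j\}$ with probability at most $1/\lceil \eps^{-1}\rceil \leq \eps$, Lemma~\ref{lem:randomness_bound} applied with $p = \eps$ to the non-negative submodular function $g(T) = f(B_j \cup T)$ yields $\mathbb{E}[f(R_j) \mid \mathcal{F}_{j-1}] \geq (1-\eps)f(B_j)$; in the deterministic branch $u_j = o_j$ we have $R_j = R_{j-1}$ and the resulting chain inequality holds trivially.

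Chaining both bounds via the submodularity identity $f(o_j\mid S_{j-1}) \geq f(o_j\mid B_j) = f(R_{j-1}) - f(B_j)$, and handling the $i(j) = \infty$ iterations exactly as in the monotone proof (using submodularity alone), produces for every $j$
\[
\mathbb{E}\bigl[f(S_j) - f(S_{j-1})\mid\mathcal{F}_{j-1}\bigr] \;\geq\; \frac{f(R_{j-1}) \,-\, \mathbb{E}[f(R_j)\mid\mathcal{F}_{j-1}]/(1-\eps)}{1+\eps} \;-\; \frac{\eps\,\tau}{j^2}.
\]
Taking total expectations and summing over $j$ telescopes $f(R_0) = f(OPT)$ down to $\mathbb{E}[f(R_{|OPT|})] = \mathbb{E}[f(S_{|OPT|})]$, while the cross-terms $\tfrac{\eps}{(1-\eps)(1+\eps)} \mathbb{E}[f(R_j)]$ created by the $(1-\eps)$ denominator are controlled using Lemma~\ref{lem:u_k_selection_non-monotone}, which guarantees that every $R_j$ is a common independent set and therefore $\mathbb{E}[f(R_j)] \leq f(OPT) \leq \tau$. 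Using the identity $1 + \eps + 1/(1-\eps) = (2-\eps^2)/(1-\eps) = 2 + O(\eps)$, a rearrangement parallel to the last step of Corollary~\ref{cor:approximation_monotone} yields the claimed $(2+O(\eps))$-approximation.

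The main obstacle I anticipate is that a naive bound on the cross-term sum $\sum_{j<|OPT|}\mathbb{E}[f(R_j)]$ gives only $(|OPT|-1)\tau$, producing an additive loss of $O(\eps|OPT|)\tau$ that is $O(\eps)f(OPT)$ only for bounded $|OPT|$. Covering general $|OPT|\leq\rho$ will require rescaling the internal parameter to $\eps/(c\rho)$ for a small constant $c$, i.e., running $O(\rho/\eps)$ parallel copies of {\FEA} in place of $\lceil \eps^{-1}\rceil$; each per-iteration cross-term then shrinks to $\eps\tau/(c\rho)$ and the total cross-term loss stays at $O(\eps)\tau$. This rescaling is precisely the source of the extra $\eps^{-2}$ factor in the exponent of the non-monotone space complexity stated in Theorem~\ref{th:matroid_intersection_alg_simplified} relative to the monotone case.
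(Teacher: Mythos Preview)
Your per-iteration application of \cref{lem:randomness_bound} is the source of a genuine gap. As you correctly diagnose, chaining the bound $\bE[f(R_j)\mid \cF_{j-1}]\geq (1-\eps)f(B_j)$ through the telescoping sum leaves a residual $\tfrac{\eps}{(1-\eps)(1+\eps)}\sum_{j<|OPT|}\bE[f(R_j)]$, and bounding each term by $f(OPT)$ only yields an additive loss of order $\eps\,|OPT|\cdot f(OPT)$. Your proposed fix---rescaling the number of parallel copies of {\FEA} from $\lceil\eps^{-1}\rceil$ to $\Theta(\rho/\eps)$---would indeed control this loss, but it changes \cref{alg:non-monotone_SMkM} itself, so it does not prove the corollary as stated. (It is also not the actual origin of the $\eps^{-2}$ overhead: that overhead comes from the $\lceil\eps^{-1}\rceil$ copies together with the need for \emph{two} index guesses per iteration; your rescaling would introduce further $\rho^{O(\rho)}$ factors.)

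The paper avoids the accumulation entirely by applying \cref{lem:randomness_bound} \emph{once}, globally, rather than per iteration. The key maneuver is to telescope through the \emph{final} set $S_{|OPT|}$ instead of through $S_{j-1}$: by submodularity,
\[
f(o_j\mid S_{j-1})\;\geq\; f\bigl(o_j \mid S_{|OPT|}\cup\{o_{j+1},\dotsc,o_{|OPT|}\}\bigr),
\]
so the per-iteration inequality (for $o_j\notin S_{|OPT|}$) becomes
\[
f(S_j)-f(S_{j-1})\;\geq\;\frac{f(S_{|OPT|}\cup\{o_j,\dotsc,o_{|OPT|}\})-f(S_{|OPT|}\cup\{o_{j+1},\dotsc,o_{|OPT|}\})}{1+\eps}-\frac{\eps\tau}{j^2},
\]
which telescopes cleanly to $f(S_{|OPT|})\geq \tfrac{f(OPT\cup S_{|OPT|})-f(S_{|OPT|})}{1+\eps}-6\eps f(OPT)$. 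The non-monotone step is then a single application of \cref{lem:randomness_bound} to $g(T)=f(OPT\cup T)$: because the candidates $u_{j,1},\dotsc,u_{j,\lceil\eps^{-1}\rceil}$ are distinct within an iteration (the inner loop exits after marking) and distinct across iterations (each element is read in at most one iteration), every element of $\cN\setminus OPT$ lands in $S_{|OPT|}$ with probability at most $1/\lceil\eps^{-1}\rceil\leq\eps$, giving $\bE[f(OPT\cup S_{|OPT|})]\geq(1-\eps)f(OPT)$ with no dependence on $|OPT|$.
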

\begin{proof}
The second part of the corollary follows immediately from \cref{lem:u_k_selection_non-monotone} by plugging in $j = |OPT|$. Therefore, we concentrate on proving the first part of the corollary.

Repeating the arguments used for this purpose in the proof of \cref{cor:approximation_monotone}, one can argue that for every integer $1 \leq j \leq |OPT|$ for which $o_j \not \in S_{|OPT|}$ we have
\begin{align} \label{eq:non_monotone_element_guarantee}
	f(S_j) - f(S_{j - 1})
	\geq{} &
	\frac{f(o_j \mid S_{j - 1} \cup \{o_{j + 1}, o_{j + 2}, \dotsc, o_{|OPT|}\})}{1 + \eps} - \frac{\eps \tau}{j^2}\\\nonumber
	\geq{} &
	\frac{f(o_j \mid S_{|OPT|} \cup \{o_{j + 1}, o_{j + 2}, \dotsc, o_{|OPT|}\})}{1 + \eps} - \frac{\eps \tau}{j^2}\\\nonumber
	={} &
	\frac{f(S_{OPT} \cup \{o_{j}, o_{j + 1}, \dotsc, o_{|OPT|}\})- f(S_{OPT} \cup \{o_{j + 1}, o_{j + 2}, \dotsc, o_{|OPT|}\})}{1 + \eps} - \frac{\eps \tau}{j^2}
	\enspace,
\end{align}
where the second inequality follows from the submodularity of $f$. Since $f(S_{OPT} \cup \{o_{j}, o_{j + 1}, \dotsc,\allowbreak o_{|OPT|}\})- f(S_{OPT} \cup \{o_{j + 1}, o_{j + 2}, \dotsc, o_{|OPT|}\})$ whenever $o_j \in S_{|OPT|}$ and \cref{alg:non-monotone_SMkM} never chooses an element $u_j$ for which $f(S_j) - f(S_{j - 1}) = f(u_j \mid S_{j - 1}) < 0$, adding up Inequality~\eqref{eq:non_monotone_element_guarantee} over all $j$ values yields
\begin{align} \label{eq:non-monotone-raw-bound}
	f(S_{|OPT|})
	\geq{} &
	f(S_0) + \frac{f(OPT \cup S_{|OPT|}) - f(S_{|OPT|})}{1 + \eps} - \sum_{j \mid u_j \not \in S_{|OPT|}} \mspace{-18mu} \frac{\eps \tau}{j^2}\\ \nonumber
	\geq{} &
	\frac{f(OPT \cup S_{|OPT|}) - f(S_{|OPT|})}{1 + \eps} - 6\eps \cdot f(OPT)
	\enspace,
\end{align}
where the second inequality holds by the non-negativity of $f$.

To make the last inequality useful, we need to prove an additional property of the set $S_{|OPT|}$. One can observe that in every iteration in which $i(j) \neq \infty$, \cref{alg:non-monotone_SMkM} either adds $o_j$ to its solution, or adds a uniformly random element out of the list $u_{j, 1}, u_{j, 2}, \dotsc, u_{j, \lceil \eps^{-1} \rceil}$. The elements in this list are disjoint since \cref{alg:non-monotone_SMkM} stops feeding an element $u$ to additional instances of {\FEA} once it is denoted as $u_{j, \ell}$ for some $\ell$, which prevents $u$ from being denoted again as another element in the above list. Furthermore, an element cannot belong to the list $u_{j, 1}, u_{j, 2}, \dotsc, u_{j, \lceil \eps^{-1} \rceil}$ for two different $j$ values because all the elements in this list are read by \cref{alg:non-monotone_SMkM} in its $j$-th iteration. These observations imply together that every element appears in $S_{|OPT|}$ with probability at most $1 / \lceil \eps^{-1} \rceil \leq \eps$, except for the elements of $OPT$ itself. Since $g(S) = f(S \cup OPT)$ is a non-negative submodular function, we now get by \cref{lem:randomness_bound} that
\[
	\bE[f(OPT \cup S_{|OPT|})]
	=
	\bE[g(S_{|OPT|} \setminus OPT)]
	\geq
	(1 - \eps) \cdot g(\varnothing)
	=
	(1 - \eps) \cdot f(OPT)
	\enspace.
\]

Plugging the last inequality into (the expectation of) Inequality~\eqref{eq:non-monotone-raw-bound} completes the proof of the lemma since $\eps \in (0, 1/7)$.
\end{proof}

To complete the proof of \cref{th:matroid_intersection_alg_simplified} for the case of an objective function that is not guaranteed to be monotone, it remains to analyze the space complexity of \cref{alg:non-monotone_SMkM}, which we do next.
\begin{observation} \label{obs:non-monotone_space_complexity}
\Cref{alg:non-monotone_SMkM} can be implemented so that its space complexity is
\[
	\tilde{O}(\eps^{-3\rho-1}k^{\rho k(2\rho + 1)}\rho(\log \rho + \log \eps^{-1})^{\rho + 1})
	=
	O\mleft(\mleft(\frac{k^{2\rho k}(\log \rho + \log \eps^{-1})}{\eps^3}\mright)^{\rho + 1}\rho\mright)
	\enspace.
\]
\end{observation}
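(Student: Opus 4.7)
The plan is to mirror the proof of \cref{obs:monotone_space_complexity}, implementing each guess made by \cref{alg:non-monotone_SMkM} by forking the current execution into one parallel thread per outcome and returning the best feasible set produced across all threads; since at least one thread makes every guess correctly (and provides any random choice that the analysis needs), the guarantee of \cref{lem:u_k_selection_non-monotone} and the corresponding approximation-ratio bound still apply to the overall output. The total space complexity is then the product of (number of threads active at termination) and (memory footprint of a single thread).

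For the per-thread space I would note that during iteration $j$ each thread must concurrently maintain all $\lceil \eps^{-1}\rceil$ FEAlg instances for $ALG_{j, 1}, \ldots, ALG_{j, \lceil \eps^{-1}\rceil}$: because \cref{alg:non-monotone_SMkM} stops forwarding an element to later copies once an earlier one marks it (the exit statement after \cref{line:u_j_choose_non-monotone}), these copies' mark sequences are coupled and cannot be simulated independently. Together with the $O(\rho)$ elements of $S_j$ and the bookkeeping for $i(j)$, this yields a per-thread space of $\tilde{O}(\eps^{-1} k^{\rho k + 1} \rho)$ via the FEAlg space bound of \cref{thm:first_element}.

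For the thread count I would enumerate the guesses made in each of the $|OPT| \leq \rho$ iterations and raise the result to the $\rho$-th power. The guesses and their contributions per iteration are: (i) the $i(j)$ guess on \cref{line:guess_i_non-monotone}, with $O(\eps^{-1}(\log \rho + \log \eps^{-1}))$ outcomes, exactly as in \cref{obs:monotone_space_complexity}; (ii) collectively over the iteration, the \cref{line:o_guess} identification of which of the $O(\eps^{-1} k^{\rho k - 1})$ marks (across all copies) equals $o_j$ --- or that none does --- contributing $O(\eps^{-1} k^{\rho k - 1})$ outcomes; (iii) the random $\ell$-choice from \cref{line:random_decision}, implemented by forking into $\lceil \eps^{-1}\rceil$ threads; and (iv) the \cref{line:u_j_choose_non-monotone} identification of $u_{j, \ell}$ for the single $\ell$ that the current thread has committed to, with $O(k^{\rho k - 1})$ outcomes. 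Multiplying these gives a per-iteration forking factor of $O(\eps^{-3}(\log \rho + \log \eps^{-1}) \cdot k^{2(\rho k - 1)})$, and raising to the $\rho$-th power and combining with the per-thread bound yields the claimed complexity.

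The main obstacle will be justifying point (iv): a literal reading of the pseudocode selects $\ell$ only after every $u_{j, 1}, \dotsc, u_{j, \lceil \eps^{-1}\rceil}$ has been guessed, which would naively multiply $O(k^{\rho k - 1})^{\lceil \eps^{-1}\rceil}$ outcomes per iteration and blow up the exponent on $k$ to one that is exponential in $\eps^{-1}$. I would resolve this by re-ordering the guesses in the implementation: fork the random choice of $\ell$ first (it is exogenous to the $u_{j, \ell}$'s selected by the algorithm), and then, within each of the $\lceil \eps^{-1}\rceil$ resulting threads, only guess the $u_{j, \ell}$ corresponding to that thread's committed $\ell$ --- the remaining copies $ALG_{j, \ell'}$ with $\ell' \neq \ell$ still need to be executed so that their marks constrain the stream fed to $ALG_{j, \ell}$, but their first-satisfying-element choices do not need to be guessed inside that thread. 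The joint distribution of the algorithm's output across threads is unchanged by this re-ordering, so \cref{lem:u_k_selection_non-monotone} and the approximation-ratio corollary continue to apply to the best across-threads feasible set.
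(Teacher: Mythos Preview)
Your decomposition of the per-iteration guesses differs from the paper's, and it leaves a gap concerning when iteration $j$ terminates.

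Your guesses (ii)--(iv) together determine $u_j$ (either $o_j$ via (ii), or $u_{j,\ell}$ via (iii)+(iv)), but in the case $u_j\neq o_j$ they do \emph{not} determine the stopping time of the iteration. In the pseudocode, that iteration exits only once \emph{every} $u_{j,1},\dotsc,u_{j,\lceil\eps^{-1}\rceil}$ has been designated, and the last such designation fixes which elements remain in the stream for iteration $j+1$. A thread that has committed to a single $\ell$ and only guesses $u_{j,\ell}$ has no way to know when the remaining $u_{j,\ell'}$ are all set, so it cannot decide where iteration $j+1$ begins. Consequently your claim that ``the joint distribution of the algorithm's output across threads is unchanged by this re-ordering'' is not justified as stated: the re-ordering leaves the stopping rule undefined.

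The paper avoids this altogether by replacing \emph{all} of the guesses and the random choice on \cref{line:o_guess,line:u_j_choose_non-monotone,line:random_decision} with just two index guesses into the combined list of elements marked by $ALG_{j,1},\dotsc,ALG_{j,\lceil\eps^{-1}\rceil}$: (a) the index of the element that becomes $u_j$, and (b) the index of the last element read in iteration $j$ (the iteration can only terminate immediately after reading some marked element). Guess (b) pins down the stopping time directly, and the two guesses together contribute $O(\eps^{-2}k^{2\rho k-2})$ outcomes---the same factor your product yields, but without the ambiguity. Your scheme could plausibly be repaired (for instance, by stopping each thread immediately after its committed $u_{j,\ell}$ is chosen and arguing that this only moves the start of iteration $j+1$ weakly earlier, which still suffices for \cref{lem:u_k_selection_non-monotone}), but that is a change of algorithm requiring its own correctness argument; the paper's two-index replacement sidesteps the issue.
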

\begin{proof}
We implement both the guesses and random decisions in \cref{alg:non-monotone_SMkM} by forking execution. In other words, if the guess or random decision has $Y$ possible outcomes, then the ``thread'' of the algorithm performing the guess or random decision splits into $Y$ parallel threads when the guess or random decision has to be done, and each thread executes with one of the possible outcomes. This guarantees that, if we restrict ourselves to the threads that made correct guesses, then the outputs of these threads are the support of the output distribution that a direct implementation of \cref{alg:non-monotone_SMkM} would have had. Hence, by outputting the best among all the feasible solutions produced by all the threads, we are guaranteed to output a solution which is as good as a the solution that would have been produced if we made all the right guesses and used true randomness for the random decisions (which is the solution whose approximation guarantee was analyzed above).

Let us now bound the number of threads used in the above implementation. \cref{alg:non-monotone_SMkM} makes two kinds of guesses/random decisions in each iteration. The first kind is done in \cref{line:guess_i_non-monotone} of the algorithm, which guesses a value for $i(j)$. The number of options for this value is
\[
	1 + \lceil \log_{1 + \eps} (j^2/\eps) \rceil
	\leq
	1 + \lceil \log_{1 + \eps} (\rho^2/\eps) \rceil
	=
	O(\eps^{-1}(\log \rho + \log \eps^{-1}))
	\enspace.
\]
The rest of the guesses and random decisions are used to pick as $u_j$ some element among the elements marked by the algorithms $ALG_{j, 1}, ALG_{j, 2}, \dotsc, ALG_{j, \lceil \eps^{-1} \rceil}$. One can replace all the guesses and random decisions of this kind by two guesses.
\begin{itemize}
	\item A guess of the index in the list of elements marked by $ALG_{j, 1}, ALG_{j, 2}, \dotsc, ALG_{j, \lceil \eps^{-1} \rceil}$ that points to the element that should become $u_j$.
	\item A guess of the index in the above list of the element that was the last element to be read by \cref{alg:non-monotone_SMkM} in the $j$-th iteration (notice that the algorithm can terminate iteration $j$ only after reading an element that is marked by $ALG_{j, \ell}$ for some $\ell$).
\end{itemize}
Since the output of $ALG_{j, \ell}$ is of size $O(k^{\rho k - 1})$ by \cref{thm:first_element}, the number of possible outcomes for the above two guesses is $O(\eps^{-2} k^{2\rho k - 2})$. Combining all the above observations, we get that the number of threads used by the \cref{alg:non-monotone_SMkM} increases by a factor of $O(\eps^{-3}k^{2\rho k - 2}(\log \rho + \log \eps^{-1}))$ following each iteration.

Since \cref{alg:non-monotone_SMkM} makes $|OPT| \leq \rho$ iterations, it has only $O(\eps^{-3\rho}k^{2\rho(\rho k - 1)}(\log \rho + \log \eps^{-1})^\rho)$ threads upon termination (which is the moment in which the number of threads is the largest). Each one of these threads requires enough space to store the index $i(j)$, $O(\rho + \eps^{-1})$ elements and the states of $ALG_{j, 1}, ALG_{j, 2}, \dotsc, ALG_{j, \lceil \eps^{-1} \rceil}$. By \cref{thm:first_element}, this can be done using a space complexity of
\[
	O(\log \eps^{-1} + \log \log \rho) + \tilde{O}(\rho + \eps^{-1}) + \tilde{O}(\eps^{-1}k^{k\rho + 1} \rho)
	=
	\tilde{O}(\eps^{-1} k^{k\rho + 1} \rho)
	\enspace.
\]
Multiplying the last bound with the bound we have on the number of threads used by \cref{alg:non-monotone_SMkM} completes the proof of the observation.
\end{proof}

\subsection{Algorithm for {\FESMkM}} \label{sec:first_element_SMkM}

In this section we prove \cref{thm:first_element}, which we repeat here for convenience.

\thmFirstElement*

The algorithm we use to prove \cref{thm:first_element} appears as \cref{alg:many_matroids}. In this algorithm we denote the elements of $\cN$ by $u_1, u_2, \dotsc, u_n$ in the order of their arrival. The algorithm maintains a set $S$ of states, which is updated during the processing of each arriving element. In the pseudocode of the algorithm, the set of states immediately after the processing of element $u_j$ is denoted by $S_j$. Each state $s \in S$ is a tuple of sets $s_1, s_2, \dotsc, s_k$, where the set $s_i$ is independent in the matroid $M_i$. We also use $\bar{\varnothing}$ to denote the tuple of $k$ empty sets, and use the shorthand $s(i \gets A)$ to denote the tuple obtained from $s$ by replacing $s_i$ with the set $A$.

Initially, the only state in the set $S$ of \cref{alg:many_matroids} is the tuple $\bar{\varnothing}$. Then, the algorithm starts to process the elements of the stream. Whenever an element $u_j$ arrives, the algorithm considers all the states $s \in S$. If there exists a set $s_i$ in $s$ to which $u_j$ cannot be added without violating the independence in the matroid corresponding to this set, then the state $s$ is simply kept in the set $S$ as is. Otherwise, the element $u_j$ is marked as an element of $U$ and the state $s$ is replaced with $k$ new states, each corresponding to the addition of $u$ to exactly one of the $k$ sets of the original state $s$.

\noindent \textbf{Remark.} We note that $u_j$ might be marked multiple times by \cref{alg:many_matroids} (potentially, once for every state $s \in S$). The element $u_j$ is considered to be marked as an element of $U$ if the algorithm marks it at least once, and is considered to be unmarked otherwise.

\begin{algorithm}
\caption{The Algorithm {\FEA} for {\FESMkM}} \label{alg:many_matroids}
\begin{algorithmic}[1]
\State Let $S_0 \gets \overline{\varnothing}$.
\For{every element $u_j \in \cN$ that arrives}
	\State Let $S_{j} \gets \varnothing$.
	\For{every state $s \in S_{j - 1}$}
		\If{$s_i + u_j \in \cI_i$ for all integer $1 \leq i \leq k$}\label{line:condition_multiple}
			\State Mark $u$ (as an element of $U$).
			\For{$i = 1$ \textbf{to} $k$}
				Add $s(i \gets (s_i + u_j))$ to $S_{j}$. \label{line:state_create}
			\EndFor
		\Else
			\State Add $s$ to $S_{j}$. 
		\EndIf
	\EndFor
\EndFor
\end{algorithmic}
\end{algorithm}

We begin the analysis of \cref{alg:many_matroids} by bounding the number of elements that it marks and its space complexity.

\begin{observation}
\cref{alg:many_matroids} marks $O(k^{\rho k - 1})$ elements, and its space complexity is $\tilde{O}(k^{\rho k + 1}\rho)$.
\end{observation}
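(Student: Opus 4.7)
The plan is to bound the number of states maintained by the algorithm at any point in time, and then derive both claims from that bound. First I would observe that each state $s = (s_1, \ldots, s_k)$ in $S$ satisfies $|s_i| \leq \rho$ for every $i$ (since $s_i$ is maintained as an independent set in $M_i$, whose rank is at most $\rho$), so the total ``size'' $\sum_{i=1}^k |s_i|$ of a state is bounded by $k\rho$.

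Next I would view the execution as building a tree of states, whose root is $\bar\varnothing$ and whose leaves at time $j$ are exactly the states in $S_j$. Whenever an element $u_j$ is processed, a leaf $s$ either stays a leaf (if the condition on \cref{line:condition_multiple} fails) or is replaced by $k$ new children obtained from \cref{line:state_create} (each child has $\sum_i |s_i|$ increased by exactly one). Consequently, any root-to-leaf path has length at most $k\rho$, and at each internal node there are exactly $k$ children, giving the crude bound $|S_j| \leq k^{k\rho}$ at all times.

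From this I would bound the number of marked elements as follows. Whenever an element $u_j$ is marked, at least one state in $S_{j-1}$ satisfies the condition on \cref{line:condition_multiple}, and that state is replaced by $k$ new states, contributing a net increase of at least $k-1$ to $|S|$. Since $|S|$ starts at $1$, never decreases, and is always bounded by $k^{k\rho}$, the total number of marked elements is at most
\[
\frac{k^{k\rho} - 1}{k - 1} = O(k^{\rho k - 1}),
\]
as claimed. For the space complexity bound, each of the $O(k^{k\rho})$ states stores $k$ sets, each containing at most $\rho$ elements; storing a single element takes $\tilde{O}(1)$ bits (absorbing the $\log n$ factor into the $\tilde O$), so each state needs $\tilde{O}(k\rho)$ space, and the total space is $\tilde{O}(k^{k\rho + 1}\rho)$.

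The main obstacle is getting the tree/depth argument airtight: one must carefully note that creating $k$ children on \cref{line:state_create} strictly increases the total size $\sum_i |s_i|$ (thus bounding path length by $k\rho$), and also verify that at most one element is marked per incoming stream element for the purpose of the ``mark-implies-growth'' accounting (the Remark makes this counting convention explicit). The rest is a straightforward combinatorial estimate.
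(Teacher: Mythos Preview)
Your proof is correct and rests on the same core observation as the paper's: every state has total size $|s|=\sum_i|s_i|\le k\rho$, and each split replaces one state by $k$ states with $|s|$ increased by one. The paper packages this via a decreasing potential $\phi(S)=\sum_{s\in S}\frac{k^{\rho k-|s|}-1}{k-1}$, showing $\phi$ drops by at least one per marked element and reading off both bounds from $\phi(S_0)=\frac{k^{\rho k}-1}{k-1}$; you instead bound $|S|\le k^{k\rho}$ directly from the tree depth and use that $|S|$ grows by at least $k-1$ per marking. These are dual accountings of the same branching process---your route is a bit more elementary in that it avoids defining a potential, while the paper's potential delivers the marking bound and the state-count bound in a single computation. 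One small point worth making explicit in your write-up: the $k$ children of a split are pairwise distinct and also distinct from every state copied unchanged into $S_j$ (since $u_j$ cannot already appear in any component of a state in $S_{j-1}$); this is what guarantees $|S|$ genuinely increases by $k-1$ per split rather than possibly less due to collisions in the set $S_j$.
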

\begin{proof}
Given a state $s$, let us define its size as $|s| = \sum_{i = 1}^k |s_i|$; and consider the potential function
\[
	\phi(S)
	=
	\sum_{s \in S} \frac{k^{\rho k - |s|} - 1}{k - 1}
	\enspace.
\]
Fix now some integer $1 \leq j \leq n$, and let us lower bound the difference $\phi(S_{j - 1}) - \phi(S_j)$. If $u_j$ is not marked by \cref{alg:many_matroids}, then the sets $S_{j - 1}$ and $S_j$ are identical, and thus, $\phi(S_{j - 1}) = \phi(S_j)$. Otherwise, we denote by $R$ the set of states in $S_{j - 1}$ for which the condition on \cref{line:condition_multiple} of \cref{alg:many_matroids} evaluated to FALSE. Since $u_j$ is marked, $R$ is a strict subset of $S_{j - 1}$, and therefore,
\begin{align*}
	\phi(S_{j - 1})
	={} &
	\sum_{s \in S_{j - 1}} \frac{k^{\rho k - |s|} - 1}{k - 1}
	=
	\sum_{s \in R} \frac{k^{\rho k - |s|} - 1}{k - 1} + \sum_{s \in S_{j - 1} \setminus R} \frac{k^{\rho k - |s|} - 1}{k - 1}\\
	={} &
	\sum_{s \in R} \frac{k^{\rho k - |s|} - 1}{k - 1} + \sum_{s \in S_{j - 1} \setminus R} \left[1 + \sum_{i = 1}^k \frac{k^{\rho k - |s(i \gets (s + u_j))|} - 1}{k - 1}\right]\\
	\geq{} &
	\sum_{s \in S_j} \frac{k^{\rho k - |s|} - 1}{k - 1} + |S_{j - 1} \setminus R|
	\geq
	\sum_{s \in S_j} \frac{k^{\rho k - |s|} - 1}{k - 1} + 1
	\enspace.
\end{align*}
In conclusion, we have proved that the potential $\phi$ decreases by at least $1$ when $u_j$ is marked, and remains unchanged otherwise. Since the potential is always non-negative (observe that every set $s_i$ of a state $s$ must be independent in $M_i$, and thus, of size at most $\rho$), this implies that the number of elements marked by \cref{alg:many_matroids} is upper bounded by
\[
	\phi(S_0)
	=
	\phi(\bar{\varnothing})
	=
	\frac{k^{\rho k - |\bar{\varnothing}|} - 1}{k - 1}
	=
	\frac{k^{\rho k} - 1}{k - 1}
	=
	O(k^{\rho k - 1})
	\enspace.
\]

To prove the second part of the observation, we need to note that the above analysis in fact shows that the potential $\phi$ decreases by at least $|S_{j - 1} \setminus R|$, where $|S_{j - 1} \setminus R|$ is the number of states that are not copied as is from $S_{j - 1}$ to $S_j$. As each one of these states yields $k$ states in $S_j$, we get that the number of states that the algorithm maintains at every given point is at most $1$ plus $k$ times the total decrease in the potential $\phi$. Since this potential never increases, this implies that the algorithm keeps at most $1 + k \cdot \phi(S_0) = O(k^{\rho k})$ states. The second part of the observation now follows since each state consists of $k$ sets, each independent in at least one of the matroids $M_1, M_2, \dotsc, M_k$, and therefore, can be represented using $\tilde{O}(k \rho)$ space.
\end{proof}

Fix now an arbitrary common independent set $O \subseteq \cN$, and let $o_1$ be the first element of $O$ that arrives in the input stream. To prove that \cref{alg:many_matroids} is a valid algorithm for {\FESMkM}, we have to show that it always marks an element $u$ that arrives weakly before $o_1$ and obeys that $O - o_1 + u$ is also a common independent set. Towards this goal, let us say that a state $s$ \emph{has potential} if for every integer $1 \leq i \leq k$ and element $v \in s_i$ we have $O - o_1 + v \not \in \cI_i$. Clearly, the initial state $\bar{\varnothing}$ has potential (since its sets include no elements). The following lemma shows that we keep having a state with potential as long as we do not mark an element $u$ with the required properties.
\begin{lemma}
Assume there is a state $s \in S_{j - 1}$ which has potential for some integer $1 \leq j \leq n$, and $u_j$ is either not marked by \cref{alg:many_matroids} or obeys $O - o_1  + u_j \not \in \cI_i$ for at least one matroid $M_i$, then $S_j$ also includes a state that has potential.
\end{lemma}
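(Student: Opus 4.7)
The plan is to do a straightforward case analysis on what happens to the state $s$ during the processing of $u_j$, using the structure of the algorithm: for each state in $S_{j-1}$, it is either copied verbatim into $S_j$ (when $u_j$ cannot be added to some component of the state), or is replaced by the $k$ states $s(i \gets s_i + u_j)$ for $i = 1, \dots, k$ (when $u_j$ can be added to every component). The goal is to identify, in each case, one state in $S_j$ that still has potential.

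First I would handle the easy case in which $s$ is copied unchanged into $S_j$. This covers two sub-situations simultaneously: the case when $u_j$ is not marked at all (so every state, including $s$, is copied verbatim by the algorithm), and the case when $u_j$ is marked but the condition on \cref{line:condition_multiple} fails for our particular state $s$. In both sub-situations $s$ itself survives into $S_j$, and its potential property is exactly the same statement as before, so there is nothing to check.

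The remaining case is when $u_j$ is marked, the condition on \cref{line:condition_multiple} holds for $s$, and (by the hypothesis of the lemma) there is some matroid index $i^{\star}$ such that $O - o_1 + u_j \notin \cI_{i^{\star}}$. Here the plan is to show that the particular new state $s' := s(i^{\star} \gets s_{i^{\star}} + u_j) \in S_j$ retains potential. For indices $i \neq i^{\star}$ the component $s'_i = s_i$ is unchanged, so the defining inequalities $O - o_1 + v \notin \cI_i$ for $v \in s'_i$ hold by the assumption that $s$ has potential. For the index $i^{\star}$, the component $s'_{i^{\star}} = s_{i^{\star}} + u_j$ picks up exactly one new element $u_j$; the inequalities for $v \in s_{i^{\star}}$ are inherited from $s$, and the single new inequality $O - o_1 + u_j \notin \cI_{i^{\star}}$ is precisely the hypothesis that singled out $i^{\star}$.

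No part of this is really a serious obstacle; the main thing to be careful about is to align the quantifiers correctly in the ``potential'' definition and to use the hypothesis in the right place. In particular, the hypothesis that $u_j$ fails to be independent with $O - o_1$ in at least one matroid is only needed in the final case, and it is exactly the flexibility that lets the algorithm's choice to branch into $k$ children (one per matroid index) keep at least one descendant in the ``potential'' set.
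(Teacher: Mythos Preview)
Your proposal is correct and follows essentially the same approach as the paper: either $s$ survives unchanged into $S_j$, or else the condition on \cref{line:condition_multiple} holds for $s$, in which case $u_j$ is marked and the hypothesis supplies an index $i^\star$ with $O - o_1 + u_j \notin \cI_{i^\star}$, so the child $s(i^\star \gets s_{i^\star} + u_j)$ has potential. Your write-up is in fact a bit more explicit than the paper's in spelling out the sub-cases and verifying the potential property coordinate by coordinate.
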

\begin{proof}
If the state $s$ is copied to $S_j$, then we are done. Therefore, we can assume from now on that $s$ is not copied to $S_j$, which implies that $u_j$ is marked, and by the assumption of the lemma obeys $O - o_1 + u_j  \not \in \cI_i$ for at least one matroid $M_i$. One can observe that the last property implies that the state $s' = s(i \gets (s_i + u_j))$ has potential because the state $s$ has potential and the only difference between $s'$ and $s$ is the addition of the element $u_j$ to $s_i$. The lemma now follows since the state $s'$ belongs to $S_j$.
\end{proof}

The last lemma implies that either some element obeying the required properties is marked before the arrival of $o_1$, or there is a state with potential immediately before the arrival of $o_1$. The following lemma shows that $o_1$ is marked whenever this happens.

\begin{lemma}
Let us denote $u_{j'} = o_1$. If there exists a state $s \in S_{j' - 1}$ that has potential, then $o_1$ is marked by \cref{alg:many_matroids}.
\end{lemma}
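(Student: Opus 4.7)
The goal is to show that when the algorithm processes $u_{j'} = o_1$, the state $s \in S_{j'-1}$ with potential satisfies the condition on \cref{line:condition_multiple}, i.e., $s_i + o_1 \in \cI_i$ for every $i \in [k]$; once this is established, the algorithm marks $o_1$. So the entire task reduces to deriving a contradiction between ``$s$ has potential'' and ``$s_i + o_1 \notin \cI_i$ for some $i$''.

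The plan is therefore as follows. Fix an arbitrary index $i \in [k]$ and assume for contradiction that $s_i + o_1 \notin \cI_i$. Since $s_i$ itself is independent in $M_i$ (an invariant of states produced by \cref{alg:many_matroids}), this means $o_1$ lies in the span of $s_i$ in $M_i$. Consequently, $\Span_{M_i}(s_i \cup (O - o_1))$ contains $o_1$, hence it contains all of $O$, so the rank of $s_i \cup (O - o_1)$ in $M_i$ is at least $r_i(O) = |O|$. On the other hand, $O - o_1$ is independent in $M_i$ with $|O - o_1| = |O| - 1$, so by the matroid augmentation axiom applied to the independent set $O - o_1$ inside the larger set $s_i \cup (O - o_1)$, there exists an element $v \in s_i \setminus (O - o_1) \subseteq s_i$ such that $(O - o_1) + v \in \cI_i$. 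This directly contradicts the assumption that $s$ has potential, since potential forbids $(O - o_1) + v \in \cI_i$ for any $v \in s_i$.

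The main (and only real) obstacle is the matroid exchange step: one must justify that from $s_i + o_1 \notin \cI_i$ one can ``transfer'' dependence into an augmenting element of $s_i$ for $O - o_1$. This is essentially a one-line rank computation combined with the standard augmentation axiom, but it is the only nontrivial content in the argument; everything else (the invariant that each $s_i$ is independent in $M_i$, the fact that $O \in \cI_i$, and the translation of the condition on \cref{line:condition_multiple} into the claim that $o_1$ is marked) is bookkeeping. Repeating the argument for each $i \in [k]$ then gives the lemma. Combined with the preceding lemma (which guarantees some state with potential survives up to the arrival of $o_1$ unless an earlier suitable element has already been marked), this completes the verification that \FEA is a valid algorithm for \FESMkM, and hence proves \cref{thm:first_element}.
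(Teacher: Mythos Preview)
Your proposal is correct and follows essentially the same approach as the paper's proof: both argue by contradiction that if $s_i + o_1 \notin \cI_i$ for some $i$, then $s_i$ spans $o_1$ in $M_i$, which forces the existence of some $v \in s_i$ with $(O - o_1) + v \in \cI_i$, contradicting the potential of $s$. The only cosmetic difference is that the paper derives the key rank inequality $\rank_i(O \cup s_i - o_1) \geq \rank_i(O)$ via submodularity of the rank function, whereas you phrase it via the span and then invoke the augmentation axiom; these are equivalent standard matroid facts.
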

\begin{proof}
Assume towards a contradiction that $o_1$ is not marked by \cref{alg:many_matroids}. That means that there exists some integer $1 \leq i \leq k$ such that $s_i$ spans $o_1$ in $M_i$. Thus, if we denote the rank function of $M_i$ by $\rank_i$, then, by the submodularity of $\rank_i$,
\begin{align*}
	\rank_i(O \cup s_i - o_1)
	={} &
	\rank_i(O \cup s_i) - [\rank_i(O \cup s_i) - \rank_i(O \cup s_i - o_1)]\\
	\geq{} &
	\rank_i(O \cup s_i) - [\rank_i(s_i + o_1) - \rank_i(s_i)]
	=
	\rank_i(O \cup s_i)
	\geq
	\rank_i(O)
	\enspace.
\end{align*}
This inequality implies that there exist an element $u \in s_i$ such that $O - o_1 + u \in \cI_i$, which contradicts our assumption that $s$ has potential.
\end{proof}

By the above discussion, the above lemmata imply the following corollary, which completes the proof of \cref{thm:first_element}.
\begin{corollary}
If no element $u$ that obeys $O - o_1 + u \in \cI_i$ for every integer $1 \leq i \leq k$ is marked by \cref{alg:many_matroids} before the arrival of $o_1$, then $o_1$ itself is selected by the algorithm.
\end{corollary}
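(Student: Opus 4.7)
My plan is a straightforward induction combining the two preceding lemmata with the observation that the initial state $\bar{\varnothing}$ trivially has potential (since all its component sets are empty, the quantifier in the definition of ``has potential'' is vacuously satisfied).

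Concretely, let $j'$ be the index with $u_{j'} = o_1$. The first step is to show by induction on $j = 0, 1, \dotsc, j' - 1$ that under the hypothesis of the corollary, the set $S_j$ maintained by \cref{alg:many_matroids} contains at least one state that has potential. The base case $j = 0$ holds because $S_0 = \{\bar{\varnothing}\}$ and $\bar{\varnothing}$ vacuously has potential. For the inductive step, fix $1 \le j \le j' - 1$ and note that the hypothesis of the corollary, restricted to the element $u_j$ that arrives strictly before $o_1$, says: it is not simultaneously true that $u_j$ is marked and that $O - o_1 + u_j \in \cI_i$ for every $i$. Equivalently, either $u_j$ is not marked, or there exists some index $i$ with $O - o_1 + u_j \notin \cI_i$. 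This is precisely the hypothesis of the first lemma above, so applying it to a state in $S_{j-1}$ that has potential (guaranteed by the inductive hypothesis) yields a state in $S_j$ that has potential.

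The second step is simply to invoke the second lemma at index $j'$: since $S_{j' - 1}$ contains a state with potential, $o_1 = u_{j'}$ is marked by \cref{alg:many_matroids} upon its arrival, which is exactly the conclusion we want.

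I do not expect any real obstacle here; this corollary is essentially a bookkeeping step that strings the two lemmata together with the correct induction, and the only thing to be careful about is matching the logical negation in the corollary's hypothesis (``no earlier element is both marked \emph{and} a valid swap for $o_1$'') with the disjunction appearing in the hypothesis of the first lemma (``$u_j$ is not marked \emph{or} $O - o_1 + u_j \notin \cI_i$ for some $i$''), which I have verified above.
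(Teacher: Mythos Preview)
Your proof is correct and follows exactly the approach the paper intends: the paper itself does not give a standalone proof of this corollary, merely observing that ``by the above discussion, the above lemmata imply'' it, where the ``above discussion'' is precisely the observation that $\bar{\varnothing}$ has potential together with the inductive use of the first lemma to propagate potential through $S_0,\dotsc,S_{j'-1}$ and then the second lemma at $j'$. You have simply made that induction explicit, including the check that the corollary's hypothesis unpacks to the disjunction required by the first lemma.
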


\newcommand{\algsub}{\ensuremath{\mathcal{A}}}
\newcommand{\algmat}{\ensuremath{\mathcal{B}}}
\section{Impossibility Results for Bipartite Matching Constraints} 
\label{sec:main-hardness}

In this section we present our impossibility results for the problem of \texttt{Monotone Submodular Maximization subject to Bipartite Matching} (\MSMBM). In this problem the ground set consists of the set of edges of some bipartite graph $G = (V, E)$, and we are given a non-negative monotone submodular function $f\colon 2^\cN \to \nnR$. The objective is to output a feasible matching of $G$ maximizing $f$. In the presence of parallel edges, we improve the recent hardness of $1.916$ for this problem (that relies on complexity theoretic assumptions) to the following unconditional result (notice that this result is a restatement of \cref{thm:bipartite_matching}).
\thmBipartiteMatching*
This result is obtained by combining two hardness results: the one-way communication complexity of \chainPn{} and streaming lower bounds for the bipartite maximum matching problem. 
It is a longstanding open question whether it is possible to devise a data stream algorithm for the maximum matching problem with a better approximation guarantee than $2$, even if we are allowed to use memory $O(r^{2-\eps})$ for some constant $\eps > 0$. The following result basically says that improving over the guarantee $3$ for maximizing a monotone submodular function  subject to a bipartite matching constraint in the data stream model would lead to such a breakthrough. 

\begin{restatable}{theorem}{thmmainmatchingredreswithass}
For any constant $0 < \eps < 1$, assuming \cref{thm:matching-hard-nice} for $\alpha=2-\eps$ and $\mathcal{M} (r) = O(r^{2-\eps})$, any single-pass data stream algorithm for  {\MSMBM} on $r$-vertex graphs 
that finds a $(3-O(\eps))$-approximate solution with probability at least $2/3$ uses memory $\Omega (r^{2-\eps}/\log r)$.
\label{thm:hardness_matching_with_assumption}
\end{restatable}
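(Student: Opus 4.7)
The plan is to combine two hardness ingredients via a single reduction: the $(2-\eps)$ matching hardness postulated by \cref{thm:matching-hard-nice}, and the \chainPn-based one-way communication lower bound underpinning the submodular-cardinality hardness of~\cite{feldman_2020_one-way} (recalled as \cref{thm:pindex_hardness}). Neither ingredient alone rules out a $(3-O(\eps))$-approximation, but when intertwined in a single edge stream, the algorithm is forced to solve both subproblems simultaneously, yielding an amplified hardness factor of $3-O(\eps)$.

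Concretely, I would take a hard matching instance $G$ on $\Theta(r)$ vertices per side (promising a gap between matching size $\mu$ and $\mu/(2-\eps)$), and superimpose on it a \chainPn instance whose elements are the edges of $G$. The arrival order and labels of the edges in the stream encode both the matching instance and the bit strings of the chain protocol's players. The objective $f\colon 2^{E(G)}\to \nnR$ is a coverage-type monotone submodular function with three contributions, each of value roughly $\mu$ in the full YES case: (i) matching size, (ii) coverage of a hidden signature set of edges corresponding to the $1$-case of the chain instance, and (iii) a combined bonus activated only when a feasible matching is simultaneously of size $\mu$ and composed of signature edges. Monotonicity and submodularity follow from writing $f$ as a coverage function over a carefully designed universe.

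In the YES case (matching YES and chain $1$-case), I would exhibit a matching of size $\mu$ that unlocks all three contributions, so $\OPT \geq 3\mu$. In any NO case (matching NO or chain $0$-case), the combined bonus cannot be activated, and one shows that every feasible matching achieves $f$-value at most $(1+O(\eps))\mu$: either its size is capped at $\mu/(2-\eps)$, or the signature contribution shrinks by the chain hardness, or both effects compound. The resulting $3-O(\eps)$ optimum-to-optimum gap means that a streaming $(3-O(\eps))$-approximation must distinguish YES from NO instances; by a standard simulation, such an algorithm either solves the hard matching instance (needing $\Omega(\mathcal{M}(r)) = \Omega(r^{2-\eps})$ memory by assumption) or solves the planted chain instance (needing comparable memory by \cref{thm:pindex_hardness}). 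The $\log r$ factor in the final bound of $\Omega(r^{2-\eps}/\log r)$ absorbs the cost of encoding individual edges in bits.

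The principal obstacle is designing $f$ and the embedding so that a partial success on only one of the two subproblems cannot yield $f$-value close to $3\mu$. A naive additive combination of matching and submodular rewards typically gives only a $\tfrac{3}{2}$ or $2$ gap, since a NO solution can still score fully on the other component; obtaining a gap of $3-O(\eps)$ requires that the combined bonus in (iii) be genuinely non-separable—activated only by simultaneous success on both sides—while remaining monotone submodular in the edges. A secondary technical challenge is to correlate the matching instance and the chain instance so that their YES/NO cases coincide, which I plan to arrange by driving both subproblems from the same source of randomness in the hard-instance distribution, and by routing the chain protocol's bit strings through the edge stream in a manner compatible with the value-oracle model, leveraging the flexibility discussed at the end of \cref{ssc:oracles}.
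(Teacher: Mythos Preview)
Your plan hinges on two steps that are not achievable with the stated ingredients. First, the ``non-separable bonus'' (iii): for the YES optimum to reach $3\mu$ you need a matching of size $\mu$ consisting entirely of signature edges, which forces the signature set (determined by the chain indices $t^2,\dots,t^p$) to coincide with a maximum matching of $G$. That is precisely the correlation you flag as a ``secondary technical challenge'', but it is in fact the primary obstacle, and it cannot be arranged from \cref{thm:matching-hard-nice}: that statement only guarantees, for each fixed streaming algorithm, the \emph{existence} of a single hard graph---there is no hard distribution to couple with the chain randomness. Second, even granting a correlated instance, you give no construction of a monotone submodular $f$ whose bonus component is activated only by simultaneous success; a threshold-style trigger is not submodular, and any submodular smoothing typically lets partial solutions in the mixed case (chain $1$-case, matching NO) recover a constant fraction of the bonus, collapsing the gap below $3$.

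The paper circumvents both issues by abandoning the optimum-to-optimum paradigm. It uses $p$ hard matching instances $G_1,\dots,G_p$ and $n$ parallel copies of each edge, with the recursive objective~\eqref{eq:def_of_f}, in which non-special copies from layer $i$ multiplicatively discount the contribution of all later layers. In the chain $0$-case every copy is non-special, so \cref{property:nocase} of \cref{lemma:gen_hardness_F_properties} gives $f(S)<1+p/(\alpha+1)$ for any $S$ satisfying $|S\cap\cN_i|\le m_i/\alpha$ for every $i$. Crucially, this is a bound on the \emph{algorithm's output}, not on $\OPT$: the graphs $G_i$ are chosen one by one, adversarially against a simulation of $\algsub$ (\cref{lemma:successful_selection}), so that the matching it can extract in each layer is small. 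The chain hardness then supplies the YES optimum $\approx p/(1+\eps)$, and the resulting ratio tends to $\alpha+1=3-\eps$ as $p\to\infty$. Thus the two hardness sources act on opposite sides of the reduction and are fused through the multiplicative recursion, not through an additive bonus term.
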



\subsection{Preliminaries}

Our impossibility results harness the hardness of two problems:  bipartite matching in the data stream model and \chainPn{}.

\paragraph{Bipartite matching in the data stream model.} The task of devising optimal data stream algorithms for finding a matching in  (bipartite) graphs remains a notorious open question. 
In this problem, the algorithm is provided with a stream of edges of a bipartite graph $G = (V, E)$, and is allowed to use a limited amount of memory while processing them. 
We assume that before it starts to read the stream, the algorithm has unbounded computational power (i.e., unbounded time and memory) to  initialize, and similarly, after reading the last edge from the stream, the algorithm again has unbounded computational power to produce a matching based on what it has stored in its memory.  
Kapralov~\cite{DBLP:conf/soda/Kapralov21} proved that any (potentially randomized) $1.692$-approximate data stream algorithm for this problem with success probability at least $1/2$ requires $\Omega(|V| \log^{\omega(1)} |V|)$ memory. His result is information theoretic and he shows that there is a hard distribution of $r$-vertex bipartite graphs (for large enough $r$) so that no algorithm of ``small'' memory can find a ``good'' solution,  even if it has unbounded computational power while processing edges (but limited memory in-between the arrival of edges). In particular, his result implies the following theorem with $\mathcal{M}(r) = O(r \log^{O(1)} r)$ and $\alpha = 1.692$.

\begin{theorem} \label{thm:matching-hard-nice}
There is an infinite number of positive integers $r$  such that the following holds. 
Consider a single-pass data stream algorithm $\algmat$ for the bipartite matching problem that uses memory at most $\mathcal{M}(r)$. Then,  there  is  an $r$-vertex instance $G$   such that $\algmat$ finds an $\alpha$-approximate matching with probability at most $1/2$ on input $G$.
\end{theorem}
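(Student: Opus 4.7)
The plan is to deduce \cref{thm:matching-hard-nice} as a direct corollary of Kapralov's hardness result for maximum bipartite matching in the single-pass data stream model via Yao's minimax principle. Kapralov's construction yields, for each sufficiently large $r$, an explicit distribution $\mathcal{D}_r$ supported on $r$-vertex bipartite graphs such that every deterministic single-pass streaming algorithm using memory $\mathcal{M}(r) = O(r \log^{O(1)} r)$ fails to output a $1.692$-approximate matching on a random $G \sim \mathcal{D}_r$ with probability at least $\tfrac{1}{2}$. The guarantee is information-theoretic: it permits unbounded computation between stream updates and only bounds the bit-length of the persistent memory state, which exactly matches the model assumed in \cref{thm:matching-hard-nice}.

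To obtain the per-instance quantifier order of \cref{thm:matching-hard-nice}, I would fix $\alpha = 1.692$ and $\mathcal{M}(r)$ as above and view a given randomized algorithm $\algmat$ as a distribution over deterministic algorithms with the same memory footprint (by conditioning on the random tape). Applying Kapralov's distributional bound to each deterministic algorithm in the support and then swapping the order of expectation yields
\[
\mathbb{E}_{G \sim \mathcal{D}_r}\bigl[\Pr_{\algmat}\bigl[\algmat \text{ is not } \alpha\text{-approximate on } G\bigr]\bigr] \;\geq\; \tfrac{1}{2}\enspace,
\]
so by an averaging argument there must exist a specific instance $G^\star$ in the support of $\mathcal{D}_r$ on which $\algmat$ fails with probability at least $\tfrac{1}{2}$ over its internal coins, i.e., on which $\algmat$ returns an $\alpha$-approximate matching with probability at most $\tfrac{1}{2}$. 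This is exactly the conclusion of \cref{thm:matching-hard-nice}.

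The ``infinite number of positive integers $r$'' clause is automatic, since Kapralov's hard distribution can be instantiated for arbitrarily large sizes (either directly, or by padding a fixed-size construction up to the next convenient value of $r$). I do not expect any substantive obstacle here: the entire argument is a textbook application of Yao's principle, and the only bookkeeping required is to check that the streaming models line up---namely, that both Kapralov's theorem and \cref{thm:matching-hard-nice} charge only the bit-length of the state carried between edge arrivals and otherwise allow unbounded computation, which is indeed the case.
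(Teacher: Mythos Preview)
Your proposal is correct and essentially matches how the paper handles this statement. In fact, the paper does not give its own proof of \cref{thm:matching-hard-nice} at all: it is stated as a template hypothesis and simply attributed to Kapralov, with the remark that his hard distribution of $r$-vertex bipartite graphs (and the information-theoretic nature of his bound) directly implies the theorem for $\mathcal{M}(r)=O(r\log^{O(1)}r)$ and $\alpha=1.692$. Your Yao/averaging argument is precisely the standard way to unpack that implication and convert a distributional lower bound into the per-instance quantifier order the theorem uses, so you are just making explicit what the paper leaves implicit.
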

 We have stated the theorem in this general form as we will use it as a template in our general reduction. Indeed, it is a conceivable that the theorem holds with $\mathcal{M}(r)= O(r^{2-\varepsilon})$ and $\alpha = 2 - \varepsilon$, which leads to our stronger (conditional) lower bound \cref{thm:hardness_matching_with_assumption}. 
 Throughout, we assume that the memory satisfies $ x \leq \mathcal{M}(x) \leq x^2$ for large enough $x \in \bR$, and the approximation guarantee satisfies $ 0 < \alpha < 1$.
Finally, we have the following corollary obtained by running $q$ independent copies of $\algmat$   and outputting the largest found matching among all copies.
\begin{corollary}
Let $q\geq 1$ be an integer. 
There is an infinite number of positive integers $r$  such that the following holds. 
Consider a single-pass data stream algorithm $\algmat$ for the bipartite matching problem that uses memory at most $\mathcal{M}(r)/q$. Then,  there  is  a $r$-vertex instance $G$   such that $\algmat$ finds an $\alpha$-approximate matching with probability at most $1/q$ on input $G$.
%
%
\label{corollary:matching-hard-nice}
\end{corollary}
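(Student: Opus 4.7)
The plan is a direct success-probability boosting argument, reducing \cref{corollary:matching-hard-nice} to \cref{thm:matching-hard-nice} by contradiction. Fix $q\geq 1$ and suppose, for infinitely many of the values of $r$ produced by the theorem, there is a single-pass streaming matching algorithm $\algsub$ of memory at most $\mathcal{M}(r)/q$ that on \emph{every} $r$-vertex bipartite instance returns an $\alpha$-approximate matching with probability strictly greater than $1/q$. My goal is to combine $q$ copies of $\algsub$ into an algorithm of memory at most $\mathcal{M}(r)$ whose success probability exceeds $1/2$ on every $r$-vertex input, which contradicts \cref{thm:matching-hard-nice}.

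Concretely, I would construct $\algmat$ by running $q$ independent copies of $\algsub$ in parallel on the same single pass over the edge stream, each copy equipped with its own fresh randomness, and at the end comparing the $q$ returned matchings and outputting one of largest cardinality. Since the copies share nothing beyond the stream itself, the memory footprint during the pass is at most $q\cdot\mathcal{M}(r)/q=\mathcal{M}(r)$; the unbounded pre- and post-processing power permitted in the model absorbs the seed generation and the final size comparison. On every $r$-vertex input, independence of the copies together with our assumption on $\algsub$ yields
\[
\Pr[\algmat \text{ returns an } \alpha\text{-approximate matching}] \geq 1 - \left(1 - \tfrac{1}{q}\right)^{\!q} \geq 1 - \tfrac{1}{e} > \tfrac{1}{2},
\]
using the standard inequality $(1-1/q)^q \leq 1/e$ valid for all $q\geq 1$. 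Applying \cref{thm:matching-hard-nice} at any one of the infinitely many $r$ in the common index set then gives the desired contradiction, so no such $\algsub$ can exist and the corollary follows.

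There is no serious obstacle; the argument is essentially one line of independent repetition. The only points worth a sentence of verification are (i) the memory-accounting claim that running $q$ parallel copies of $\algsub$ on a single pass uses at most the sum of their individual memories (no shared state beyond the incoming edges is needed), (ii) the implicit convention that independent random bits are free or counted outside the memory bound, as is standard in streaming lower-bound models consistent with the statement of \cref{thm:matching-hard-nice}, and (iii) the elementary inequality $(1-1/q)^q \leq 1/e$, which guarantees that the boosted success probability is bounded away from $1/2$ uniformly in $q$ and hence that the contradiction goes through for every integer $q\geq 1$.
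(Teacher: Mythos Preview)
Your proposal is correct and follows essentially the same approach as the paper, which states the corollary as being ``obtained by running $q$ independent copies of $\algmat$ and outputting the largest found matching among all copies.'' Your contrapositive framing with the inequality $(1-1/q)^q \leq 1/e$ is exactly the intended boosting argument.
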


\subsection{Hardness reduction for bipartite matching constraint}

As mentioned above, we describe a general reduction that harnesses the hardness of the bipartite matching problem in the data stream model. This general reduction formally appears as \cref{thm:main-matching-reduction}. We note that the general reduction implies \cref{thm:bipartite_matching,thm:hardness_matching_with_assumption} by selecting $\eps$ to be small enough and $p$ to be large enough. Specifically, \cref{thm:hardness_matching_with_assumption} follows by substituting in the assumptions $\mathcal{M}(r) = O(r^{2-\eps})$ and $\alpha = 2 - \eps$,  and \cref{thm:bipartite_matching} is implied  since~\cite{DBLP:conf/soda/Kapralov21} proved \cref{thm:matching-hard-nice} with $\mathcal{M}(r) = O(r\log^c r)$ for any constant $c>1$ and  $\alpha = 1.692$.

\begin{theorem} \label{thm:main-matching-reduction}
Assuming \cref{thm:matching-hard-nice}, for any $\varepsilon>0$ and integer $p\geq 2$, any data stream algorithm for {\MSMBM}  that finds a $\frac{p}{1+\varepsilon} \left(\frac{\alpha+1}{p+\alpha+1}  \right) $-approximate solution with probability at least $2/3$ must use at least $\frac{\mathcal{M} (r)}{5 \cdot 10p\cdot \log_{1+\eps}(r)}$ memory, where $r$ denotes the number of vertices of the bipartite graph.
\end{theorem}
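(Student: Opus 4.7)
The plan is to combine the data-stream matching hardness of \cref{thm:matching-hard-nice} with the one-way communication lower bound for \chainPm (\cref{thm:pindex_hardness}) inside a single hard \MSMBM instance, and show that any algorithm beating the stated ratio would break one of the two lower bounds. The three steps will be to construct a layered hard instance, embed the \MSMBM algorithm into a chain protocol, and extract from it a data-stream matching solver that uses too little memory. For the instance construction, given a hard $r$-vertex matching instance $H$ supplied by \cref{thm:matching-hard-nice} and a \chainPm instance with $m$ polynomial in $r$, I would build a bipartite graph $\tilde G$ consisting of $H$ plus $p-1$ auxiliary ``chain layers''---one per player $P_2,\dots,P_p$. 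As in Protocol~\ref{alg:chainreductionmultiple}, layer $i$ exposes an edge $e^i_j$ iff $x^i_j = 1$, and the index $t^i$ identifies the ``witness'' edge of layer $i-1$. Equip $\tilde G$ with a monotone submodular objective $f$ (e.g.\ a suitable coverage function) so that the optimal matching, formed of a maximum matching of $H$ together with the $p-1$ witnesses, has value proportional to $p+\alpha+1$, while any feasible matching missing a single witness has value at most proportional to $\alpha+1$. The specific ratio $\frac{p(\alpha+1)}{(1+\varepsilon)(p+\alpha+1)}$ then falls out directly from this normalisation together with the $(1+\varepsilon)$-slack absorbed by the threshold grid in the next step.

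Given a \MSMBM algorithm \ALG of memory $M$ and approximation ratio $\rho = \frac{p(\alpha+1)}{(1+\varepsilon)(p+\alpha+1)}$, I would plug it into the chain-protocol template of Protocol~\ref{alg:chainsubreduction}: Player $i$ simulates \ALG on its local substream of edges and forwards \ALG's state together with the indices $t^2,\dots,t^i$ to Player $i+1$. To handle the unknown value of OPT, each player runs in parallel $O(\log_{1+\varepsilon}r)$ copies of \ALG with geometrically spaced threshold guesses (the standard trick used throughout this paper and originating in~\cite{badanidiyuru2014streaming}). This yields a \chainPm protocol of per-message size $O(M\log_{1+\varepsilon}r + p\log m)$, and by the calibration of $f$, any solution of value at least $\mathrm{OPT}/\rho$ must contain all $p-1$ witnesses, allowing the last player to distinguish the $0$-case from the $1$-case with probability $2/3$.

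Running the same construction with a ``trivially satisfied'' chain instance (i.e., each $x^i$ set to the all-ones string) converts \ALG into a data-stream matching algorithm on $H$: decoding \ALG's output on the $H$-portion yields a matching of size at least a $(1/\alpha)$-fraction of $\mathrm{OPT}(H)$ whenever \ALG beats $\rho$. Amplifying via Corollary~\ref{corollary:matching-hard-nice} with $q = \Theta(p\log_{1+\varepsilon}r)$ independent parallel copies produces a matching algorithm of total memory $qM$ and success probability exceeding $1/q$, so if $M < \mathcal{M}(r)/(50\,p\log_{1+\varepsilon}r)$ then $qM < \mathcal{M}(r)$, contradicting \cref{thm:matching-hard-nice}. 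The constant $50$ absorbs the $2/3$-to-$1/q$ success amplification together with the union bound over the threshold grid. The hard part is the first step: $f$ must be monotone and submodular while simultaneously degrading by a factor of $(p+\alpha+1)/(\alpha+1)$ whenever even a single chain witness is missed, and remaining sensitive to the matching size in $H$ so that a sub-$\alpha$ matching on $H$ causes a correspondingly multiplicative drop in $f$. Once $f$ is correctly calibrated, the remaining two steps are routine streaming-to-chain simulation and parallel amplification, closely paralleling the arguments of Section~\ref{sec:inapproximability_oracle}.
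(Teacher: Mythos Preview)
Your high-level plan---combine the matching hardness with a chain reduction and blame the contradiction on whichever side breaks---is the right shape, and it matches the paper's strategy. But two aspects of your sketch differ from the paper in ways that, as written, do not go through.

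\medskip
\textbf{The hard instance is $p$ matching instances, not one.} In the paper the instance consists of $p$ hard matching graphs $G_1,\dots,G_p$, each replicated $n$ times (once per possible chain index), not a single $H$ plus $p-1$ isolated ``witness'' edges. These graphs are chosen \emph{adaptively}, one at a time: after $G_1,\dots,G_{i-1}$ are fixed, the authors simulate Player~$i$'s behaviour (running $\log_{1+\varepsilon}r$ parallel copies of $\mathcal{A}$ for the possible values of $m_i$) and invoke \cref{corollary:matching-hard-nice} to pick a $G_i$ on which $\mathcal{A}$ is unlikely to find a large matching in layer $i$. Your ``run it with a trivially satisfied chain instance'' step does not give this: you need the hard matching instance to be hard against the algorithm \emph{as it behaves inside the chain protocol after processing the earlier layers}, which is why the paper fixes the graphs sequentially. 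The $\log_{1+\varepsilon}r$ overhead comes from this guess of the matching size $m_i$, not from guessing $\mathrm{OPT}$.

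\medskip
\textbf{The submodular function cannot work the way you describe.} You say that missing a single chain witness should drop $f$ by a multiplicative factor of $(p+\alpha+1)/(\alpha+1)$ while $f$ stays monotone submodular. That is impossible: a monotone function cannot lose a large multiplicative chunk of its value from a single missing element unless that element alone has large marginal, in which case the algorithm would greedily find it. The paper's function $f_{o_1,\dots,o_p}$ resolves this differently. Edges in layer $i$ with the secret index $o_i$ contribute only to layer $i$; edges with any other index also contribute to layer $i$ but simultaneously \emph{discount} the contribution of later layers through the multiplicative factor $(1 - s(i,\neg o_i)/(p{+}1{-}i))$ in the recursion~\eqref{eq:def_of_f}. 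In the $0$-case no secret-index edges are in the stream, so every edge the algorithm picks in layer $i$ eats into the potential of layers $i{+}1,\dots,p$; combined with the per-layer cap $s(i)\le 1/\alpha$ coming from the matching hardness, this yields the upper bound $1+\frac{p}{\alpha+1}$ via the telescoping recursion in the proof of \cref{lemma:gen_hardness_F_properties}(d). The ratio $\frac{p(\alpha+1)}{(1+\varepsilon)(p+\alpha+1)}$ is exactly $\frac{p/(1+\varepsilon)}{1+p/(\alpha+1)}$, the quotient of the $1$-case lower bound and the $0$-case upper bound. You correctly flag this construction as ``the hard part,'' but the coverage-function intuition you offer does not have the right behaviour, and without this recursive discount mechanism the stated approximation threshold does not emerge.
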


Let $r\gg p$   be a large integer as is guaranteed by~\cref{corollary:matching-hard-nice}, i.e., for any data stream algorithm $\algmat$ for the bipartite matching problem that  uses memory at most $\mathcal{M}(r)/q$, there is an $r$-vertex instance $G$ such that $\algmat$ finds an $\alpha$-approximate matching with probability at most $1/q$ on input $G$. Here, we select $q = 10p$. We further let $n = \mathcal{M}(r)/(5 \cdot 10p)$, and we assume that $r$ and $n$ are selected to be large enough so that  $O(p \log n) < n$ (for the hidden constants appearing in the proofs) and $n/\log_{1+\eps}(r) + p \log(n) < n/(36p^2)$. This allows us to simplify some (technical) calculations.

Our approach is to assume the existence of an algorithm $\algsub$  for {\MSMBM} that finds a  $\frac{p}{1+\varepsilon} \left(\frac{\alpha+1}{p+\alpha+1}  \right) $-approximate solution with probability at least $2/3$ on any instance.  Using this algorithm, we provide a protocol for {\chainPpn}. Our protocol is parameterized by $r$-vertex instances $G_1 = (V_1, E_1), \ldots, G_p = (V_p, E_p)$ to the bipartite matching problem in the streaming  model. These instances will later be selected to be ``hard'' instances using \cref{corollary:matching-hard-nice} (see \cref{sec:graph_selection}). Throughout, for $i \in [p]$, we use $m_i$  to denote the smallest power of $1+\varepsilon$ that upper bounds the size of a maximum matching in $G_i$. 

Each player in our protocol for \chainPpn will simulate $\algsub$ on a monotone submodular function selected from a certain family. We describe this family of submodular functions next. We then, in \cref{sec:protocol_description}, describe and analyze the protocol assuming a ``good'' selection of the instances $G_1, \ldots, G_p$. Finally, in \cref{sec:graph_selection}, we show how to select such instances and explain how it implies \cref{thm:main-matching-reduction}. 



\subsubsection{Family of submodular functions} 
\label{sec:family_submodular_functions}
We start by defining an extended ground set based on the edge sets of the graphs $G_1, \ldots, G_p$. For $i\in [p]$, let 
\[
    \cN_i = \{(e, j)\mid e \in E_i, j\in [n]\}\,.
\]
In other words, $\cN_i$ contains $n$ parallel copies of each edge of $G_i$, one for each possible choice of $j\in [n]$. We shall use the notation $\cN_{\leq i} = \cN_1 \cup \cdots \cup \cN_i$, $\cN_{ \geq i} = \cN_i \cup \cN_{i+1} \cup \cdots \cup \cN_p$, and $\cN = \cN_{\leq p}$. Furthermore,  for a subset $S \subseteq \cN$, we let
\[
    s(i) = \frac{|S \cap \cN_i|}{m_i} \qquad \mbox{and} \qquad {s(i,\neg o_i)} = \frac{|\{(e, j) \in S \cap \cN_i \mid j \neq o_i \}|}{m_i}\enspace.
\]
Recall that $m_i$ denotes  (an upper bound on) the size of a maximum matching in $G_i$. Hence, assuming the edges in $S \cap \cN_i$ form a matching in $G_i$, $s(i) \in [0,1]$  denotes the approximation ratio of the considered matching. Similarly, ${s(i, \neg o_i)}$ measures the approximation ratio of those edges that  do \emph{not} correspond to some index $o_i$.

We now recursively define $p$ families of non-negative monotone submodular functions $\cF_p, \cF_{p-1},\allowbreak \ldots, \cF_1$.  Family $\cF_p$ contains a single monotone submodular function $f_{o_{p}}\colon 2^{\cN_p} \rightarrow \nnR$ defined by $f_{o_p}(S) = \min\left\{1, s(p)\right\}$. The use of the subindex $o_{p}$ in the last definition is not technically necessary, but it simplifies our notation.  For $i=p-1, \ldots, 1$, the family $\cF_i = \{f_{o_i,  \ldots, o_p} \mid o_i, \ldots, o_{p-1} \in [n]\}$ consists of $n^{p-i}$ monotone submodular functions on the ground set $\cN_{\geq i}$ that are defined recursively in terms of the functions  in $\cF_{i+1}$ as follows:
\begin{equation}
    f_{o_{i}, \ldots,  o_{p}}(S) =  \min\left\{p+1-i, {s(i)} + \left( 1 - \frac{s(i, \neg o_i)}{p+1-i}\right) f_{o_{i+1}, \ldots, o_{p}} (S \cap \cN_{\geq i+1}) \right\}\enspace .
    \label{eq:def_of_f}
\end{equation}
To intuitively understand the last definition. One should think of every graph $G_i$ as having a mass to be covered (in some sense). Every edge of $G_i$ covers equal amounts of mass from $G_i, G_{i + 1}, \dotsc, G_p$, except for the edges of $G_i$ with the index $o_i$, which cover only mass of $G_i$. Furthermore, edges of a single graph $G_i$ are correlated in the sense that the mass of $G_{i'}$ that they cover (for any $i \leq i' \leq p$) is additive, while edges of different graphs $G_i$, $G_{i'}$ that cover the mass of the same graph $G_{i''}$ do it in an independent way (so if the edges of $G_i$ cover a $q_1$ fraction of this mass and the edges of $G_{i'}$ cover $q_2$ fraction, then together they cover only $1 - (1 - q_1)(1 - q_2)$ of the mass of $G_{i''}$). Given this intuitive point of view, $f_{o_i, o_{i + 1}, \dotsc, o_p}(S)$ represents the total mass of the graphs $G_i, G_{i + 1}, \dotsc, G_p$ that is covered by the edges of $S$. Note that this explains why \cref{eq:def_of_f} includes a negative term involving $s(i, \neg o_i)$: if there are many edges of $G_i$ with indexes other than $o_i$ that appear in $S$, then a lot of the mass accounted for by $f_{o_{i + 1}, o_{i + 2}, \dotsc, o_p}(S)$ is counted also by $s(i)$.

%
\begin{observation}
For every $i \in [p]$, the functions of $\cF_i$ are non-negative, monotone and submodular.
\end{observation}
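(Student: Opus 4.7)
My plan is to prove the observation by downward induction on $i$, from $i = p$ to $i = 1$, strengthening the hypothesis to also include the uniform upper bound $f_{o_{i+1},\ldots,o_p}(T) \leq p - i$ for all $T$ (the outer $\min$ guarantees this at each level once non-negativity is established). The base case $i = p$ is immediate since $f_{o_p}(S) = \min\{1, |S \cap \cN_p|/m_p\}$ is the truncation of a non-negative modular function.

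For the inductive step, decompose $S \subseteq \cN_{\geq i}$ as $S = S_1 \cup S_2$ with $S_1 = S \cap \cN_i$, $S_2 = S \cap \cN_{\geq i+1}$, set $c = p+1-i$, and abbreviate $a(S_1) := s(i)$, $b(S_1) := s(i,\neg o_i)$, $h(S_2) := f_{o_{i+1},\ldots,o_p}(S_2)$. Then $a, b$ are non-negative modular with $a \geq b$ pointwise, and by the induction hypothesis $h$ is non-negative monotone submodular with $h \leq c - 1$. The definition reads $f_{o_i,\ldots,o_p}(S) = \min\{c,\, a(S_1) + (1 - b(S_1)/c)\,h(S_2)\}$, and the nuisance is that the coefficient $1 - b(S_1)/c$ is negative when $b(S_1) > c$, so the bracketed expression is in general neither monotone nor submodular.

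The key reformulation is to show $f_{o_i,\ldots,o_p} = \min\{c, F\}$ where
\[
    F(S) := a(S_1) + \pi(S_1)\,h(S_2), \qquad \pi(S_1) := \max\{0,\, 1 - b(S_1)/c\} \in [0,1].
\]
When $b(S_1) \leq c$ the two definitions coincide. When $b(S_1) > c$, the coefficient $1 - b(S_1)/c$ is negative and $h(S_2) - (c-1) \leq 0$, so $(1 - b(S_1)/c)\,h(S_2) \geq (1 - b(S_1)/c)(c-1)$, which combined with $a \geq b$ yields
\[
    a(S_1) + (1 - b(S_1)/c)\,h(S_2) \;\geq\; b(S_1)/c + (c-1) \;>\; c,
\]
while $a(S_1) \geq b(S_1) > c$ forces $\min\{c, F(S)\} = c$ as well; both expressions equal $c$. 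Non-negativity and the bound $f_{o_i,\ldots,o_p} \leq c$ are then immediate, which in particular discharges the strengthened induction clause.

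For monotonicity and submodularity, I analyze the marginals of $\min\{c, F\}$ by cases on where the added element $u$ lies. For $u \in \cN_{\geq i+1}$ the $F$-marginal is $\pi(S_1)\,h(u \mid S_2) \geq 0$, which is non-increasing in $S$ because $\pi$ is non-increasing in $S_1$ (being concave piecewise-linear in the modular quantity $b$) and $h(u \mid S_2)$ is non-increasing in $S_2$ by inductive submodularity. For $u \in \cN_i$ the $F$-marginal is $1/m_i + (\pi(S_1 + u) - \pi(S_1))\,h(S_2)$; its negative part has absolute value at most $(c-1)/(m_i c) < 1/m_i$, so the marginal is strictly positive, and monotonicity of this marginal in $S$ follows from the piecewise linearity/concavity of $\pi$ in $b$ together with the inductive monotonicity of $h$. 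The main obstacle is finishing the submodularity check at the seam $b(S_1) \approx c$, where $F$ itself genuinely fails the diminishing-returns inequality (small examples confirm this). The strategy is to exploit the saturation identity established above: once $b(\cdot) > c$ we have $f_{o_i,\ldots,o_p} = c$, and since $b$ is modular and nondecreasing this saturated status is preserved under adding further elements, so marginals vanish in saturated regimes and the diminishing-returns inequality holds trivially. A finite case analysis on the saturation status of each of $S$, $T$, $S+u$, $T+u$ (for $S \subseteq T$, $u \notin T$) then reduces the remaining unsaturated cases to direct inequalities that follow from $a \geq b$, $h \leq c-1$, concavity of $\pi$, and the inductive submodularity of $h$.
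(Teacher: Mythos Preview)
Your route differs from the paper's in a useful way. The paper's argument treats $1 - s(i,\neg o_i)/(p+1-i)$ as a \emph{non-negative} down-monotone submodular function (and explicitly writes ``$s(i,\neg o_i)\leq 1$'' in the monotonicity check), which holds on matchings but not on arbitrary subsets of $\cN_{\geq i}$; you correctly flag this and build a workaround via the saturation argument $b(S_1)>c \Rightarrow f(S)=c$.

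There is, however, one slip that you should fix. The map $x\mapsto \max\{0,1-x/c\}$ is a maximum of two affine functions, hence \emph{convex}, not concave; composing with the modular $b$ makes $\pi(S_1)$ supermodular, so the marginal $\pi(S_1+u)-\pi(S_1)$ is \emph{non-decreasing} in $S_1$. Your sentence ``monotonicity of this marginal in $S$ follows from the piecewise linearity/concavity of $\pi$'' is therefore false, and this is exactly the mechanism by which $F$ violates diminishing returns at the seam. Fortunately you do not need $\pi$ or $F$ at all. Your key computation already shows that $b(S_1)>c$ forces the \emph{original} inner expression $G(S)=a(S_1)+(1-b(S_1)/c)h(S_2)$ to exceed $c$, so $f(S)=c$ there. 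In the complementary regime $b(S_1)\leq c$ the coefficient $1-b/c$ is genuinely non-negative, $G$ is monotone, and a direct check of the three cases ($u,v\in\cN_i$; $u,v\in\cN_{\geq i+1}$; one of each) gives non-positive second differences for $G$; then $\min\{c,G\}$ inherits submodularity by the standard capping lemma. The cross-regime cases in your final analysis all collapse because any set with $b>c$ has $f=c$, and this status is inherited by supersets since $b$ is modular and nondecreasing. So your strategy is sound; just drop $\pi$ and run the case analysis on $G$ directly.
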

\begin{proof}
We prove the observation by downward induction on $i$. For $i = p$ it follows because $f_{o_p}$ is the minimum between a positive constant and the non-negative monotone and submodular function $s(i)$, and such a minimum is known to also have these properties (see, e.g., Lemma 1.2 of~\cite{buchbinder2018submodular}).

Assume now that the observation holds for $i + 1$, and let us prove it for $i$. The product
\[
	\left( 1 - \frac{s(i, \neg o_i)}{p+1-i}\right) f_{o_{i+1}, \ldots, o_{p}} (S \cap \cN_{\geq i+1})
\]
is the product of two non-negative submodular functions, one of which is monotone and the other down-monotone (i.e., $f(S) \geq f(T)$ for every $S \subseteq T$), and such products are known to be non-negative and submodular.\footnote{To see why, note that if $f$ is a non-negative monotone submodular function and $g$ is a non-negative down-monotone submodular function, then, with respect to the product $f \cdot g$, the marginal contribution of an element $u$ to a set $S$ that does not include it is given by $f(S + u) \cdot g(S + u) - f(S) \cdot g(S) = f(u \mid S) \cdot g(S + u) + f(S) \cdot g(u \mid S)$, which is a down-monotone function of $S$.} Therefore, the sum
\[
	s(i) + \left( 1 - \frac{s(i, \neg o_i)}{p+1-i}\right) f_{o_{i+1}, \ldots, o_{p}} (S \cap \cN_{\geq i+1})
\]
is non-negative and submodular since the sum of non-negative and submodular functions also has these properties (see, again, Lemma 1.2 of~\cite{buchbinder2018submodular}). This sum is also monotone since adding an edge $e \in \cN_i$ to $S$ either increases the sum by $1$, if the index of the edge $e$ is $o_i$, or by $1 - f_{o_{i + 1}, o_{i + 2}, \dotsc, o_p} / (p + 1 - i) \geq 0$, otherwise; and adding an edge $e \in \cM_{\geq i + 1}$ to $S$ can only increase the sum because $f_{o_{i+1}, \ldots, o_{p}}$ is monotone and $s(i, \neg o_i) \leq 1 \leq p + 1 - i$. The observation now follows since $f_{o_{i}, \ldots,  o_{p}}(S)$ is the minimum between the above sum and a positive constant.
\end{proof}

We let $\cF = \cF_1$. The following lemma proves some useful properties of this functions family.
\crefname{enumi}{Property}{Properties}
\begin{lemma}
   The monotone submodular functions in $\cF$ have the following properties: 
    \begin{enumerate}[label=(\alph*)]
            \item  For $i\in [p]$, any two functions $f_{o_1, \ldots, o_p}, f_{o'_1, \ldots, o'_p} \in \cF$ with $o_1 = o'_1, \ldots, o_{i-1} = o'_{i-1}$ are identical when restricted to the ground set $\cN_{\leq i}$. \label{property:indistinguishable}
        \item  We can evaluate $f_{o_1, \ldots, o_p}$, on input set $S$, using $n$ memory in addition to the input length.  \label{property:low_memory}
        \item Let $M_1, \ldots, M_p$ be maximum matchings in $G_1, \ldots, G_p$, respectively. Then
        \[
            f_{o_1, \ldots, o_p} (S) \geq p/ (1+\varepsilon) \qquad \mbox{for } S = \bigcup_{i=1}^p \{(e, o_i) \mid e\in M_i\}\enspace . 
        \] \label{property:yescase}
        \item For a subset $S \subseteq \cN$ such that $|S \cap \cN_i| \leq m_i / \alpha$ and  $\{(e, o_i) \in S \cap \cN_i\} = \varnothing$ for all $i\in [p]$,
        \[
            f_{o_1, \ldots, o_p} (S) < 1 + \frac{1}{\alpha+1}p\enspace .
        \] \label{property:nocase}
    \end{enumerate}
    \label{lemma:gen_hardness_F_properties}
\end{lemma}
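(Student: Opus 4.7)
The plan rests on a preliminary downward induction showing that $f_{o_j, \dotsc, o_p}(\varnothing) = 0$ for every $j$, which follows immediately from the recursion since $s(j) = s(j, \neg o_j) = 0$ on the empty set. Given this, property~(a) will follow by expanding the recursion on any $S \subseteq \cN_{\leq i}$: since $S \cap \cN_{\geq i+1} = \varnothing$, the inner call $f_{o_{i+1}, \dotsc, o_p}(S \cap \cN_{\geq i+1})$ evaluates to $0$, so the level-$i$ value of the recursion collapses to $\min\{p+1-i, s(i)\}$, which is independent of $o_i, \dotsc, o_p$. The outer levels $j = 1, \dotsc, i-1$ reference only $o_1, \dotsc, o_{i-1}$ (through $s(j, \neg o_j)$), which coincide by assumption. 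For~(b), storing the indices $o_1, \dotsc, o_p$ costs $O(p \log n)$ bits; a single scan of $S$ computes the $2p$ counters $|S \cap \cN_i|$ and $|\{(e, j) \in S \cap \cN_i \colon j \neq o_i\}|$; and the $p$-level nested recursion can then be evaluated inside-out using exact rational arithmetic, all fitting comfortably within $n$ extra memory under the standing assumption that $O(p \log n) < n$.

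For~(c), observe that by construction every element of $S \cap \cN_i$ carries second coordinate $o_i$, so $s(i, \neg o_i) = 0$ and the multiplier $(1 - s(i, \neg o_i)/(p+1-i))$ equals $1$. The choice of $m_i$ as the smallest power of $1+\varepsilon$ upper bounding $|M_i|$ yields $s(i) = |M_i|/m_i \geq 1/(1+\varepsilon)$, so, writing $F_i := f_{o_i, \dotsc, o_p}(S \cap \cN_{\geq i})$, the recursion simplifies to $F_i = \min\{p+1-i,\, s(i) + F_{i+1}\}$. A straightforward downward induction then gives $F_i \geq (p+1-i)/(1+\varepsilon)$, which at $i = 1$ produces $f_{o_1, \dotsc, o_p}(S) \geq p/(1+\varepsilon)$.

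For~(d), I would first note that since no element of $S \cap \cN_i$ has index $o_i$, we have $s(i, \neg o_i) = s(i) \leq 1/\alpha$. The key observation is that the argument of the outer $\min$ in the recursion is non-decreasing in $s(i)$: its partial derivative is $1 - F_{i+1}/(p+1-i)$, which is strictly positive since $F_{i+1} \leq p-i$. Hence $F_i$ is maximized by taking $s(i) = 1/\alpha$, giving $F_i \leq 1/\alpha + (1 - 1/(\alpha(p+1-i))) F_{i+1}$. Reindexing via $c_j := F_{p+1-j}$ produces $c_0 = 0$ and $c_j \leq 1/\alpha + (1 - 1/(\alpha j))\, c_{j-1}$, and I would prove by induction the strict invariant $c_j < 1 + j/(\alpha+1)$; after multiplying through by $\alpha j (\alpha+1)$, the inductive step reduces to the elementary inequality $\alpha \geq 0$, while strictness propagates from the strict base case $c_0 = 0 < 1$. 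Taking $j = p$ then gives the required bound. The most delicate point is identifying this tight invariant: a naive guess such as $c_j \leq j/(\alpha+1)$ fails the inductive step, so the additive $+1$ slack in $1 + j/(\alpha+1)$ is essential in order to close the recursion.
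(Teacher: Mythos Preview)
Your proposal is correct and follows essentially the same route as the paper. For (a)--(c) you reproduce the paper's arguments with slightly more detail (notably making explicit the ``$f_{o_j,\dotsc,o_p}(\varnothing)=0$'' fact underlying (a)). For (d), your derivative observation that the recursion is monotone in $s(i)$ makes explicit a step the paper leaves implicit when it simply ``plugs in $s(i)\leq 1/\alpha$''; your reindexed induction on $c_j$ with invariant $c_j<1+j/(\alpha+1)$ is the same computation as the paper's induction on $h(k)\leq (k\alpha+\alpha+1)/(\alpha+1)$, up to the factor-of-$1/\alpha$ rescaling.
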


\begin{proof}
\emph{\cref{property:indistinguishable}:} This follows by the definition of the submodular functions~\eqref{eq:def_of_f}: when  $S \subseteq \cN_{\leq i}$, the value of $f_{o_1, \ldots, o_p}(S)$ only depends on $s(1) , s(1, \neg o_1), \ldots, s(i-1), s(i-1, \neg o_{i-1})$ and $s(i)$.

\bigskip

\emph{\cref{property:low_memory}:} Given input set $S$, we can evaluate $f_{o_1, \ldots, o_p}(S)$ as follows. First calculate $s(p)$ and $s(p, \neg o_p)$. This requires us to store two numbers. Furthermore, using these numbers we calculate $f_{o_p}(S \cap \cN_p)$ and then ``free'' the memory used for $s(p)$ and $s(p, \neg o_p)$. 
Now, suppose we have calculated $f_{o_{i+1}, \ldots, o_p}(S \cap \cN_{\geq i+1})$. We then calculate $s(i)$ and $s(i, \neg o_i)$ which allows us to calculate $f_{o_i, \ldots, o_p}(S \cap \cN_{\geq i})$ from $f_{o_{i+1}, \ldots, o_p}(S \cap \cN_{\geq i+1})$. 
Following this calculation, we free the memory used for $f_{o_{i+1}, \ldots, o_p}(S \cap \cN_{\geq i+1})$ and $s(i+1), s(i+1, \neg o_{i+1})$. 
We proceed in this way until we have calculated the desired value $f_{o_1, \ldots, o_p}(S)$. At any point of time we have only stored at most $4$ numbers, and each number takes $O(\log n)$ bits to store. Thus the memory that we need is upper bounded by $n$ (since $n$ is selected to be sufficiently large).

\bigskip

\emph{\cref{property:yescase}} For $S=  \bigcup_{i=1}^p \{(e, o_i) \mid e\in M_i\}$, by the selection of $M_i$ and $m_i$, $s(i) \geq 1/(1+\varepsilon)$ for all $i\in [p]$. Moreover, since we only have items corresponding to the indices $o_1, \ldots, o_p$, we have $s(i, \neg o_i) = 0$ for $i\in [p]$. Hence,  by~\eqref{eq:def_of_f}, 
\(
    f_{o_1, \ldots, o_p}(S)   \geq p/(1+\varepsilon).
\)

\bigskip

\emph{\cref{property:nocase}:} As $S$ does not contain any elements with the indices $o_1, \ldots, o_p$, we have $s(i) = s(i, \neg o_i)$ for all $i\in [p]$. Furthermore, by assumption, $s(i) \leq 1/\alpha <1$,
which implies $f_{o_p}(S \cap \cN_{p}) = s(p)$ and, for $i=p-1, \ldots, 1$,
\[
    f_{o_{i}, \ldots, o_p}(S \cap \cN_{\geq i}) = s(i)  + \left(1-\frac{s(i)}{p+1-i}\right) f_{o_{i+1}, \ldots, o_p}(S \cap \cN_{\geq i+1})\enspace. 
\]
In the last equality we only used the fact that $s(i) < 1$ for all $i \in [p]$. Plugging in the stronger inequality $s(i) \leq 1/\alpha$ yields
\begin{align*}
f_{o_{1}, \ldots, o_p}(S ) 
\leq {1/\alpha } &\left( 1+\left(1-\frac{1/\alpha}{p}\right) + \left(1-\frac{1/\alpha}{p}\right)\left(1-\frac{1/\alpha}{p-1}\right)\right. \\ & \quad \quad +\left(1-\frac{1/\alpha}{p}\right)\left(1-\frac{1/\alpha}{p-1}\right)\left(1-\frac{1/\alpha}{p-2}\right)   \\
 & \quad \quad +  \ldots + \left. \left(1-\frac{1/\alpha}{p}\right)\left(1-\frac{1/\alpha}{p-1}\right) \ldots \left(1-\frac{1/\alpha}{1}\right)\right) \enspace.
\end{align*}

Let us find the solution to the terms inside the parenthesis in the last inequality. To that end, we let  
\begin{align*} h(p) \mspace{-1mu}=\mspace{-1mu} \mleft(1\mspace{-1mu}+\mspace{-1mu}\mleft(1-\frac{1/\alpha}{p}\mright) \mspace{-1mu}+\mspace{-1mu} \mleft(1-\frac{1/\alpha}{p}\mright)\mspace{-1mu}\mleft(1-\frac{1/\alpha}{p-1}\mright)\mspace{-1mu}+\mspace{-1mu}\mleft(1-\frac{1/\alpha}{p}\mright)\mspace{-1mu}\mleft(1-\frac{1/\alpha}{p-1}\mright)\mspace{-1mu}\mleft(1-\frac{1/\alpha}{p-2}\mright) \mspace{-1mu}+\mspace{-1mu} \ldots \mright) .
\end{align*}
One can observe that
\[h(p) = (1-\frac{1/\alpha}{p})h(p-1)+1 \quad \quad \text{ and } \quad\quad h(0) = 1.\]
Let us now show by induction that $h(k) \leq \frac{k\alpha+\alpha+1}{\alpha+1}$, and the inequality is strict for every $k \geq 1$. The base case $h(0) = 1$ holds as an equality. Now, suppose we have proved $h(k-1) \leq \frac{k\alpha+1}{ \alpha+1}$, then

\begin{align*}
h(k) ={}& \left(1-\frac{1/\alpha}{k}\right) h(k-1) +1 \\
     \leq{}& \left(1-\frac{1/\alpha}{k}\right) \cdot \left(\frac{k\alpha+1}{\alpha+1}\right) + 1 \\
     ={}& \left(\frac{k\alpha-1}{k\alpha}\right) \cdot \left(\frac{k\alpha+1}{\alpha+1}\right) + 1 \\
     ={}& \frac{k^2\alpha^2+k\alpha^2+k\alpha-1}{k\alpha(\alpha+1)}\\
     <{}& \frac{k\alpha(k\alpha+\alpha+1)}{k\alpha(\alpha+1)} \\
     ={}& \frac{k\alpha+\alpha+1}{\alpha+1}\, ,
\end{align*}
where the first inequality holds since $1 < \alpha \leq k$. 

Plugging the upper bound we have proved on $h(p)$ into the upper bound we have on $f_{o_1, \ldots, o_p}(S)$ produces
\[
    f_{o_1, \ldots, o_p}(S) \leq 1/\alpha \cdot h(p) \leq 1/ \alpha \cdot \frac{p\alpha+\alpha + 1}{\alpha + 1} < 1 + \frac{1}{\alpha+1}p 
		\enspace,
\]
as required.
\end{proof}

\subsubsection{Description and analysis of protocol for \texorpdfstring{\chainPpn}{chain index problem}}
\label{sec:protocol_description}

We use algorithm \algsub{} for {\MSMBM} to devise \cref{alg:chainreduction} for \chainPpn.  Given \chainPpn instance $x^1, x^2, t^2, \ldots, x^p, t^p, t^{p+1}$, the protocol simulates the execution of $\algsub$ on the following stream:
\begin{quote}
    First the elements  in $\{(e,j) \in \cN_1 \mid j\in [n]\mbox{ with } x^1_j = 1\}$ are given (by the first player). Then for, $i=2, \ldots, p$, the elements in 
    $\{(e,j) \in \cN_i \mid j\in [n] \mbox{ with } x^i_j = 1\}$ are given (by the $i$-th player).
\end{quote}
The submodular function to be optimized  is $f_{o_1, \ldots, o_p}$ where $o_i = {t^{i+1}}$ for $i=1, \ldots, p-1$. In order for the players to be able to simulate the execution of \algsub{} and any oracle call made to $f_{o_1, \ldots, o_p}$, Player $i$ sends to Player $i+1$ the state of $\algsub$ and the indices $t^2, \ldots, t^i$. Hence, the communication complexity of the protocol is upper bounded by the memory usage of $\algsub$ plus $p \log n$. Note that Player $i$ only needs to know $o_1, \ldots, o_{i-1}$ (and thus indices $t^2,\ldots, t^i$) in order to evaluate the oracle calls by \cref{property:indistinguishable} of \cref{lemma:gen_hardness_F_properties} (since at that point only elements of $\cN_{\leq i}$ has arrived, and hence, \algsub{} can only query the oracle for subsets of $\cN_{\leq i}$).

\begin{protocol}[t]
\caption{Reduction from \chainPpn to {\MSMBM} } \label{alg:chainreduction}

\textbf{Player $P_i$'s Algorithm for $i=1, \ldots, p$}
\begin{algorithmic}[1]
    \State Initialize $\algsub$ with the received memory state  (or initial state if first player).
    \State Simulate $\algsub$ on the elements  $\{(e,j) \in \cN_i  \mid j\in [n] \text{ with }x^{i}_j = 1\}$.
    \State The objective function for $\algsub$ is
    one  of the functions $f_{o_1, \ldots, o_p} \in \cF$ with 
        $o_1 = t^2, \ldots, o_{i-1} = t^i$. 
    By \cref{property:indistinguishable} of \cref{lemma:gen_hardness_F_properties}, these functions are identical when restricted to $\cN_{\leq i}$, and so any oracle query from $\algsub$ can be evaluated without ambiguity. 
    \State Send to $P_{i+1}$ the values $t^2, t^3, \ldots, t^i$ and the memory state of \algsub.
\end{algorithmic}
\textbf{Player $P_{p+1}$'s Algorithm}
\begin{algorithmic}[1]
    \State The objective function for $\algsub$ can now be determined to be  
    $f_{o_1, \ldots, o_p} \in \cF$ with 
        $o_1 = t^2, \ldots, o_{p-1} = t^p$. 
     \State Initialize $\algsub$ with the received memory state,  and ask it to return a solution $S$.
    \State If  $f_{o_1, \ldots, o_p}(S) \geq 1 + \frac{1}{1+\alpha} p $, output ``$1$-case''; otherwise, output ``$0$-case''.
\end{algorithmic}
\end{protocol}

We proceed to analyze the success probability of the protocol. The success probability in the $0$-case will depend on the selection of $G_1, \ldots, G_p$. 

\begin{definition}
    We say that the selection of $G_1, \ldots, G_p$ is \emph{successful} if the following holds: if we select a random $0$-case instance of \chainPpn{} from $D(p+1, n)$, then with probability at least $9/10$ the output $S$ of \algsub{} in \cref{alg:chainreduction} satisfies $|S \cap \cN_i| \leq  m_i / \alpha$ for all $i\in[p]$.
\end{definition}
In other words,  the selection of $G_1, \ldots, G_p$ is successful if   \cref{alg:chainreduction} is unlikely to find a large matching in any of the graphs. Intuitively, it should be possible to select such graphs since, by \cref{thm:hardness_matching_with_assumption}, any algorithm for finding a large matching requires large memory. The following lemma formalizes this argument.
\begin{restatable}{lemma}{successfullemma}
    If $\algsub$ uses memory at most $n/\log_{1+\epsilon}(r)$, there is a successful selection of $G_1, \ldots, G_p$.%
    \label{lemma:successful_selection}
\end{restatable}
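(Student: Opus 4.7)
The plan is to establish the claim by the probabilistic method: sample the $G_i$'s independently from the hard distribution $\mathcal{D}_{\text{hard}}$ of $r$-vertex bipartite graphs underlying Kapralov's proof of \cref{thm:matching-hard-nice} (and hence of \cref{corollary:matching-hard-nice}), draw also a random $0$-case instance $I\sim D(p+1,n)$, and for each $i\in[p]$ bound the probability that $|S\cap\cN_i|>m_i/\alpha$ by reducing to the distributional matching hardness. Conditioning on everything except $G_i$ and \algsub's internal coins will turn the remainder of \cref{alg:chainreduction} into a streaming matching algorithm on the slot $G_i$ itself.

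Concretely, under this conditioning I define $\algmat_i$ on an input bipartite graph $G$ as follows. In initialization (unbounded scratch) simulate \cref{alg:chainreduction} through the prefix $\cN_{\leq i-1}$, retaining only \algsub's state, the bit string $x^i$, the indices $t^2,\ldots,t^i$, and a short seed that regenerates $G_{i+1},\ldots,G_p$ and $x^{i+1},\ldots,x^p$. During the stream, for each arriving edge $e$ of $G$ and each $j$ with $x^i_j=1$ feed $(e,j)$ to \algsub, which plays out the $\cN_i$ portion of the reduction with $G_i:=G$. In post-processing (unbounded scratch) regenerate the suffix data, play out $\cN_{>i}$, and output the set of edges underlying $S\cap\cN_i$. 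By \cref{property:indistinguishable} of \cref{lemma:gen_hardness_F_properties} the stored $t^2,\ldots,t^i$ are sufficient to answer any oracle call to $f_{o_1,\ldots,o_p}$ on a subset of $\cN_{\leq i}$, and by \cref{property:low_memory} such an evaluation costs only $n$ auxiliary bits; thus the stream-time footprint is at most $n/\log_{1+\varepsilon}(r)+n+n+O(p\log n)\leq 3n\leq \mathcal{M}(r)/q$ by the parameter choices $n=\mathcal{M}(r)/(50p)$ and $q=10p$.

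Because two parallel copies in $\cN_i$ share endpoints, the output of $\algmat_i$ is an honest matching in $G=G_i$ of size $|S\cap\cN_i|$, and whenever $|S\cap\cN_i|>m_i/\alpha$ it beats the maximum matching size of $G$ divided by $\alpha$ (since $m_i$ upper bounds it) and hence is $\alpha$-approximate. The distributional form of \cref{thm:matching-hard-nice}, which the text preceding that theorem licenses on the grounds that Kapralov's proof is information-theoretic on a fixed distribution, then gives
\[
\Pr\bigl[|S\cap\cN_i|>m_i/\alpha\bigr]\ \leq\ 1/q\ =\ 1/(10p)
\]
over $G_i\sim\mathcal{D}_{\text{hard}}$ and the coins of \algsub; the bound survives averaging out the remaining conditioned variables, and a union bound over $i\in[p]$ yields $\Pr[\exists i:|S\cap\cN_i|>m_i/\alpha]\leq 1/10$ over the full joint distribution of $(G_1,\ldots,G_p,I)$. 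The probabilistic method then extracts a deterministic tuple $(G_1,\ldots,G_p)$ attaining the same bound over the randomness of $I$ alone, which is exactly a successful selection.

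The main obstacle I foresee is the careful memory accounting for $\algmat_i$, namely pushing the simulations of $\cN_{\leq i-1}$ and $\cN_{>i}$ into the init and post-processing phases, retaining only \algsub's state, $x^i$, the small indices $t^2,\ldots,t^i$, and a short seed during the stream, and checking that the transient $n$-bit scratch for an oracle call on $f$ still fits under $\mathcal{M}(r)/q=5n$. A secondary subtlety is that the argument needs the distributional form of Kapralov's lower bound rather than the worst-case statement of \cref{thm:matching-hard-nice}, which is why each $G_i$ is sampled from $\mathcal{D}_{\text{hard}}$ instead of being chosen adversarially against the algorithm.
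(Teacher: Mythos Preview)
Your reduction has a genuine gap: to simulate $\algsub$ while the edges of $G_i$ are being streamed, the algorithm $\algmat_i$ must be able to answer value-oracle queries on subsets of $\cN_{\leq i}$, and such evaluations depend on $m_i$ through the quantity $s(i)=|S\cap\cN_i|/m_i$ in~\eqref{eq:def_of_f}. \Cref{property:indistinguishable} of \cref{lemma:gen_hardness_F_properties} only removes the dependence on the hidden indices $o_i,\ldots,o_{p-1}$; it does \emph{not} remove the dependence on $m_i$. But $m_i$ is (the rounded) maximum matching size of the very graph $G_i$ that $\algmat_i$ is receiving as a stream, so $\algmat_i$ does not know it---neither during the stream nor in postprocessing (the memory state after the stream does not determine $G_i$, let alone its optimum). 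Consequently your $\algmat_i$ cannot faithfully answer $\algsub$'s oracle calls, and the simulation breaks down exactly at the step where you invoke \cref{property:indistinguishable}.

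The paper's proof resolves this by running $R\leq 1+\log_{1+\eps}(r)$ parallel copies of $\algsub$, one for each possible rounded value of $m_i$; each copy can then consistently answer all oracle queries. This is precisely why the hypothesis on $\algsub$'s memory carries the factor $1/\log_{1+\eps}(r)$: it is consumed by the $R$-fold parallelism, giving total memory $R\cdot n/\log_{1+\eps}(r)\leq 2n$ for the copies and $5n$ overall. Your accounting of $3n$ with a single copy is a symptom of the missing ingredient. As a secondary point, the paper selects the graphs sequentially and invokes the worst-case \cref{corollary:matching-hard-nice} at each step, rather than sampling all $G_i$ at once and appealing to a distributional form of the matching lower bound; this avoids both the need to store seeds for the suffix graphs (your ``short seed'' for $G_{i+1},\ldots,G_p$ may well require more than $O(p\log n)$ bits) and the need to reach beyond the stated theorem.
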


The next section is devoted to proving the last lemma. Here we proceed to show how it implies \cref{thm:main-matching-reduction}.


\begin{lemma}
    If the selection of $G_1, \ldots, G_p$ is successful,   \cref{alg:chainreduction} succeeds with probability at least $2/3$ on the distribution $D(p+1,n)$. %
    \label{lemma:analysis_if_successful}
\end{lemma}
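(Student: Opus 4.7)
My plan is to bound the success probability of \cref{alg:chainreduction} separately in the two cases of the \chainPpn input distribution, and then combine the bounds using the fact that $D(p+1,n)$ is uniform over $0$-case and $1$-case instances. The proof is essentially a matching calculation: the threshold $1 + p/(\alpha+1)$ used by Player~$P_{p+1}$ has been chosen precisely so that the approximation guarantee of \algsub{} separates the YES/NO objective values, and \cref{lemma:gen_hardness_F_properties} supplies the corresponding bounds on $f_{o_1,\ldots,o_p}$.

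First I would handle the $1$-case. In this case $x^i_{o_i}=1$ for every $i \in [p]$ (where $o_i = t^{i+1}$), so every element of the form $(e, o_i)$ with $e \in E_i$ enters the stream. Taking a maximum matching $M_i$ in each $G_i$, the set $S^{\star} := \bigcup_{i=1}^p \{(e,o_i) : e \in M_i\}$ is a feasible matching of the underlying {\MSMBM} graph (by construction the graphs $G_1,\ldots,G_p$ sit on disjoint portions of the $r$-vertex bipartite graph, so the $M_i$'s combine without conflict) and it is present in the stream. \cref{lemma:gen_hardness_F_properties}\,(c) gives $f_{o_1,\ldots,o_p}(S^{\star}) \geq p/(1+\varepsilon)$, so plugging this into \algsub{}'s $\beta$-approximation guarantee with $\beta = \frac{p}{1+\varepsilon}\cdot\frac{\alpha+1}{p+\alpha+1}$ yields, with probability at least $2/3$,
\[
f_{o_1,\ldots,o_p}(S) \;\geq\; \frac{p/(1+\varepsilon)}{\beta} \;=\; \frac{p+\alpha+1}{\alpha+1} \;=\; 1 + \frac{p}{\alpha+1}\enspace,
\]
which triggers the ``$1$-case'' output.

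Next I would handle the $0$-case. Here $x^i_{o_i}=0$ for every $i$, so no element $(e, o_i)$ ever appears in the stream; in particular $\{(e,o_i) \in S \cap \cN_i\} = \varnothing$ for every $i\in[p]$ automatically. The ``successful selection'' hypothesis on $G_1,\ldots,G_p$ supplies the remaining side condition, $|S \cap \cN_i| \leq m_i/\alpha$ simultaneously for all $i$, with probability at least $9/10$ over the random $0$-case instance and the internal randomness of \algsub{}. Invoking \cref{lemma:gen_hardness_F_properties}\,(d), this yields $f_{o_1,\ldots,o_p}(S) < 1 + p/(\alpha+1)$ with probability at least $9/10$, so Player $P_{p+1}$ correctly outputs ``$0$-case''.

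Finally I would combine the two bounds. Since $D(p+1,n)$ is uniform over the two cases, the overall success probability is at least $\tfrac12\cdot\tfrac23 + \tfrac12\cdot\tfrac{9}{10} = \tfrac{47}{60} > \tfrac23$, which is the desired conclusion. I do not expect any serious obstacle: the only things to double-check are the arithmetic showing that $\beta$ is exactly calibrated to the protocol's threshold (this is by design of $\beta$), and the observation that in the construction of the $G_i$ we may assume the maximum matchings $M_i$ combine into a single feasible matching of the {\MSMBM} instance, which is immediate if the $G_i$'s are placed on pairwise-disjoint vertex sets inside the $r$-vertex bipartite graph.
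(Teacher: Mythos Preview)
Your proposal is correct and follows essentially the same approach as the paper: handle the $1$-case via \cref{property:yescase} and the approximation guarantee of $\algsub$, handle the $0$-case via the successful-selection hypothesis and \cref{property:nocase}, then combine. Two minor remarks: the paper simply notes that since the conditional success probability is at least $2/3$ in each case, the overall probability on $D(p+1,n)$ is at least $2/3$ (no averaging needed); and your parenthetical about the $G_i$'s sitting on disjoint portions of \emph{the} $r$-vertex graph is slightly off---each $G_i$ is itself an $r$-vertex graph, so the {\MSMBM} instance lives on a disjoint union of roughly $pr$ vertices---but the feasibility of $S^\star$ as a matching is indeed immediate from that disjointness, which the paper leaves implicit.
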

Below we prove \cref{lemma:analysis_if_successful}. However, before doing so, let us first explain how \cref{lemma:analysis_if_successful} implies \cref{thm:main-matching-reduction}. Indeed, suppose toward contradiction that \algsub{} uses memory less than $n/\log_{1+\eps}(r)$ which by selection of $n$ equals $\frac{\mathcal{M} (r)}{5\cdot 10p \cdot \log_{1+\eps}(r)}$. Then, \cref{lemma:successful_selection} says that there is a successful selection of $G_1, \ldots, G_p$, which in turn, by \cref{lemma:analysis_if_successful}, means that \cref{alg:chainreduction} succeeds with probability at least $2/3$. As aforementioned, the  communication complexity of \cref{alg:chainreduction} is at most the memory of $\algsub$ plus $p\log(n)$. This contradicts \cref{thm:pindex_hardness} because $n/\log_{1+\epsilon}(r) + p \log(n) < n/(36p^2)$. It follows that $\algsub$ must use memory at least $n/\log_{1+\epsilon}(r)$.

We complete this section with the proof of \cref{lemma:analysis_if_successful}.
\begin{proof}[Proof of \cref{lemma:analysis_if_successful}]
    We first analyze the success probability of \cref{alg:chainreduction} in the $1$-case. In the $1$-case, the elements of $S = \bigcup_{i=1}^p \{(e, o_i) \mid e\in M_i\}$ are elements of the stream, where $M_1, \ldots, M_p$ denote maximum matchings in $G_1, \ldots, G_p$, respectively. Hence, by \cref{property:yescase} of \cref{lemma:gen_hardness_F_properties}, there is a solution to {\MSMBM}  of value at least $p/(1+\epsilon)$. Now, by assumption, $\algsub$ finds a $\frac{p}{1+\varepsilon} \left(\frac{\alpha+1}{p+\alpha+1}  \right) $-approximate solution with probability at least $2/3$. As 
    \[
    \frac{p}{1+\varepsilon} \cdot  \frac{1+\varepsilon}{p} \left( \frac{p+\alpha+1}{1+\alpha} \right)= \frac{p+\alpha+1}{1+\alpha} =1+\frac{1}{1+\alpha} p \enspace,
    \]
    Player $p+1$ correctly outputs $1$-case, i.e., \cref{alg:chainreduction} succeeds, with probability at least $2/3$. 
    
    For the $0$-case, there is no elements  $(e, o_i) \in \cN_i$ in the stream  for $i\in [p]$. Moreover, since the selection of $G_1, \ldots, G_p$ is successful, we have, for a random $0$-case instance from $D(p+1, n)$, that the solution $S$ output by $\algsub$ in \cref{alg:chainreduction} satisfies $|S \cap \cN_i| \leq  m_i / \alpha$ for $i\in [p]$ with probability $9/10\geq 2/3$. Whenever that happens,  \cref{property:nocase} of \cref{lemma:gen_hardness_F_properties} says that $f_{o_1, \ldots, o_p}(S)< 1 + \frac{1}{1+\alpha} p$. It follows that Player $p+1$ outputs $0$-case, i.e., the protocol succeeds, with probability at least $2/3$ for a randomly chosen $0$-case instance from $D(p+1, n)$. Combining the two cases, we have thus proved that the protocol succeeds with probability at least $2/3$ on the distribution $D(p+1, n)$.
\end{proof}

\subsubsection{The selection of \texorpdfstring{$G_1,\ldots, G_p$}{graphs}}
\label{sec:graph_selection}

We now prove \cref{lemma:successful_selection}, which we restate here for convenience.
\successfullemma*

\begin{proof}
We select the graphs $G_1, \ldots, G_p$ one-by-one, starting from the left. When selecting $G_i$, we make sure to select a graph such that, on a random $0$-case instance from $D(p+1, n)$, the probability that \algsub{} in \cref{alg:chainreduction} outputs a set $S$ such that $|S\cap \cN_i| > m_i / \alpha$ is at most $1/(10p)$. The lemma then follows by the union bound.

Now, suppose that we have already selected $G_1, \ldots, G_{i-1}$. We proceed to explain how $G_i$ is selected.  The outline of the argument is as follows. We will simulate the execution of Player $i$ to obtain a streaming algorithm for the bipartite matching problem that uses memory at most $5n$. Hence, by  the selection of $n$ and \cref{corollary:matching-hard-nice}, there must be a graph $G_i$ for which it is likely to fail.

In order to simulate the execution of the $i$-th player, we need to be able to evaluate the oracle calls to the submodular function. To this end,  observe that the value of a submodular function in $\cF$ on a subset  $S \subseteq \cN_{\leq i}$ only depends on the values $m_1, \ldots, m_i$ and the numbers $|(e,j) \in S\cap \cN_k|$ for $j\in [n]$ and $k=1,2,\ldots, i$. 
Since we have selected $G_1, \ldots, G_{i-1}$, we know the values $m_1, \ldots, m_{i-1}$. Furthermore, there are $R$ many possibilities of $m_i$, where $R \leq  1 + \log_{1+\eps}(r)$, because we consider $r$-vertex graphs.  Denote these possibilities by $m_i^1, m_i^2, \ldots, m_i^R$. In the algorithm below that simulates the execution of Player $i$, we make a copy of $\algsub$ for each of these possibilities.  This allows us to answer any evaluations of the submodular function made by Player $i$. 

More precisely, we simulate the execution of Player $i$  to obtain the following algorithm $\algmat$ for the bipartite matching problem in the streaming model:
\begin{description}
\item[Preprocessing] In the unbounded preprocessing phase, we start by sampling a $0$-case instance of \chainPpn  $x^1, x^2, t^2, \ldots, x^{p}, t^{p}, t^{p+1}$ from $D(p+1,n)$. We then simulate the execution of the first $i-1$ players on this instance, which is possible since we fixed the graphs $G_1, \ldots, G_{i-1}$. Finally, we use the state of $\algsub$ received from the previous player (or the initial state if $i = 1$) to initialize   $R$ copies of $\algsub$: $\algsub_1$ for guess $m_i^1$, $\algsub_2$ for guess $m_i^2, \ldots, \algsub_R$ for guess $m_i^R$. 
\item[Processing stream:] At the arrival of an edge $e$, we forward the elements $\{(e,  j) \mid x^i_j = 1\} \subseteq \cN_i$ to each copy $\algsub_j$ of $\algsub$. Note that since each copy has a fixed guess of  $m_i$, we can evaluate any call to the function $f_{o_1,\ldots, o_p}$, where $o_i = t^{i+1}$ for $i=1, \ldots, p-1$.
\item[Postprocessing] In the unbounded postprocessing phase, we decode the memory state  of each copy of $\algsub$ as follows.  Let $Y_j$ be the memory state of $\algsub_j$. Consider all possible streams of edges that could lead to this memory state $Y_j$, let $E(Y_j)$ be the edges that appear in \emph{all} these graphs, and select $M_j$ to be the largest matching in $E(Y_j)$. The output matching is then the largest matching among $M_1, \ldots, M_R$. 
\end{description}

We now bound the memory used by \algmat{} when it processes the stream. It saves the vector $x^i$, which requires $n$ bits, and the indices $o_1, \ldots, o_{p-1}$, which require $p \log(n) < n$ bits. It then runs $R$ parallel copies of $\algsub$, which requires $n/\log_{1+\eps}(r) \cdot R \leq 2n$ bits. Finally, by \cref{property:low_memory} of \cref{lemma:gen_hardness_F_properties}, any submodular function evaluation requires at most $n$ bits. Hence, the total memory usage of the algorithm is at most $5n$. 
Now by the selection of $n$, namely that $5n \leq \mathcal{M}(r)/q$ for $q = 10p$, we can apply  \cref{corollary:matching-hard-nice} to obtain that there is an $r$-vertex graph  such that the described algorithm outputs an $\alpha$-approximate  matching  with probability at most $1/(10p)$. 
We select $G_i$ to be this graph, and let $m_i$ be the smallest power of $1+\eps$ that  upper bounds the size of the maximum matching of $G_i$.

Let $j$ be the guess such that $m_i = m^j_i$. Note that (since $m^j_i= m_i$) $\algsub_j$ in $\algmat$ simulates exactly the distribution of  Player $i$ when given a random $0$-case instance from $D(p+1, n)$. In particular, Player $i$ will send the state $Y_j$ to Player $i+1$, and with probability at least $1-1/(10p)$ (for a random $0$-case instance from $D(p+1, n)$) the size of the largest matching in $E(Y_j)$ is less than  $ m_i / \alpha$. Whenever this happens, the output $S$ of \algsub{} in \cref{alg:chainreduction} must satisfy $|S \cap \cN_i| \leq m_i / \alpha$ (no matter what the following players do) because the edges of $E(Y_j)$ are the only edges that {\algsub} can know for sure (given his memory state when Player $i$ stops executing) that they belong to $G_i$. This completes the description of how the graph $G_i$ is selected. Repeating this until all $p$ graphs have been selected yields the lemma.  
\end{proof}

\bibliographystyle{plain}
\bibliography{ref}

\begin{thebibliography}{10}

\bibitem{alaluf2020optimal}
Naor Alaluf, Alina Ene, Moran Feldman, Huy~L. Nguyen, and Andrew Suh.
\newblock Optimal streaming algorithms for submodular maximization with
  cardinality constraints.
\newblock In {\em ICALP}, pages 6:1--6:19, 2020.

\bibitem{badanidiyuru2014streaming}
Ashwinkumar Badanidiyuru, Baharan Mirzasoleiman, Amin Karbasi, and Andreas
  Krause.
\newblock Streaming submodular maximization: Massive data summarization on the
  fly.
\newblock In {\em Proceedings of the 20th ACM Conference on Knowledge Discovery
  and Data Mining (KDD)}, pages 671--680, 2014.

\bibitem{buchbinder2018submodular}
Niv Buchbinder and Moran Feldman.
\newblock Submodular functions maximization problems.
\newblock In {\em Handbook of Approximation Algorithms and Metaheuristics,
  Second Edition, Volume 1: Methologies and Traditional Applications}, pages
  753--788. Chapman and Hall/CRC, 2018.

\bibitem{buchbinder2014submodular}
Niv Buchbinder, Moran Feldman, Joseph Naor, and Roy Schwartz.
\newblock Submodular maximization with cardinality constraints.
\newblock In Chandra Chekuri, editor, {\em Proceedings of the Twenty-Fifth
  Annual {ACM-SIAM} Symposium on Discrete Algorithms (SODA)}, pages 1433--1452.
  {SIAM}, 2014.

\bibitem{calinescu2011maximizing}
Gruia C{\u{a}}linescu, Chandra Chekuri, Martin P{\'{a}}l, and Jan
  Vondr{\'{a}}k.
\newblock Maximizing a monotone submodular function subject to a matroid
  constraint.
\newblock {\em {SIAM} Journal on Computing}, 40(6):1740--1766, 2011.

\bibitem{chakrabarti_2007_lower}
A.~Chakrabarti.
\newblock Lower bounds for multi-player pointer jumping.
\newblock {\em Electronic Colloquium on Computational Complexity}, 14, 2007.

\bibitem{chakrabarti2015submodular}
Amit Chakrabarti and Sagar Kale.
\newblock Submodular maximization meets streaming: matchings, matroids, and
  more.
\newblock {\em Mathematical Programming}, 154(1):225--247, 12 2015.

\bibitem{chekuri2015streaming}
Chandra Chekuri, Shalmoli Gupta, and Kent Quanrud.
\newblock Streaming algorithms for submodular function maximization.
\newblock In Magn{\'u}s~M. Halld{\'o}rsson, Kazuo Iwama, Naoki Kobayashi, and
  Bettina Speckmann, editors, {\em Automata, Languages, and Programming}, pages
  318--330, Berlin, Heidelberg, 2015. Springer Berlin Heidelberg.

\bibitem{cormode_2019_independent}
G.~Cormode, J.~Dark, and C.~Konrad.
\newblock Independent sets in vertex-arrival streams.
\newblock In {\em Proceedings of 46th International Colloquium on Automata,
  Languages and Programming (ICALP)}, pages 45:1--45:14, 2019.

\bibitem{feige1998threshold}
Uriel Feige.
\newblock A threshold of $\ln n$ for approximating set cover.
\newblock {\em Journal of the ACM (JACM)}, 45(4):634--652, 1998.

\bibitem{feldman_2020_one-way}
M.~Feldman, A.~Norouzi-Fard, O.~Svensson, and R.~Zenklusen.
\newblock The one-way communication complexity of submodular maximization with
  applications to streaming and robustness.
\newblock In {\em Proceedings of the 52nd Annual ACM SIGACT Symposium on Theory
  on Computing (STOC)}, pages 1363--1374, 2020.

\bibitem{feldman2018do}
Moran Feldman, Amin Karbasi, and Ehsan Kazemi.
\newblock Do less, get more: Streaming submodular maximization with
  subsampling.
\newblock In {\em Advances in Neural Information Processing Systems (NeurIPS)},
  pages 730--740, 2018.

\bibitem{fisher1978analysis}
Marshall~L. Fisher, George~L. Nemhauser, and Laurence~A. Wolsey.
\newblock An analysis of approximations for maximizing submodular set
  functions--{II}.
\newblock {\em Mathematical Programming}, 8:73--87, 1978.

\bibitem{haba2020streaming}
Ran Haba, Ehsan Kazemi, Moran Feldman, and Amin Karbasi.
\newblock Streaming submodular maximization under a k-set system constraint.
\newblock In {\em Proceedings of the 37th International Conference on Machine
  Learning (ICML)}, pages 3939--3949, 2020.

\bibitem{hazan_2006_complexity}
E.~Hazan, S.~Safra, and O.~Schwarz.
\newblock On the complexity of approximating $k$-set packing.
\newblock {\em Computational Complexity}, 15(1):20--39, 2006.
\newblock 2006.

\bibitem{huang2020approximability}
Chien{-}Chung Huang, Naonori Kakimura, Simon Mauras, and Yuichi Yoshida.
\newblock Approximability of monotone submodular function maximization under
  cardinality and matroid constraints in the streaming model.
\newblock {\em CoRR}, abs/2002.05477, 2020.

\bibitem{haung2021improvedunpublishedversion}
Chien{-}Chung Huang, Theophile Thiery, and Justin Ward.
\newblock Improved multi-pass streaming algorithms for submodular maximization
  with matroid constraints.
\newblock {\em CoRR}, abs/2102.09679, 2021.

\bibitem{huang2020fpt}
Chien{-}Chung Huang and Justin Ward.
\newblock Fpt-algorithms for the l-matchoid problem with linear and submodular
  objectives.
\newblock {\em CoRR}, abs/2011.06268, 2020.

\bibitem{jenkyns1978efficacy}
Tom~A. Jenkyns.
\newblock The efficacy of the ``greedy'' algorithm.
\newblock In {\em South Eastern Conference on Combinatorics, Graph Theory and
  Computing}, pages pages 341--350, 1976.

\bibitem{DBLP:conf/soda/Kapralov21}
Michael Kapralov.
\newblock Space lower bounds for approximating maximum matching in the edge
  arrival model.
\newblock In D{\'{a}}niel Marx, editor, {\em Proceedings of the 2021 {ACM-SIAM}
  Symposium on Discrete Algorithms, {SODA} 2021, Virtual Conference, January 10
  - 13, 2021}, pages 1874--1893. {SIAM}, 2021.

\bibitem{conficml0001MZLK19}
Ehsan Kazemi, Marko Mitrovic, Morteza Zadimoghaddam, Silvio Lattanzi, and Amin
  Karbasi.
\newblock Submodular streaming in all its glory: Tight approximation, minimum
  memory and low adaptive complexity.
\newblock In {\em Proceedings of the 36th International Conference on Machine
  Learning (ICML)}, pages 3311--3320, 2019.

\bibitem{korte1978analysis}
Bernhard Korte and Dirk Hausmann.
\newblock An analysis of the greedy heuristic for independence systems.
\newblock {\em Annals of Discrete Math.}, 2:65--74, 1978.

\bibitem{SubmodularWWW}
Andreas Krause.
\newblock Submodularity in machine learning.
\newblock \url{http://submodularity.org/}.

\bibitem{lee2010submodular}
Jon Lee, Maxim Sviridenko, and Jan Vondr{\'{a}}k.
\newblock Submodular maximization over multiple matroids via generalized
  exchange properties.
\newblock {\em Math. Oper. Res.}, 35(4):795--806, 2010.

\bibitem{levin2021streaming}
Roie Levin and David Wajc.
\newblock Streaming submodular matching meets the primal-dual method.
\newblock In {\em {ACM-SIAM} Symposium on Discrete Algorithms (SODA)}, pages
  1914--1933, 2021.

\bibitem{mcgregor2019better}
Andrew McGregor and Hoa~T. Vu.
\newblock Better streaming algorithms for the maximum coverage problem.
\newblock {\em Theory Comput. Syst.}, 63(7):1595--1619, 2019.

\bibitem{nemhauser1978best}
George~L. Nemhauser and Laurence~A. Wolsey.
\newblock Best algorithms for approximating the maximum of a submodular set
  function.
\newblock {\em Mathematics of Operations Research}, 3(3):177--188, 1978.

\bibitem{nemhauser1978analysis}
George~L. Nemhauser, Laurence~A. Wolsey, and Marshall~L. Fisher.
\newblock An analysis of approximations for maximizing submodular set
  functions--—{I}.
\newblock {\em Mathematical Programming}, 14(1):265--294, 1978.

\bibitem{DBLP:journals/ipl/Newman91}
Ilan Newman.
\newblock Private vs. common random bits in communication complexity.
\newblock {\em Inf. Process. Lett.}, 39(2):67--71, 1991.

\bibitem{schrijver2003combinatorial}
Alexander Schrijver.
\newblock {\em Combinatorial Optimization : Polyhedra and Efficiency}.
\newblock Algorithms and Combinatorics. Springer Berlin Heidelberg, 1 edition,
  2003.

\end{thebibliography}

\appendix

\section{Dropping the Assumptions of Theorem~\ref{th:matroid_intersection_alg_simplified}} \label{apx:matroid_intersection_alg}

In \cref{sec:multiple_matroids} we have proved \cref{th:matroid_intersection_alg_simplified}. This theorem makes two simplifying assumptions, namely, that we have an estimate $\tau$ of $f(OPT)$ and that we know $|OPT|$. In the current section we explain how these assumptions can be dropped at the cost of a minor increase in the space complexities guaranteed by the theorem, which proves \cref{th:matroid_intersection_alg_short}.

\subsection{Dropping the assumption of knowing \texorpdfstring{$|OPT|$}{|OPT|}} \label{ssc:no_OPT_size}

In this section we show that the assumption in \cref{th:matroid_intersection_alg_simplified} that $|OPT|$ is known can be dropped without modifying anything else about the theorem. We recall that \cref{th:matroid_intersection_alg_simplified} is proved in \cref{sec:multiple_matroids} using \cref{alg:monotone_SMkM,alg:non-monotone_SMkM}. Both algorithms use the knowledge of $|OPT|$ only to set the number of iterations that they perform (which is set exactly to $|OPT|$). Therefore, a natural way in which one can try to drop the assumption that we know $|OPT|$ is by making the algorithms do $\rho$ iterations, which is always an upper bound on $|OPT|$, and then outputting the best solution produced by these algorithms after any number of iteration.

On the positive side, one can observe that the proofs of \cref{obs:monotone_space_complexity,obs:non-monotone_space_complexity} are unaffected by the suggested modification of the algorithms, and thus, the space complexities guaranteed by \cref{th:matroid_intersection_alg_simplified} apply to the modified algorithms as well. Nevertheless, there are two issues with the suggested modification that need to be addressed.
\begin{itemize}
	\item Consider iteration number $j$ of {\renewcommand{\crefpairconjunction}{{ or }}\cref{alg:monotone_SMkM,alg:non-monotone_SMkM}} for some $j > |OPT|$. In this iteration it is possible that the instances of {\FEA} employed by \cref{alg:monotone_SMkM,alg:non-monotone_SMkM} might return no elements, and therefore, these algorithms might not be able to designate any element as $u_j$.\footnote{Technically, we have the same issue also when an instance of {\FEA} returns only elements that are self loops with respect to at least one of the matroid constraints that it gets. We ignore this possibility because (i) the algorithm {\FEA} designed in \cref{sec:first_element_SMkM} does not ever return self-loops, and (ii) \cref{alg:monotone_SMkM,alg:non-monotone_SMkM} can safely ignore any self loops returned by {\FEA}.} In other words, \cref{alg:monotone_SMkM,alg:non-monotone_SMkM} might get ``stuck'' in an iteration if this iteration is not one of the first $|OPT|$ iterations. To handle this issue, we simply stop executing any thread that gets stuck in this way (we recall that the implementation of \cref{alg:monotone_SMkM,alg:non-monotone_SMkM} that is described in \cref{obs:monotone_space_complexity,obs:non-monotone_space_complexity} is based on multiple parallel threads).
	\item The second issue that we need to handle is that $\rho$ is also unknown, and therefore, we cannot explicitly execute \cref{alg:monotone_SMkM,alg:non-monotone_SMkM} for $\rho$ iterations. To solve this issue, we use the greedy algorithm to track the rank of the set of all the elements that have arrived so far with respect to every one of the matroid constraints. Let us denote this rank with respect to $M_i$ by $r_i$. We also observe that the analysis from \cref{sec:multiple_matroids} works even if we delay some iterations of \cref{alg:monotone_SMkM,alg:non-monotone_SMkM}, as long as, for every integer $1 \leq j \leq |OPT|$, iteration $j$ of these algorithms starts before $o_j$ arrives. This means that it is safe to delay iteration number $j$ of \cref{alg:monotone_SMkM,alg:non-monotone_SMkM} until $\min_{1 \leq i \leq k} r_i \geq j$ because once $o_j$ arrives, we are guaranteed that the set of all elements that have arrived so far includes the set $\{o_1, o_2, \dotsc, o_j\}$ (which is a common independent set of size $j$). Since $\min_{1 \leq j \leq k} r_j$ can never exceed $\rho$, delaying iterations in this way guarantees that \cref{alg:monotone_SMkM,alg:non-monotone_SMkM} will never begin iteration $\rho + 1$, and thus, the number of iterations that they perform is implicitly upper bounded by $\rho$.
\end{itemize}

\subsection{Dropping the assumption of knowing \texorpdfstring{$\tau$}{tau}}

In this section we show how one can drop the assumption in \cref{th:matroid_intersection_alg_simplified} that the algorithms whose existence is guaranteed by the theorem have access to a value $\tau \in [f(OPT), 3f(OPT)]$. We do this by proving the following general proposition, which is strongly based on ideas from~\cite{badanidiyuru2014streaming,haba2020streaming}. One can observe that \cref{th:matroid_intersection_alg_short} follows by plugging into this proposition the modified versions of \cref{alg:monotone_SMkM,alg:non-monotone_SMkM} described in \cref{ssc:no_OPT_size}.

\begin{proposition} \label{prop:no_tau_reduction}
Let $A$ be an arbitrary data stream algorithm for {\SMkM} that assumes access to a value $\tau \in [f(OPT), 3f(OPT)]$ and has an approximation ratio of $r \geq 1$ and a space complexity of $D$ that is independent of $\tau$ (moreover, we need $A$ to have a space complexity of $D$ even when $\tau$ does not belong to above range). 
Then, for every $\eps \in (0, 1/6)$, there exists a data stream algorithm for {\SMkM} that does not assume access to $\tau$ and has an approximation guarantee of $(1 + O(\eps))r$ and a space complexity of $O(\rho \log n + D(\log k + \log \rho + \log \eps^{-1}))$.
\end{proposition}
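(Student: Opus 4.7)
My plan is to adapt the geometric guessing technique of~\cite{badanidiyuru2014streaming,haba2020streaming} to eliminate the need for an a priori value of $\tau$. While reading the stream I would maintain (i) the running maximum singleton value $M = \max_{e \text{ seen}} f(\{e\})$, and (ii) a buffer $T$ holding the $\rho$ elements with largest singleton value encountered so far, using $O(\rho \log n)$ bits. Assuming WLOG that there are no matroid loops and $f(\varnothing) = 0$, subadditivity of $f$ together with $|\OPT| \leq \rho$ yields the running bound $M \leq f(\OPT) \leq \rho \cdot M_{\mathrm{final}}$.

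In parallel I would run copies of $A$, one per value $\tau_j$ lying on a geometric grid with ratio $\gamma = 2$. At any moment I keep alive only those copies whose $\tau_j$ lies in the current active window $[M,\, \rho M/\eps]$: copies with $\tau_j < M$ are retired (since then provably $\tau_j < f(\OPT)$), and copies for $\tau_j$ newly entering the window are freshly launched. When a new copy with guess $\tau_j$ is launched, I first replay to it the currently stored top-$\rho$ buffer $T$ (in arrival order), and thereafter forward subsequent stream elements as usual. The number of simultaneously active copies is $O(\log_2(\rho/\eps)) = O(\log \rho + \log \eps^{-1})$; together with an additional $\log k$ factor absorbed into the active state for the bookkeeping required by the $k$ matroid oracles, this produces the stated $D(\log k + \log \rho + \log \eps^{-1})$ term.

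For correctness, fix the ``correct'' guess $\tau_{j^*}$, namely the smallest power of $2$ strictly exceeding $f(\OPT)$, so that $\tau_{j^*} \in (f(\OPT),\, 2 f(\OPT)]$. The corresponding copy was launched at the first moment $M$ reached $\eps \tau_{j^*}/\rho$; hence every element $e$ missed by this copy (i.e., absent from $T$ at that launch) satisfies $f(\{e\}) \leq \eps \tau_{j^*}/\rho$. Letting $S$ collect those missed elements, the submodular inequality
\[
f(\OPT) - f(\OPT \setminus S) \;\leq\; \sum_{e \in \OPT \cap S} f(\{e\}) \;\leq\; \rho \cdot \frac{\eps \tau_{j^*}}{\rho} \;\leq\; 2\eps \cdot f(\OPT)
\]
gives $f(\OPT \setminus S) \geq (1 - 2\eps) f(\OPT)$; for $\eps < 1/6$, a short calculation then confirms $\tau_{j^*} \in [f(\OPT \setminus S),\, 3 f(\OPT \setminus S)]$. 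Applying $A$'s guarantee to the effectively truncated input yields a solution of value at least $(1-2\eps) f(\OPT)/r$, and returning the best such solution across all parallel copies gives the claimed $(1+O(\eps))r$-approximation. The main obstacle is ensuring this truncation argument works in the non-monotone case: the key inequality $f(A) - f(A \setminus B) \leq \sum_{e \in A \cap B} f(\{e\})$ follows by telescoping marginals together with $f(e \mid S) \leq f(\{e\})$, which is valid for any non-negative submodular $f$ without requiring monotonicity, so the proof carries through in both the monotone and non-monotone cases.
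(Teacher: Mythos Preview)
Your approach is broadly correct and shares the geometric-guessing skeleton with the paper, but the mechanics differ and there are two loose ends.

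The paper does not use a top-$\rho$ buffer. Instead it maintains a greedy common independent set $G$ and uses the dynamic window $[m,\,2m(k|G|)^2/\eps]$. Since greedy is a $k$-approximation, after the $i$-th element $o_i$ of $OPT$ arrives one has $|G|\ge i/k$; this forces $f(\{o_i\})<\eps f(OPT)/i^2$ for every element of $OPT$ that arrived before the correct copy was created, and summing $\sum_i 1/i^2<2$ gives the $(1-2\eps)f(OPT)$ lower bound on the restricted optimum. Two consequences: (i) the paper's algorithm never needs to know $\rho$, whereas your buffer size and window width both use $\rho$ explicitly---if $\rho$ is unavailable your construction does not run as written; (ii) the $\log k$ in the space bound arises precisely from the $(k|G|)^2$ factor in the window, not from ``bookkeeping for the $k$ matroid oracles.'' In fact, if $\rho$ \emph{is} known, your window needs only $O(\log\rho+\log\eps^{-1})$ copies and no $\log k$ at all, so your stated justification for that term is off.

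Two smaller remarks. Your buffer $T$ is unnecessary even within your own argument: at the moment the copy with guess $\tau_{j^*}$ is launched, every previously seen element has singleton value below $\eps\tau_{j^*}/\rho$ (since $M$ was below that threshold), so simply starting the copy from the launching element onward already yields the same $2\eps f(OPT)$ loss bound. And the assumption $f(\varnothing)=0$ is not WLOG for non-monotone $f$ (shifting can destroy non-negativity), but your telescoping inequality $f(OPT)-f(OPT\setminus S)\le\sum_{e\in OPT\cap S}f(\{e\})$ only requires $f(\varnothing)\ge 0$, so the argument survives without that assumption.
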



We prove \cref{prop:no_tau_reduction} using \cref{alg:no_tau_alg}. The main idea of this algorithm is to keep multiple copies of algorithm $A$ in a set $I$, each associated with a different value of $\tau$. At each point of time the algorithm makes sure that either (i) it already has an instance of $A$ associated with a value of $\tau$ close to $f(OPT)$, or (ii) the fraction of the total value of $OPT$ represented by elements of $OPT$ that have already arrived is low. This guarantees that an instance of $OPT$ associated with a value of $\tau$ that is close to $f(OPT)$ is eventually created, and moreover, this instance receives a subset of $OPT$ that carries almost the entire value of $OPT$ itself. For simplicity, we make two assumptions in \cref{alg:no_tau_alg}. The first assumption is that $OPT \neq \varnothing$. If this assumption is violated, then it is trivial to find the optimal solution. The second assumption we make is that the matroids $M_1, M_2, \dotsc, M_k$ have no self-loops. If this assumption does not hold, one can simply discard self-loops upon arrival before the algorithm processes them.
\begin{algorithm}
\caption{Algorithm for Simulating Knowledge of $\tau$} \label{alg:no_tau_alg}
\begin{algorithmic}[1]
\State Let $m \gets 0$, $G \gets \varnothing$ and $I \gets \varnothing$.
\For{every element $u$ that arrives}
	\If{$G + u$ is a common independent set}
		Add $u$ to $G$.
	\EndIf
	\State Update $m \gets \max\{m, f(\{u\})\}$.
	\State \parbox[t]{150mm}{Update $I$ so that it includes an instance of $A$ for every $\tau \in \{2^i \mid i \in \bZ \text{ and } m \leq 2^i \leq 2m(k|G|)^2/\eps\}$. Specifically, if there did not use to be in $I$ an instance of $A$ with $\tau = 2^i$ for some value $i$, and now we need such an instance to exist, then we initialize a new instance of $A$ with this value of $\tau$ and add it to $I$. Similarly, if there used to be in $I$ an instance of $A$ with $\tau = 2^i$ for some value $i$, and now we do not need such an instance to exist anymore, then we remove it from $I$ and delete it from the memory of the algorithm.\strut} \label{line:I_update}
	\State Forward the element $u$ to all the instances of $A$ that currently exist in $I$.
\EndFor
\State \Return{the best among the outputs of all the instances of $A$ in $I$}.
\end{algorithmic}
\end{algorithm}

We begin the analysis of \cref{alg:no_tau_alg} by bounding its space complexity.
\begin{observation}
\Cref{alg:no_tau_alg} requires $O(\rho \log n + D(\log k + \log \rho + \log \eps^{-1}))$ space.
\end{observation}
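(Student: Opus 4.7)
The plan is to account, piece by piece, for each item the algorithm keeps in memory at any single point in time, and bound each contribution.

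First, I would handle the non-instance state. The set $G$ is always a common independent set, so $|G|\leq \rho$, and it can be stored using $O(\rho\log n)$ bits. The scalar $m$ is a single value of $f$ on a singleton, which by the paper's standing assumption occupies $O(1)$ space (and in any case $O(\log n)$ bits, which is already absorbed by the $\rho\log n$ term). The running bookkeeping needed to decide, upon the arrival of a new element, whether to extend $G$, update $m$, and refresh $I$ uses only $O(\log n)$ additional bits on top of what is already charged.

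The main contribution comes from the instances in $I$. At any moment the algorithm keeps exactly one instance of $A$ for each integer $i$ with
\[
m \;\leq\; 2^i \;\leq\; \frac{2m(k|G|)^2}{\eps}\,.
\]
The number of such integers is at most
\[
1 + \left\lfloor \log_2\!\frac{2(k|G|)^2}{\eps}\right\rfloor
\;=\; O\!\bigl(\log k + \log |G| + \log \eps^{-1}\bigr)
\;=\; O\!\bigl(\log k + \log \rho + \log \eps^{-1}\bigr),
\]
since $|G|\leq\rho$. By hypothesis each instance of $A$ uses at most $D$ space regardless of the value of $\tau$ it was initialized with (this is why the proposition explicitly requires the $D$-bound to hold even when $\tau$ is outside its ``good'' range). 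Hence the instances in $I$ collectively occupy $O\bigl(D(\log k+\log \rho+\log\eps^{-1})\bigr)$ space.

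The one thing that needs a brief comment is the update on \cref{line:I_update}: one must verify that the algorithm never has to resurrect an instance that was deleted earlier. This is immediate because $m$ and $|G|$ are monotonically non-decreasing throughout the stream, so the interval $[m,\,2m(k|G|)^2/\eps]$, once it ceases to contain some power of two $2^i$, never contains it again; thus each instance is created at most once, and the accounting above is valid as an in-memory bound at every moment. Summing the two contributions yields the claimed total of $O(\rho\log n + D(\log k + \log \rho + \log \eps^{-1}))$, completing the observation. I do not foresee any real obstacle beyond this monotonicity check, which makes the implicit claim that $I$'s update rule is well-defined.
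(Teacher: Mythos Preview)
Your proof is correct and follows essentially the same approach as the paper: bound the non-instance state by $O(\rho\log n)$, count the powers of two in $[m,\,2m(k|G|)^2/\eps]$ as $O(\log k+\log\rho+\log\eps^{-1})$, and multiply by $D$. The monotonicity check you add at the end is correct but not actually needed for the space bound (the bound on $|I|$ at any moment depends only on the current interval, regardless of whether instances were ever re-created); it is relevant for the algorithm's \emph{correctness}, which the paper treats separately.
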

\begin{proof}
The variables $G$, $u$ and $m$ require $O(\rho \log n)$ space together since $G$ is a common independent set, and therefore, contains at most $\rho$ elements. In addition to these variables, \cref{alg:no_tau_alg} has to store only the instances of $A$ that appear in the set $I$. Each one of these instances requires a space of $D$, and the number of such instances is either $0$ if $m = 0$ or is upper bounded by
\[
	1 + \log_2 \frac{2m(k|G|)^2/\eps}{m}
	=
	1 + \log_2 \frac{2k^2|G|^2}{\eps}
	=
	O(\log k + \log \rho + \log \eps^{-1})
	\enspace,
\]
where the last equality holds since $|G| \leq \rho$ as is argued above. Combining all the above observations, we get that the space complexity of \cref{alg:no_tau_alg} can be upper bounded by
\[
	O(\rho \log n) + D \cdot O(\log k + \log \rho + \log \eps^{-1})
	=
	O(\rho \log n + D(\log k + \log \rho + \log \eps^{-1}))
	\enspace.
	\qedhere
\]
\end{proof}

Let us now denote by $\bar{\tau}$ some power of $2$ within the range $[f(OPT), 2f(OPT)]$. Our plan is to consider the instance of $A$ that is associated with $\tau = \bar{\tau}$ and show two things about this instance. First (\cref{lem:instance_exists}), that this instance appears in $I$ when \cref{alg:no_tau_alg} terminates; and second (\cref{lem:overline_OPT_value}), that this instance receives a subset of $OPT$ that has almost all the value of $OPT$, and therefore, is guaranteed to output a good solution.
\begin{lemma} \label{lem:instance_exists}
An instance of $A$ with $\tau = \bar{\tau}$ exists when \cref{alg:no_tau_alg} terminates.
\end{lemma}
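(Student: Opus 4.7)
The plan is to show that at the termination of \cref{alg:no_tau_alg}, the value $\bar{\tau}$ lies in the eligibility range $[m,\, 2m(k|G|)^2/\eps]$ prescribed by \cref{line:I_update}. Since both endpoints of this range are non-decreasing during the stream (the quantities $m$ and $|G|$ never shrink), once $\bar{\tau}$ enters the range it remains inside it; hence it suffices to check that the range contains $\bar{\tau}$ at termination, at which point an instance with $\tau=\bar{\tau}$ has necessarily been added to $I$ (at the latest during the processing of the last stream element) and never evicted.

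First I would verify $m \leq \bar{\tau}$ at every stage. Because the matroids are assumed free of self-loops, every singleton $\{u\}$ is a common independent set, and by the optimality of $OPT$ we get $f(\{u\}) \leq f(OPT)$ for every stream element $u$. Taking the maximum over $u$ yields $m \leq f(OPT) \leq \bar{\tau}$ throughout, so in particular the lower bound of the eligibility range never exceeds $\bar{\tau}$, which in turn ensures that once $\bar{\tau}$ is added to $I$ it is never removed by the lower-bound side of the check.

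The main step is the upper bound $\bar{\tau} \leq 2m(k|G|)^2/\eps$ at termination. I would combine two ingredients: (i) subadditivity of a non-negative submodular function, which follows from $f(A\cup B) + f(A\cap B) \leq f(A) + f(B)$ together with $f(A\cap B) \geq 0$, and gives $f(OPT) \leq \sum_{o\in OPT} f(\{o\}) \leq |OPT|\cdot m$; and (ii) the classical result of~\cite{jenkyns1978efficacy,korte1978analysis} that the natural greedy algorithm is a $k$-approximation for the maximum common independent set in $k$ matroids. Since $G$ as maintained by \cref{alg:no_tau_alg} is precisely the output of this greedy algorithm on the stream, it is maximal and satisfies $|OPT| \leq \rho \leq k|G|$ at termination. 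Chaining these,
\[
\bar{\tau} \leq 2 f(OPT) \leq 2|OPT|\cdot m \leq 2 k|G|\cdot m \leq \frac{2m(k|G|)^2}{\eps},
\]
where the last inequality uses $\eps < 1 \leq k|G|$; the latter holds because $OPT\neq\varnothing$ forces at least one non-self-loop element to appear in the stream, and the first such element is added to $G$.

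The only non-trivial ingredient is the greedy $k$-approximation for $k$-matroid intersection, which I expect to be the main (but mild) obstacle since it is a standard black box. The remainder of the argument is monotonicity bookkeeping on the eligibility window of \cref{line:I_update} together with the elementary subadditivity bound for non-negative submodular functions.
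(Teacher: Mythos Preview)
Your proposal is correct and takes essentially the same approach as the paper: verify $m \leq f(OPT) \leq \bar{\tau}$ via feasibility of singletons, bound $f(OPT) \leq |OPT|\cdot m$ by subadditivity, and invoke the greedy $k$-approximation $|OPT| \leq k|G|$ to place $\bar{\tau}$ below the upper threshold $2m(k|G|)^2/\eps$ at termination. One minor wording slip worth noting: monotonicity of both endpoints alone does not imply that $\bar{\tau}$, once inside the window, stays inside (a rising lower endpoint could in principle evict it), but your second paragraph correctly closes this by showing $m \leq \bar{\tau}$ throughout, so the argument is sound.
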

\begin{proof}
The variables $G$ and $m$ determine the set of instances of $A$ that are stored in $I$. Therefore, we begin this proof by providing some bounds on the final values of these variables. One can observe that the set $G$ is constructed using the standard greedy algorithm, which is known to achieve $k$-approximation for the problem of finding a maximum size common independent set. Hence, $|G| \geq |OPT|/k$. Let us now consider the variable $m$, whose final value is $\max_{u \in \cN} f(\{u\})$. By the submodularity and non-negativity of $f$,
\begin{align*}
	m
	={} &
	\max_{u \in \cN} f(\{u\})
	\geq
	|OPT|^{-1} \cdot \sum_{u \in OPT} f(\{u\})
	=
	f(\varnothing) + |OPT|^{-1} \cdot \sum_{u \in OPT} f(u \mid \varnothing)\\
	\geq{} &
	f(\varnothing) + |OPT|^{-1} \cdot f(OPT \mid \varnothing)
	\geq
	|OPT|^{-1} \cdot f(OPT)
	\enspace.
\end{align*}
Additionally, since our assumption that no element is a self-loop implies that $\{u\}$ is a feasible solution for every element $u \in \cN$, we get $m = \max_{u \in \cN} f(\{u\}) \leq f(OPT)$.

Using the above bounds and our assumption that $OPT \neq \varnothing$, we can also get
\[
	\frac{2m(k|G|)^2}{\eps}
	\geq
	\frac{2 \cdot \frac{f(OPT)}{|OPT|} \cdot \left(k \cdot \frac{|OPT|}{k}\right)^2}{\eps}
	=
	\frac{2|OPT| \cdot f(OPT)}{\eps}
	\geq
	2 \cdot f(OPT)
	\enspace,
\]
which implies that \cref{line:I_update} adds an instance of $A$ with $\tau = \bar{\tau}$ at the last iteration of \cref{alg:no_tau_alg} if such an instance did not already appear in $I$ prior to this iteration.
\end{proof}

Let us denote by $A_{\bar{\tau}}$ the instance of $A$ that has $\tau = \bar{\tau}$ and belongs to $I$ according to \cref{lem:instance_exists} when \cref{alg:no_tau_alg} terminates. Additionally, let $\overline{OPT}$ be the optimal solution for the input that is fed to $A_{\bar{\tau}}$ by \cref{alg:no_tau_alg}.
\begin{lemma} \label{lem:overline_OPT_value}
The value $f(\overline{OPT})$ is within the range $[(1 - 2\eps) \cdot f(OPT), f(OPT)]$.
\end{lemma}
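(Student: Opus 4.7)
The upper bound $f(\overline{OPT}) \leq f(OPT)$ is immediate: the set of elements fed to $A_{\bar{\tau}}$ is a subset $\cN'$ of $\cN$, matroid independence is preserved under restriction of the ground set, so every feasible solution over $\cN'$ is also feasible over $\cN$, and in particular $\overline{OPT}$ is a common independent set (in the full instance) whose value cannot exceed $f(OPT)$. Thus the entire work goes into the lower bound.

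For the lower bound, my plan is to compare $f(\overline{OPT})$ with $f(OPT \cap \cN')$, which is a trivial lower bound on $f(\overline{OPT})$ since $OPT \cap \cN'$ is feasible and contained in $\cN'$. The standard submodularity telescoping (together with $f(\varnothing) \geq 0$) yields
\[
    f(OPT) \leq f(OPT \cap \cN') + \sum_{u \in OPT \setminus \cN'} f(\{u\}) \enspace,
\]
so it suffices to show $\sum_{u \in OPT \setminus \cN'} f(\{u\}) \leq \eps f(OPT)$, which would give $f(OPT \cap \cN') \geq (1-\eps) f(OPT) \geq (1-2\eps) f(OPT)$.

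The heart of the argument is to identify the moment $t^{\star}$ at which the instance $A_{\bar{\tau}}$ is created and to exploit the condition that failed to hold just before. Since $m$ is non-decreasing and at all times satisfies $m \leq f(OPT) \leq \bar{\tau}$ (by the same single-element argument used in \cref{lem:instance_exists}, together with $\bar{\tau} \in [f(OPT), 2f(OPT)]$), the lower inequality $m \leq \bar{\tau}$ in the membership condition for $I$ holds throughout the execution. Hence $A_{\bar{\tau}}$ enters $I$ the first time the upper inequality $\bar{\tau} \leq 2m(k|G|)^2/\eps$ holds; once it does, the monotonicity of $m$ and $|G|$ ensures that $A_{\bar{\tau}}$ is never deleted. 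Let $t^{\star}$ be this iteration: then $\cN'$ consists exactly of the elements arriving from iteration $t^{\star}$ onward, so $OPT \setminus \cN'$ is a subset of the elements arriving strictly before $t^{\star}$. Just before step~3 of iteration $t^{\star}$ (equivalently, at the end of iteration $t^{\star}-1$) we therefore have $\bar{\tau} > 2 m_{t^{\star}-1}(k|G_{t^{\star}-1}|)^2/\eps$, i.e.\ $m_{t^{\star}-1} < \eps \bar{\tau}/(2(k|G_{t^{\star}-1}|)^2)$.

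The last step is to pay for $\sum_{u \in OPT \setminus \cN'} f(\{u\})$ using this inequality. For every $u \in OPT \setminus \cN'$ we have $f(\{u\}) \leq m_{t^{\star}-1}$ because $m$ is updated to include $f(\{u\})$ at the iteration $u$ arrives, which is before $t^{\star}$. To bound the cardinality of $OPT \setminus \cN'$, I would use that $OPT \setminus \cN'$ is itself a common independent set contained in the elements seen up to iteration $t^{\star}-1$, and that the greedy maximal common independent set $G_{t^{\star}-1}$ is a $k$-approximation for the maximum cardinality common independent set over those elements, giving $|OPT \setminus \cN'| \leq k \, |G_{t^{\star}-1}|$. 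Combining,
\[
    \sum_{u \in OPT \setminus \cN'} \mspace{-6mu} f(\{u\}) \leq |OPT \setminus \cN'| \cdot m_{t^{\star}-1} \leq k\,|G_{t^{\star}-1}| \cdot \frac{\eps \bar{\tau}}{2(k|G_{t^{\star}-1}|)^2} = \frac{\eps \bar{\tau}}{2k\,|G_{t^{\star}-1}|} \leq \frac{\eps \bar{\tau}}{2} \leq \eps f(OPT)\enspace,
\]
where in the penultimate inequality I separately handle the (trivial) case $|G_{t^{\star}-1}|=0$, noting that no non-self-loop can fail to be added to $G_0 = \varnothing$, so this case forces $t^{\star}=1$ and $OPT \setminus \cN' = \varnothing$. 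The main subtlety here is simply bookkeeping the order of the updates (in particular that $m$ is updated before $I$), which cleanly pins down the threshold inequality one can exploit; everything else then assembles into the stated bound.
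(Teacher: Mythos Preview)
Your proof is correct and in fact slightly cleaner than the paper's. The paper argues element-by-element: it orders the lost elements $o_1,\dotsc,o_h$ of $OPT$ by arrival time, and for each $o_i$ uses that at the moment $o_i$ arrives the greedy set $G$ already has size at least $i/k$, so the failed threshold inequality yields $f(\{o_i\}) < \eps \bar\tau/(2i^2) \leq \eps f(OPT)/i^2$. Summing these per-element bounds and using $\sum_{i\geq 1} 1/i^2 \leq 2$ gives the loss of $2\eps\cdot f(OPT)$.

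You instead work at a single time point $t^\star-1$, bound all singleton values uniformly by $m_{t^\star-1}$, and control the \emph{number} of lost elements via the greedy $k$-approximation $|OPT\setminus\cN'|\leq k|G_{t^\star-1}|$. This avoids the $\sum 1/i^2$ entirely and actually yields the sharper loss $\eps\cdot f(OPT)$; you then only use the weaker $(1-2\eps)$ because that is what the lemma states. Your observation that $A_{\bar\tau}$, once created, is never deleted (because $m\leq f(OPT)\leq\bar\tau$ throughout and both $m$ and $|G|$ are monotone) is what makes the single-time-point argument go through; the paper's proof is phrased so as not to rely on this, at the cost of the per-element analysis. Both approaches are valid, and yours is the more direct of the two.
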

\begin{proof}
We note that $A_{\bar{\tau}}$ gets as its input a subset of the full input. Therefore, any feasible solution within the input of $A_{\bar{\tau}}$ is a feasible solution also with respect to the full input, and thus, has a value of at most $f(OPT)$. Accordingly, we concentrate in the rest of this proof on showing that there is a feasible solution for the input instance received by $A_{\bar{\tau}}$ of value at least $(1 - 2\eps) \cdot f(OPT)$.

Let us denote by $o_1, o_2, \dotsc, o_h$ the elements of $OPT$ that arrived before the instance $A_{\bar{\tau}}$ was created. For every integer $1 \leq i \leq h$, we must have that either an instance of $A$ associated with $\tau = \bar{\tau}$ did not exist at the time in which $o_i$ arrived, or that the instance of this kind that existed when $o_i$ arrived was removed at a later point (because $A_{\bar{\tau}}$ is an instance associated with this value of $\tau$ that was created later). This means that there was some time $t$ in the execution of \cref{alg:no_tau_alg}, which was either immediately following the arrival of $o_i$ or a later time point, in which no instance of $A$ associated with $\tau = \bar{\tau}$ existed. If we denote by $G_t$ and $m_t$ the values of $G$ and $m$ at the time $t$, then the non-existence of an instance of $A$ associated with $\tau = \bar{\tau}$ at this time implies that at least one of the following inequalities is violated.
\[
	m_t \leq \bar{\tau} \leq \frac{2m_t(k|G_t|)^2}{\eps}
	\enspace.
\]
However, the left inequality is guaranteed to hold because $m_t \leq \max_{u \in \cN} f(\{u\}) \leq f(OPT) \leq \bar{\tau}$. Hence, it must be the case that the right inequality is violated. In other words, we have
\begin{equation} \label{eq:discarded_element_value}
	\bar{\tau}
	>
	\frac{2m_t(k|G_t|)^2}{\eps}
	\geq
	\frac{2m_t i^2}{\eps}
	\enspace,
\end{equation}
where the second inequality holds because $|G|$ is a $k$-approximation for the maximum common independent set size, and following the arrival of $o_i$ we are guaranteed that the common independent set $\{o_1, o_2, \dotsc, o_i\}$ of size $i$ already arrived. 

Since $m_t$ is the maximum value of any singleton set consisting of an element that arrived before time $t$, it upper bounds $f(\{o_i\})$. Therefore, Inequality~\eqref{eq:discarded_element_value} implies
\[
	f(\{o_i\})
	\leq
	m_t
	<
	\frac{\eps \bar{\tau}}{2 i^2}
	\leq
	\frac{\eps}{i^2} \cdot f(OPT)
	\enspace.
\]
Consider now the set $OPT' = OPT \setminus \{o_1, o_2, \dotsc, o_h\}$. By the definition of the elements $o_1, o_2, \dotsc, o_h$, all the elements of $OPT'$ are fed into the instance $A_{\bar{\tau}}$ of $A$. Furthermore, $OPT'$ is a common independent set because it is a subset of the common independent set $OPT$, and its value is
\begin{align*}
	f(OPT')
	\geq{} &
	f(OPT) - \sum_{i = 1}^h f(o_i \mid \varnothing)
	\geq
	f(OPT) - \sum_{i = 1}^h f(o_i)\\
	\geq{} &
	f(OPT) - \sum_{i = 1}^h \frac{\eps}{i^2} \cdot f(OPT)
	\geq
	\left[1 - \eps - \int_1^\infty \frac{\eps dx}{x^2}\right] \cdot f(OPT)
	=
	(1 - 2\eps) \cdot f(OPT)
	\enspace,
\end{align*}
where the first inequality follows from the submodularity of $f$ and the second inequality follows from its non-negativity.
\end{proof}

We are now ready to prove \cref{prop:no_tau_reduction}.
\begin{proof}[Proof of \cref{prop:no_tau_reduction}]
Consider the instance $A_{\bar{\tau}}$ of $A$ defined above. According to the definition of $\bar{\tau}$,
\[
	\bar{\tau}
	\in
	[f(OPT), 2f(OPT)]
	\subseteq
	\left[f(\overline{OPT}), \frac{2f(\overline{OPT})}{1 - 2\eps}\right]
	\subseteq
	[f(\overline{OPT}), 3f(\overline{OPT})]
	\enspace,
\]
where the first inclusion holds by \cref{lem:overline_OPT_value}, and the second inclusion holds since $\eps \leq 1/6$. We now observe that the membership of $\bar{\tau}$ within this range implies that, by the approximation guarantee of $A$, the instance $A_{\bar{\tau}}$ is guaranteed to output a solution of value at least
\[
	\frac{1}{r}\cdot f(\overline{OPT})
	\geq
	\frac{1 - 2\eps}{r} \cdot f(OPT)
	\geq
	\frac{1}{(1 + 3\eps)r} \cdot f(OPT)
	\enspace,
\]
where the last inequality holds again for $\eps \in (0, 1/6)$.
\end{proof}

\end{document}